\documentclass{article}

\usepackage[final]{neurips_2025}

\usepackage[utf8]{inputenc} \usepackage[T1]{fontenc}    \usepackage{hyperref}       \usepackage{url}            \usepackage{booktabs}       \usepackage{amsfonts}       \usepackage{nicefrac}       \usepackage{microtype}      \usepackage{xcolor}         

\usepackage{amsmath}
\usepackage{relsize}
\usepackage{amssymb}
\usepackage{enumitem}
\usepackage{mathtools}
\usepackage{amsthm}
\usepackage{bbm}
\usepackage[capitalize,noabbrev]{cleveref}
\usepackage{wrapfig}
\usepackage{xfrac}
\usepackage{tikz}
\usetikzlibrary{calc}

\usepackage[baseline]{euflag}

\usepackage{algorithm}
\usepackage[noend]{algpseudocode}

\theoremstyle{plain}
\newtheorem{theorem}{Theorem}[section]

\newcommand{\innerthmname}{}
 
\newtheorem{proposition}[theorem]{Proposition}
\newtheorem{lemma}[theorem]{Lemma}
\newtheorem{corollary}[theorem]{Corollary}
\theoremstyle{definition}
\newtheorem{definition}[theorem]{Definition}

\newtheorem{example}[theorem]{Example}
\theoremstyle{remark}
\newtheorem{remark}[theorem]{Remark}

\definecolor{color1}{HTML}{D81B60}
\definecolor{color2}{HTML}{1E88E5}
\definecolor{color4}{HTML}{FFC107}\definecolor{color3}{HTML}{004D40}

\newcommand{\CR}[1]{\mathsf{cr}_{#1}}

\newcommand{\MP}{\mathcal{C}}
\newcommand{\BS}{\mathcal{B}}
\renewcommand{\T}{\mathcal{T}}
\newcommand{\MPT}[1]{\mathcal{C}_{#1}}

\newcommand{\RCM}{\mathsf{RCM}}

\title{On Local Limits of Sparse Random Graphs: Color Convergence and the Refined Configuration Model}

\author{Alexander Pluska\quad Sagar Malhotra\\
  Department of Computer Science\\
  TU Wien\\
  Vienna, Austria\\
}

\begin{document}

\maketitle

\begin{abstract}

Local convergence has emerged as a fundamental tool for analyzing sparse random graph models. We introduce a new notion of local convergence, \emph{color convergence}, based on the Weisfeiler–Leman algorithm. Color convergence fully characterizes the class of random graphs that are well-behaved in the limit for message-passing graph neural networks. Building on this, we propose the \emph{Refined Configuration Model} (RCM), a random graph model that generalizes the configuration model. The RCM is universal with respect to local convergence among locally tree-like random graph models, including Erdős–Rényi, stochastic block and configuration models. Finally, this framework enables a complete characterization of the random trees that arise as local limits of such graphs.

\end{abstract} \section{Introduction}
Understanding the local structure of random graphs is important for analyzing algorithms in network science and machine learning \citep{van2024random, lovasx_book, borgs2019limitssparseconfigurationmodels, borgs2021limits, algo_sparse_gamarnik2013limitslocalalgorithmssparse}.
\emph{Local convergence} characterizes the structure around a randomly chosen node in the large graph limit \citep{benjamini2011recurrence}.
However, this level of detail exceeds what can be exploited by color refinement \citep{weisfeiler1968reduction, babai1980random, WL_Tutorial} --- a prominent algorithm that bounds the expressivity of Message Passing Graph Neural Networks (MPNNs) \citep{1555942, 4700287, xu19howpowerful, morris2019weisfeiler}. 
In this paper we introduce a novel notion of local convergence, namely \emph{color convergence}, that is closely aligned with color refinement, and facilitates the analysis of MPNNs through the framework of local convergence.
We show that color convergence completely characterizes a general notion of learnability for MPNNs. 
We then show that the class of color convergent random graphs subsumes widely investigated locally tree-like models \citep{van2014random, van2024random} including sparse variants of Erdős–Rényi, stochastic block \citep{SBM_abbe2014exactrecoverystochasticblock}, and  configuration models \citep{BOLLOBAS1980311_configuration_models}. Finally, we devise a random graph model, called the \emph{refined configuration model} (RCM), that is universal with respect to color convergence.

\subsection*{Related Work} 
Our work makes key contributions to the areas of graph limits, generalization analysis of MPNNs, and random graph models. In the following, we briefly discuss the related work in these areas.

\textbf{Graph limits.} Our work expands the general investigation of random graph limits \citep{lovasx_book}. 
Limits for dense random graphs are charecterized by \emph{graphons} \citep{lovasz2004limitsdensegraphsequences}.
In the sparse setting, local convergance has emerged as an important tool for analyzing the limit behavior of random graphs  ~\citep{benjamini2011recurrence, aldous2004objective, aldous2007processes, hatami2014limits, milewska2025sirlocallyconvergingdynamic, dort2024localweaklimitdynamical}. 
In this paper, we introduce a novel relaxation of local convergence, that 
characterizes the structure around a randomly chosen node in terms of the color refinement algorithm \citep{weisfeiler1968reduction, WL_Tutorial}.

\textbf{MPNN Generalization.} 
Our work advances the analysis of generalization behavior of MPNNs.
Previous works have investigated generalization behavior of MPNNs, e.g., via VC dimension, Rademacher complexity or PAC-Bayesian analysis~\citep{liao2020pacbayesianapproachgeneralizationbounds, oono2021optimizationgeneralizationanalysistransduction, maskey2022generalizationanalysismessagepassing, li2023generalizationguaranteetraininggraph, tang2023understandinggeneralizationgraphneural, rauchwerger2025generalization, morris2023wlmeetvc}.
Most of these works fix a class of random graphs as data distribution and are largely restricted to the dense graph setting.
We especially focus on generalization behavior of MPNNs for the node classification task on sparse random graphs \citep{baranwal2025optimalitymessagepassingarchitecturessparse}. 
However, instead of making assumptions on the data distributions, we give a complete characterization of random graphs on which MPNNs can be learned. We use a general notion of learnability, which is closely related to the notion of \emph{uniform convergence} in learning theory \citep{Uniform_convergence_JMLR:v11:shalev-shwartz10a}. 
In particular, we show that MPNNs fail to learn on dense random graphs, extending asymptotic analysis of MPNNs in this setting~\citep{adam2024almost, yehudai2021local}.

\textbf{Random Graph Models.} All dense random graph models  admit a universal limit object \citep{ALDOUS1981581_exchangeable, Kallenberg, lovasz2004limitsdensegraphsequences, Orbanz_NIPS2012_df6c9756}. 
However, no such universal limit is known in the sparse setting \citep{lovasx_book}. 
This has led to a fragmented landscape of sparse random graph models such as the sparse Erdős–Rényi, configuration and  preferential attachment model. 
Despite their differences, a unifying property of many models is their local converge to Galton-Watson trees (GWT) \citep{galton_watson}, a family of random trees arising from branching processes \citep{GWT_Proess}. These trees capture the typical local structure around a uniformly chosen node and serve as a canonical, though not universal, object in the study of local limits.

\subsection*{Contributions}
We introduce a generalization of local convergence, color convergence, based on the color refinement algorithm. Color convergence defines sparse random graph limits based on the  \emph{color} of a random node after finitely many iterations of the color refinement algorithm. We present a systematic investigation of color convergence. Specifically, 
\begin{itemize}
    \item We show that color convergence completely characterizes the class of random graphs on which MPNNs achieve \emph{probabilistic consistency of empirical risk} for node classification tasks. That is, color convergent random graphs are exactly the random graphs for which the empirical risk converges to the true risk with high probability for all possible MPNNs.
    \item We introduce the refined configuration model, a generalization of the configuration model. Leveraging color convergence, we show that it captures the limit behavior of MPNNs on random graphs and is universal with respect to local convergence to Galton–Watson trees. In the limit, it subsumes many widely-investigated sparse random graph models, including Erdős–Rényi, stochastic block and configuration models.
\end{itemize}

\section{Background}\label{Sec:Background}

We use $[n]$ to denote the set $\{0, \dots, n-1\}$ and $\mathbbm{1}_A$ to denote the indicator function of a set $A$.

An \emph{undirected multigraph} $G$ consists of a set of nodes $V$ and a set of edges $E$, possibly containing loops and multi-edges. We use $N(v)$ to denote the multiset of neighbors of a node $v$ and $d_v$ to denote its degree. In this paper, graphs are always finite, undirected multigraphs. Sometimes the nodes $v$ are attributed with a feature $x_v$. We always assume that the space of features is countable and discrete.

A \emph{rooted graph} $B$ is a graph with a designated root node. The \emph{depth} of a node $v$ is its distance to the root. The depth of a rooted graph $B$ is defined as the maximum depth of its nodes. We denote by $\BS_k$ the set of isomorphism classes of rooted graphs of depth at most $k$. Given a graph $G$ and node $v$, the \emph{ball} $B_k(v)$ is the graph rooted at $v$, spanned by vertices $w$ with $d(v, w)\leq k$. If $f$ is a function defined on nodes and $B$ is a rooted graph with root $r$, then we may write $f(B)$ instead of $f(r)$, e.g., we may write $N(B)$ for the neighbors and $d_B$ for the degree of the root. An isomorphism between rooted graphs is a graph isomorphism that also maps the root of one graph to the root of the other. 

A \emph{tree} is a connected acyclic graph. A \emph{rooted tree} is a rooted graph whose underlying graph is a tree. We denote by $\T_k$ the set of isomorphism classes of rooted graphs of depth $k$. We write $T|_k$ to denote the rooted subtree of $T$ containing only nodes of depth at most $k$. Each node in a rooted tree, except the root node, has a unique \emph{parent node}, which is the only adjacent node that has a smaller depth. The \emph{children} of a node are the nodes adjacent to it with a larger depth. A \emph{leaf node} is a node without children. The sub-tree $T(v)$ \emph{induced} by a node $v$ is the maximal sub-tree of $T$ containing the node $v$ but not its parent.

All probability spaces considered in this paper are countable and discrete. A \emph{probability mass function} (PMF) $\mu$ is a function that assigns a probability to each outcome. A \emph{random element} is a function whose domain is a probability space. A \emph{stochastic process} is a sequence of random elements. A \emph{random PMF} is a random element whose codomain is the set of PMFs over a fixed space. We say that a sequence of random PMFs $\mu_t$ \emph{converges in probability} to a PMF $\mu$ if, for all outcomes $x$ and $\varepsilon > 0$,
\begin{equation}
    P(|\mu_t(x) - \mu(x)| \geq \varepsilon)\to 0
\end{equation}

as $t\to\infty$. In the context of random graphs the notion of local convergence in probability is more powerful and useful than, for instance, local weak convergence~\cite[p.58]{van2024random}.

\subsection{Random Graphs}\label{subsec:random-graphs}

A \emph{random graph} $G_t$ is a stochastic process $\{G_t\}_{t\in\mathbb N}$ such that $G_t$ is distributed on graphs on $t$ vertices. This formulation of a random graph as a stochastic process indexed by graph size naturally aligns with the framework of graph limits.

Of special interest to us are the configuration model and the bipartite configuration model. Since we are only interested in 
 the limit, we consider a version with i.i.d. degrees.~\citep[7.6]{van2014random}

\begin{definition}
Let $\mu$ be a PMF over $\mathbb{N}$ with finite mean. The \emph{configuration model} $\mathsf{CM}_t(\mu)$ is defined on the vertex set $\{v_i\}_{i\in[t]}$ as follows:
\begin{itemize}
    \item Each vertex $v_i$ is independently assigned a degree $d_i \sim \mu$.
    \item Each vertex is given $d_i$ stubs. The stubs are 
 paired uniformly at random to form edges, allowing loops and multi-edges, until there are $0$ or $1$ stub(s) left.
\end{itemize}
\end{definition}

\begin{definition}
Let $\mu_L, \mu_R$ be PMFs over $\mathbb{N}$ with finite means. The \emph{bipartite configuration model} $\mathsf{BCM}_t(\mu_L, \mu_R)$ is defined on the vertex set $\{v_i\}_{i\in[t]}$ as follows:
\begin{itemize}
    \item Partition the nodes into \emph{left nodes} $L$ and \emph{right nodes} $R$ by assigning each node independently with probability $\frac{\mathbb E[\mu_R]}{\mathbb E[\mu_L] + \mathbb E[\mu_R]}$ to $L$ and otherwise to $R$.
    \item Each vertex $v_i\in U$ is independently assigned a degree $d_i\sim \mu_U$ for $U\in\{L, R\}$. 
    \item Each vertex is given $d_i$ stubs. The stubs in $L$ are matched uniformly at random with stubs in $R$ to form edges, allowing multi-edges, until there are no more stubs left in $L$ or $R$.
\end{itemize}
\end{definition}

\subsection{Galton-Watson Trees}

Another important graph-valued random process, with applications in population genetics, computer science, and beyond, is the Galton–Watson tree. See for instance~\cite[3.4]{van2024random}.

\begin{definition}
    A \emph{multi-type Galton–Watson tree} (GWT) $W_t$ is a stochastic process $\{W_t\}_{t \in \mathbb{N}}$, where $W_t$ is a random rooted tree of depth $t$. It is parameterized by:
    \begin{itemize}
        \item a finite or countable set of types $S$, with a type-to-feature mapping $s \mapsto x_s$,
        \item an initial PMF $\mu_0$ over $S$,
        \item for each $s \in S$, a PMF $\mu_s$ over $\text{MultiSet}(S)$, the set of finite multisets of types.
    \end{itemize}
    The process is defined inductively:
    \begin{itemize}
        \item $W_0$ consists of a root node $r$ with type $s_r \sim \mu_0$ and feature $x_r = x_{s_r}$.
        \item Given $W_t$, generate $W_{t+1}$ by extending each leaf node $v$ at depth $t$: 
        \begin{itemize}
            \item For each node $v$, sample a multiset $\{s_1, \dots, s_n\} \sim \mu_{s_v}$.
            \item For each type $s_i$ in the multiset, attach a child $w$ to $v$ with type $s_i$ and feature $x_{s_i}$.
        \end{itemize}
    \end{itemize}
\end{definition}

\begin{example}
    Suppose $S = \{ \tikz[baseline]{\node[circle, draw, fill=color1] at (0, 0.1) {};},  \tikz[baseline]{\node[circle, draw, fill=color2] at (0.5, 0.1) {}; }\}$ and $x_s = s$. Let $W_t$ be parametrized as follows: 
   \[\mu_{0}(s) = \begin{cases}
       \frac{3}{5}&s = \tikz[baseline]{\node[circle, draw, fill=color1] at (0, 0.1) {};}\\
       \frac{2}{5}&s = \tikz[baseline]{\node[circle, draw, fill=color2] at (0, 0.1) {};}
   \end{cases}
   \hspace{.3cm}
   \mu_{\scalebox{0.5}{\tikz[baseline]{\node[circle, draw, fill=color1] at (0, 0) {};}}}(A) = \begin{cases}
       \frac{2}{3}&A = \{\{\tikz[baseline]{\node[circle, draw, fill=color1] at (0, 0.1) {};}, \tikz[baseline]{\node[circle, draw, fill=color2] at (0, 0.1) {};}\}\}\\
       \frac{1}{3}&A = \{\{\tikz[baseline]{\node[circle, draw, fill=color1] at (0, 0.1) {};}, \tikz[baseline]{\node[circle, draw, fill=color1] at (0, 0.1) {};}\}\}\\
       0&\text{otherwise}
   \end{cases}
   \hspace{.4cm}
   \mu_{\scalebox{0.5}{\tikz[baseline]{\node[circle, draw, fill=color2] at (0, 0) {};}}}(A) = \begin{cases}
       \frac{1}{2}&A = \{\{\tikz[baseline]{\node[circle, draw, fill=color1] at (0, 0.1) {};}, \tikz[baseline]{\node[circle, draw, fill=color1] at (0, 0.1) {};}, \tikz[baseline]{\node[circle, draw, fill=color2] at (0, 0.1) {};}\}\}\\
       \frac{1}{2}&A = \{\{\tikz[baseline]{\node[circle, draw, fill=color2] at (0, 0.1) {};}\}\}\\
       0&\text{otherwise}
   \end{cases}
   \]
   Then $W_1$ has the support and probability distribution depicted in figure~\ref{fig:support}.
    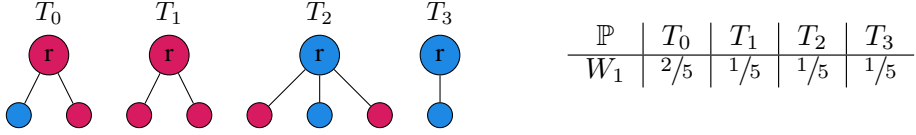
\begin{figure}[h]
        \centering
        \newsavebox{\distribution}
        \sbox{\distribution}{
        \scalebox{1.1}{\begin{tabular}{c|c|c|c|c}
             $\mathbb{P}$& $T_{0}$&$T_{1}$&$T_{2}$&$T_{3}$  \\
             \hline
             $W_1$&$\sfrac{2}{5}$&$\sfrac{1}{5}$&$\sfrac{1}{5}$&$\sfrac{1}{5}$\\
        \end{tabular}}
        }
        \begin{tikzpicture}[x=0.8cm,y=0.8cm]
            \begin{scope}
            \node (g) at (0, 0.7) {$T_{0}$};
            \node[circle, draw, fill=color1] (A) at (0, 0) {r};
            \node[circle, draw, fill=color2] (B) at (-.5, -1) {};
            \node[circle, draw, fill=color1] (C) at (.5, -1) {};
            \draw (A) -- (B);
            \draw (A) -- (C);
        \end{scope}
        \begin{scope}[shift={(2, 0)}]
        \node (g) at (0, 0.7) {$T_{1}$};
            \node[circle, draw, fill=color1] (A) at (0, 0) {r};
            \node[circle, draw, fill=color1] (B) at (-.5, -1) {};
            \node[circle, draw, fill=color1] (C) at (.5, -1) {};
            \draw (A) -- (B);
            \draw (A) -- (C);
        \end{scope}
        \begin{scope}[shift={(4.5, 0)}]
            \node (g) at (0, 0.7) {$T_{2}$};
            \node[circle, draw, fill=color2] (A) at (0, 0) {r};
            \node[circle, draw, fill=color1] (B) at (-1, -1) {};
            \node[circle, draw, fill=color2] (C) at (0, -1) {};
            \node[circle, draw, fill=color1] (D) at (1, -1) {};
            \draw (A) -- (B);
            \draw (A) -- (C);
            \draw (A) -- (D);
        \end{scope}
        \begin{scope}[shift={(6.5, 0)}]
            \node (g) at (0, 0.7) {$T_{3}$};
            \node[circle, draw, fill=color2] (A) at (0, 0) {r};
            \node[circle, draw, fill=color2] (B) at (0, -1) {};
            \draw (A) -- (B);
        \end{scope}
            \node (t) at (11.5, 0) {\usebox{\distribution}};
        \end{tikzpicture}
        \caption{Support and distribution of $W_1$.}
        \label{fig:support}
    \end{figure}
\end{example}

\subsection{Local Convergence}

We focus on local convergence in probability, which is most conveniently defined as the convergence of a sequence of random PMFs~\citep[Remark 2.13]{van2024random}.

\begin{definition}
    For a random graph $G_t$ and $t\in\mathbb N$ we define the random PMF $b_{k, t}$ on $\BS_{k}$ by 
    \[b_{k, t}(B) = t^{-1}\cdot \left\vert\{v\in V(G_t): B_k(v) \simeq B\}\right\vert.\]
    If $b_{k, t}$ converges in probability as $t\to\infty$, we denote its limit with $b_{k, \infty}$.
    If $b_{k, \infty}$ is defined, we call $G_t$ \emph{$\BS_{k}$-convergent}. If $G_t$ is \emph{$\BS_{k}$-convergent} for all $k\in\mathbb{N}$ we call $G_t$ \emph{locally convergent}. Furthermore, we call $G_t$ locally tree-like if, for all $\varepsilon > 0$, as $t\to\infty$,
    \[\mathbb{P}(t^{-1} \cdot \left\vert\left\{v\in V(G_t): B_k(v)\text{ contains a cycle}\right\}\right\vert \geq \varepsilon)\to 0.\]
\end{definition}

\begin{example}\label{ex:local-conv}
    Let $G_t$ be a random graph that, with probability $1/2$, is either a set of $t$ isolated vertices or a cycle on $t$ vertices. Then $G_t$ does not converge locally:
    
    For every $k \in \mathbb{N}$ and $t\geq 2k+2$, the random PMF $b_{k, t}$ on $\BS_k$ is, with probability $1/2$, either $\mathbbm{1}_{K_1}$ or $\mathbbm{1}_{P_{2k+1}}$, where $K_1$ denotes the singleton graph and $P_{2k+1}$ denotes the path graph of length $2k+1$. That is, for any PMF $\mu$ on $\BS_k$, we have
    \[P(|b_{k, t}(K_1) - \mu(K_1)| \geq 1/2) \geq 1/2\quad\text{ or }\quad P(|b_{k, t}(P_{2k+1}) - \mu(P_{2k+1})| \geq 1/2) \geq 1/2.\]
\end{example}

\begin{theorem}[\citet{van2024random}]\label{thm:local-GW}
    Let $G_t = \mathsf{CM}_t(\mu)$. Then $G_t$ converges locally to a GWT. That is, there exists a GWT $W_t$ such that for all $k\in\mathbb N$ and $B\in\BS_k$
    \(b_{k, \infty}(B) = \mathbb{P}(W_k \simeq B).\)
\end{theorem}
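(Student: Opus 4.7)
The plan is to identify the limiting Galton–Watson tree explicitly and then establish convergence in probability of $b_{k,t}$ via a first and second moment argument.

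First, I would define the candidate limit as the multi-type GWT $W_t$ whose types are indexed by positive integers encoding node degree: the root is of type $d$ with probability $\mu(d)$ and has exactly $d$ children, while every non-root node reached by an edge is of type $d$ with probability $d\mu(d)/\E[\mu]$ (the size-biased law) and spawns $d-1$ further children, each with an independent type drawn from the same size-biased distribution. This tree is forced by the exploration of $\mathsf{CM}_t(\mu)$: when a stub is paired, it lands on a vertex of degree $d$ with asymptotic probability $d\mu(d)/\E[\mu]$, leaving $d-1$ fresh stubs to continue the exploration.

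Next, I would show that $\E[b_{k,t}(B)] \to \mathbb{P}(W_k \simeq B)$ by coupling a depth-$k$ breadth-first exploration of $\mathsf{CM}_t(\mu)$ from a uniformly chosen root with the exploration of $W_k$. Since the degrees $d_1,\dots,d_t$ are i.i.d.\ from $\mu$, the empirical degree distribution concentrates and the total stub count is $t\,\E[\mu](1+o(1))$, so at every BFS step the probability that a given stub is paired with a stub of a fresh vertex of degree $d$ tends to $d\mu(d)/\E[\mu]$. The discrepancies with the pure GWT exploration come from (i) cycle-creating collisions, where two explored stubs are paired together or land on an already-explored vertex, and (ii) depletion of the stub pool. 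A union bound over the finitely many BFS steps needed to reach depth $k$ shows that both contribute $O(1/t)$, which simultaneously yields the locally tree-like property.

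For convergence in probability, I would then bound $\mathrm{Var}(b_{k,t}(B))$ by analysing a pair $(u,v)$ of uniformly chosen roots and showing that their depth-$k$ neighborhoods are asymptotically independent: the joint exploration coincides with two independent copies of the GWT exploration except on the event that the two explored balls meet, which again has probability $O(1/t)$. Hence $\E[b_{k,t}(B)^2] \to \mathbb{P}(W_k \simeq B)^2$ and Chebyshev's inequality delivers the desired convergence in probability.

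The main obstacle is making the exploration coupling rigorous, since the stub matching is not independent across BFS steps: one must expose the matching only on-the-fly and control the conditional distribution of the remaining stubs at each step, which requires quantitative concentration of the empirical degree distribution and careful accounting of the small probability of collisions. Once the single-vertex exploration is handled, the two-vertex version needed for the variance bound follows by the same template applied to two disjoint simultaneous explorations.
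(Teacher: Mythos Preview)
The paper does not give its own proof of this theorem: it is stated with a citation to \citet{van2024random} and used as a black box. So there is nothing to compare against in the paper itself.

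That said, your sketch is the standard argument from that reference: identify the limit as the unimodular GWT with root offspring law $\mu$ and non-root offspring law the size-biased $\mu$ shifted by one, couple a depth-$k$ BFS exploration of $\mathsf{CM}_t(\mu)$ to the GWT exploration via sequential stub revelation, and upgrade convergence of expectations to convergence in probability by a two-vertex second moment. The one point where your outline is slightly loose is the quantitative $O(1/t)$ claim for collisions: this holds only once you have first conditioned on a typical realization of the i.i.d.\ degree sequence (bounded maximum degree contribution, empirical degree distribution close to $\mu$), so in practice one restricts to a high-probability ``good'' event for the degrees and carries out the exploration coupling conditionally. With that caveat, the plan is correct and matches the cited proof.
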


\begin{remark}
    Many important families of random graphs converge locally to Galton-Watson trees such as inhomogeneous random graphs, including sparse version of Erdős–Rényi and stochastic block models.~\citep[3.14]{van2024random}
\end{remark}

A central property of any locally convergent random graph is unimodularity~\citep{aldous2007processes}, or, equivalently \emph{involution invariance}~\citep{hatami2014limits}. Intuitively, in captures the idea that, statistically, the local limit looks the same from the root as from anywhere else. 
Whether unimodularity of a limit object is sufficient for the existence of a random graph with said limit is open in general~\citep[10.1]{aldous2007processes}. For certain classes of limit objects, including versions of Galton-Watson trees, this implication is known~\citep{aldous2007processes, benjamini2015unimodular}.

\subsection{Message Passing Graph Neural Networks}
\label{subsec:gnn}

\emph{Message passing graph neural networks} (MPNNs) are a class of deep learning models that operate on graphs. MPNNs iteratively combine the feature vector of every node with the multiset of feature vectors of its neighbors. Formally, let $\mathsf{agg}_k$ and $\mathsf{comb}_k$ for $k \in [l]$ be \emph{aggregation} and \emph{combination} functions. We assume that each node has an associated initial feature vector $x_v = x_v^{(0)}$. An MPNN $f$ computes a vector $x_{v}^{(k)}$ for every node $v$ via the following recursive formula
\begin{equation} \label{eq: MPNN_Recursion}
x_v^{(k+1)} = \mathsf{comb}_{k+1}(x_v^{(k)}, \mathsf{agg}_{k+1}(\{\!\!\{x_w^{(k)}: w\in N(v)\}\!\!\})),
\end{equation}
where $k \in [l]$. We call $f(v) := x^{(l)}_v$ the \emph{output} of the MPNN. Note that, in our setting, MPNNs do not include a global readout mechanism.
\Cref{eq: MPNN_Recursion} subsumes classical MPNN architectures such as GraphSAGE~\citep{hamilton2017inductive}, GCN~\citep{kipf2016semi}, and GIN~\citep{xu19howpowerful}, as well as more abstract valuation mechanisms.

\subsection{Color Refinement}

The color refinement algorithm~\citep{weisfeiler1968reduction, huang2021short} is an important component of modern graph isomorphism test procedures. Starting with an initial coloring, the algorithm repeatedly updates the color of each vertex by aggregating the colors of its neighbors.

\begin{definition}
\label{def: color_trees}
    Given a graph $G$ we inductively define a sequence of rooted-tree-valued vertex colorings $\CR{k}$:
    \begin{itemize}
        \item $\CR{0}(v)$ simply consists of a single root node with feature vector $x_v$.
        \item $\CR{k+1}(v)$ comprises a root node $r$ with feature vector $x_r = x_v$ and, for each neighbor $w$ of $v$, the tree $\CR{k}(w)$ connected to $r$ via its root.
    \end{itemize}
\end{definition}
For a given graph $G$, color refinement eventually stabilizes. That is, there exists $k_0\in \mathbb N$ such that
     \[\CR{k_0}(v) = \CR{k_0}(w)\Longleftrightarrow \CR{k}(v) = \CR{k}(w).\]
for all $k\geq k_0$ and $v, w\in V$. $\CR{k_0}$ is known as the \emph{stable coloring} of $G$. Isomorphic graphs have the same stable coloring, although the converse does not necessarily hold.

\begin{example}
    Consider the graph $G$ below. $\CR{3}(v_3)$ yields the illustrated rooted tree.
    
    \begin{center}
    \begin{tikzpicture}[x=0.5cm,y=0.5cm]
      \node (g) at (2, -4)  {$G$};
            \node[circle, draw, fill=color1, label=center:$v_0$] (0) at (0, 0) {$\phantom{r}$};
            \node[circle, draw, fill=color2, label=center:$v_1$] (1) at (0, -2) {$\phantom{r}$};
            \node[circle, draw, fill=color2, label=center:$v_2$] (2) at (0, -4) {$\phantom{r}$};
            \node[circle, draw, fill=color1, label=center:$v_3$] (3) at (2, 0) {$\phantom{r}$};
            \node[circle, draw, fill=color1, label=center:$v_4$] (4) at (2, -2) {$\phantom{r}$};
            \draw (0) -- (1);
            \draw (0) -- (3);
            \draw (1) -- (2);
            \draw (1) -- (4);
            \draw (3) -- (4);
    \end{tikzpicture}
    \hspace{1cm}
    \begin{tikzpicture}[x=0.8cm,y=0.8cm]
      \node (g) at (0, -1) {$\CR{3}(v_3)$};
      \node[circle, draw, fill=color1] (0) at (0, 0) {$r$};
      \node[circle, draw, fill=color1] (1) at (-2.5, -1) {};
      \node[circle, draw, fill=color2] (11) at (-3.5, -2) {};
      \node[circle, draw, fill=color1] (111) at (-4.5, -3) {};
      \node[circle, draw, fill=color1] (112) at (-3.5, -3) {};
      \node[circle, draw, fill=color2] (113) at (-2.5, -3) {};      
      \node[circle, draw, fill=color1] (12) at (-1.5, -2) {};      
      \node[circle, draw, fill=color1] (121) at (-1.5, -3) {};      
      \node[circle, draw, fill=color1] (122) at (-.5, -3) {};
      \node[circle, draw, fill=color1] (2) at (2.5, -1) {};
      \node[circle, draw, fill=color2] (21) at (1.5, -2) {};
      \node[circle, draw, fill=color1] (211) at (0.5, -3) {};
      \node[circle, draw, fill=color1] (212) at (1.5, -3) {};
      \node[circle, draw, fill=color2] (213) at (2.5, -3) {}; 
      \node[circle, draw, fill=color1] (22) at (3.5, -2) {};      
      \node[circle, draw, fill=color1] (221) at (3.5, -3) {};      
      \node[circle, draw, fill=color1] (222) at (4.5, -3) {};
      \draw (0) -- (1);
      \draw (1) -- (11);
      \draw (11) -- (111);
      \draw (11) -- (112);
      \draw (11) -- (113);
      \draw (1) -- (12);
      \draw (12) -- (121);
      \draw (12) -- (122);
      \draw (0) -- (2);
      \draw (2) -- (21);
      \draw (21) -- (211);
      \draw (21) -- (212);
      \draw (21) -- (213);
      \draw (2) -- (22);
      \draw (22) -- (221);
      \draw (22) -- (222);
    \end{tikzpicture}
   \end{center}
\end{example}

Importantly, color refinement bounds the expressivity of graph neural networks:

\begin{theorem}[\citet{xu19howpowerful, morris2019weisfeiler}]
    For all $v, w\in V$ and $k\in\mathbb N$ the following are equivalent: \begin{itemize}
        \item For all $k$-layer MPNNs $f$ we have $f(v) = f(w)$.
        \item $\CR{k}(v) = \CR{k}(w)$.
    \end{itemize}
\end{theorem}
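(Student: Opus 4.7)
The plan is to prove the two implications separately, both by induction on $k$.

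For the forward direction (same color refinement trees implies same MPNN output), I would induct on $k$. The base case $k=0$ is immediate: $\CR{0}(v) = \CR{0}(w)$ means $x_v = x_w$, and a $0$-layer MPNN simply returns the initial feature. For the inductive step, suppose $\CR{k+1}(v) = \CR{k+1}(w)$. By the definition of $\CR{k+1}$ (a root labeled with $x_v$, to which the trees $\CR{k}(u)$ for $u\in N(v)$ are attached), this tree-level equality forces both $x_v = x_w$ and the multiset equality $\{\!\!\{\CR{k}(u):u\in N(v)\}\!\!\} = \{\!\!\{\CR{k}(u):u\in N(w)\}\!\!\}$. The induction hypothesis, applied pairwise across these two multisets, yields $\{\!\!\{x_u^{(k)}:u\in N(v)\}\!\!\} = \{\!\!\{x_u^{(k)}:u\in N(w)\}\!\!\}$ and $x_v^{(k)} = x_w^{(k)}$. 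Plugging both into the recursion~\eqref{eq: MPNN_Recursion} and using that $\mathsf{agg}_{k+1}$ and $\mathsf{comb}_{k+1}$ are functions gives $x_v^{(k+1)} = x_w^{(k+1)}$, completing the induction.

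For the converse, I would construct a single $k$-layer MPNN whose intermediate colorings are in bijection with the $\CR{j}$-trees, so that whenever $\CR{k}(v)\neq\CR{k}(w)$ we obtain $f(v)\neq f(w)$. The key ingredient is the fact, stated in the preliminaries, that all feature spaces considered are countable and discrete. Fix any injection $\pi\colon \mathbb{N}\times \mathrm{MultiSet}(\mathbb{N})\to \mathbb{N}$ (such a $\pi$ exists, e.g., by encoding multisets via prime factorization and then pairing via Cantor's bijection). At each layer, define $\mathsf{agg}_{k+1}$ to be the identity on finite multisets over $\mathbb{N}$, and $\mathsf{comb}_{k+1}(a,M) := \pi(a,M)$; this makes the update an injective function of $(x_v^{(k)},\{\!\!\{x_u^{(k)}:u\in N(v)\}\!\!\})$. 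Now induct on $k$ to show that for this MPNN the map $v\mapsto x_v^{(k)}$ factors through $\CR{k}$ by an injection, i.e.\ $x_v^{(k)} = x_w^{(k)} \Longleftrightarrow \CR{k}(v) = \CR{k}(w)$. The base case again is trivial by choosing $x_v^{(0)}$ as any fixed injective encoding of $x_v$ into $\mathbb{N}$. The inductive step uses injectivity of $\mathsf{comb}_{k+1}$ together with the hypothesis that $x_u^{(k)}$ faithfully encodes $\CR{k}(u)$, which ensures the multiset of child colors carries exactly the information needed to reconstruct $\CR{k+1}(v)$. Contrapositively, $\CR{k}(v)\neq\CR{k}(w)$ yields $f(v) = x_v^{(k)}\neq x_w^{(k)} = f(w)$.

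The only real obstacle is the construction of the injective update in the converse direction: one must verify that on our countable, discrete feature space such a $\pi$ really exists and yields admissible $\mathsf{agg}$ and $\mathsf{comb}$ in the sense of Section~\ref{subsec:gnn}. Since the paper explicitly restricts features to countable discrete spaces and places no continuity or parametric restriction on $\mathsf{agg}_k,\mathsf{comb}_k$, this is unproblematic, and the remainder is a straightforward double induction.
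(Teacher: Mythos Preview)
Your argument is correct and is the standard proof of this classical result. Note, however, that the paper does not prove this theorem at all: it is stated in the background section with attribution to \citet{xu19howpowerful, morris2019weisfeiler} and used as a black box, so there is no ``paper's own proof'' to compare against.
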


\subsection{Learning and Data Generation Model}
Existing work investigates specific parameterizations of sparse random graphs~\citep{baranwal2025optimalitymessagepassingarchitecturessparse}, fixed graph size distributions~\citep{maskey2022generalizationanalysismessagepassing}, or restricted classes of MPNNs on dense random graphs~\citep{adam2024almost}, whereas we aim for a general characterization of learnability for large graphs. We assume that a large graph $G$ with $t$ nodes is sampled from a random graph model \(G_t\), followed by a uniform sampling of nodes, on which a node classification task is performed. This setting reflects real-world scenarios such as social, biological, and epidemiological networks, where a \emph{single} large network emerges from an underlying random process, and the MPNNs are used for node classification.

Our goal is to characterize distributions where learning an MPNN from a single, large graph can achieve reliable generalization for node classification tasks. We formalize this by defining \emph{probabilistic consistency of the empirical risk}:

\begin{definition}
Let $G_t$ be a random graph and let $f_{*}$ be an arbitrary node labeling function expressible by a ($k$-layer) MPNN.
We say that $G_t$ admits \emph{probabilistic consistency of the empirical risk with respect to ($k$-layer) MPNNs} if the generalization gap goes to zero with high probability in the large graph limit. That is, for every \(\varepsilon > 0\) and any ($k$-layer) MPNN \(f\), we have that
\[
\mathbb{P}\big(|R_{\mathrm{emp}}(f, G_t) - R(f)| \geq \varepsilon\big) \to 0 \quad \text{as } t \to \infty,
\]
where the empirical risk and true risk are defined by
\[
R_{\mathrm{emp}}(f, G_t) := t^{-1}\cdot\, |\{v \in G_t : f(v) \neq f_*(v)\}|, \quad
R(f) := \lim_{t \to \infty} \mathbb{E}[R_{\mathrm{emp}}(f, G_t)].
\]
\end{definition}

Notably, all locally convergent random graphs admit probabilistic consistency of empirical risk. However, local convergence is not a necessary condition. In section~\ref{sec:mp-convergence} we define a completely equivalent limit notion, color convergence.
 \section{Color Convergence}\label{sec:mp-convergence}

We aim to characterize and examine the class of random graphs that satisfy probabilistic consistency of empirical risk with respect to MPNNs. To this end, we reconcile ideas from color refinement and local convergence, and introduce the notion of color convergence. Unlike local convergence, which is defined via distributions over rooted graphs, color convergence is defined via distributions over the set of rooted trees or colors as given in \Cref{def: color_trees}.

\begin{definition}\label{def:MP}
    The set $\MPT{k}\subseteq\T_k$ of \emph{refinement colors} of depth $k$ comprises the isomorphism classes of rooted trees $T\in\T_k$ that occur as the result of color refinement. That is, trees $T$ such that $T \simeq \CR{k}(v)$ for some graph $G$ and vertex $v\in V(G)$.
\end{definition}
There is a convenient structural characterization of the elements of $\MPT{k}$.
\begin{proposition}\label{prop:mp}
A rooted tree $T\in\T_k$ belongs to $\MPT{k}$ if and only if for every node $v \in V(T)$ that is neither the root nor at depth $k$, there exists a child $c$ of $v$ such that $T(c) \simeq T(p)|_d,$
where $p$ is the parent of $v$ and $d$ is the depth of the subtree $T(c)$.
\end{proposition}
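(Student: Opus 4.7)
Necessity is direct; sufficiency needs a quotient construction.

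\emph{Necessity.} If $T\simeq\CR{k}(r)$ in some graph $G$, then by the construction of $\CR{k}$ each depth-$d$ node $u$ of $T$ is indexed by a walk $(r=v_{0},\dots,v_{d})$ in $G$, with $T(u)\simeq\CR{k-d}(v_{d})$. For a non-root, non-depth-$k$ node $u$, the backtracking walk $(v_{0},\dots,v_{d},v_{d-1})$ produces a child $c$ of $u$ satisfying
\[
T(c)\simeq\CR{k-d-1}(v_{d-1})\simeq\CR{k-d+1}(v_{d-1})\big|_{k-d-1}\simeq T(p)\big|_{k-d-1},
\]
where $p$ is the parent of $u$ and $k-d-1$ is the depth of $T(c)$, as required.

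\emph{Sufficiency.} Given $T$ satisfying the condition, I construct $G$ and a root $\rho$ with $\CR{k}(\rho)\simeq T$ by a quotient of $T$. For every non-root, non-depth-$k$ node $v$ fix a back-child $c_{v}$ witnessing the hypothesis, and let $\sim$ be the smallest equivalence relation on $V(T)$ with $c_{v}\sim p(v)$ for all such $v$. Each generator identifies nodes whose depths differ by $2$, so $\sim$ relates only nodes of the same depth-parity; in particular no two tree-adjacent nodes are identified. The back-child condition also forces $x_{c_{v}}=x_{p(v)}$, so every class has a well-defined feature. Put $V(G)=V(T)/{\sim}$ with these features, and for every non-root node $u$ that is not a back-child (i.e.\ $u\neq c_{p(u)}$) add an edge $\{[u],[p(u)]\}$ to $G$ (yielding a multigraph in general).

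I would prove $\CR{k}([r])\simeq T$ by induction on $d\in\{0,\dots,k\}$. The intended isomorphism sends a depth-$d$ node $v\in V(T)$ with tree-path $(r,w_{1},\dots,w_{d})$ to the depth-$d$ node of $\CR{k}([r])$ indexed by the walk $([r],[w_{1}],\dots,[w_{d}])$ in $G$. This is a legitimate walk step-by-step: either $w_{i}$ is essential and contributes the edge $\{[w_{i-1}],[w_{i}]\}$, or $w_{i}=c_{w_{i-1}}$ and $[w_{i}]=[w_{i-2}]$, in which case the required edge $\{[w_{i-1}],[w_{i-2}]\}$ was introduced one level earlier. The inductive step reduces to a local claim: for each $v\in V(T)$ with $\operatorname{depth}(v)<k$, the multiset $\{[c]:c\in C(v)\}$ of classes of $v$'s children equals the multiset of neighbors of $[v]$ in $G$, with $c_{v}$ contributing $[p(v)]$ (matching the parent edge) and the remaining essential children contributing their own classes via the edges they introduce.

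The \textbf{main obstacle} is a consistency check underlying this local claim: a single class $[v]$ typically contains several representatives of $T$ at different depths, and the neighbor multiset of $[v]$ must be independent of which representative one uses to read off the children. This is precisely where the hypothesis is used: the back-child isomorphism $T(c_{v})\simeq T(p(v))|_{\cdot}$ pairs each child of a back-child with a corresponding child of its grandparent, and a short induction along $\sim$-chains shows that paired children lie in the same class. Once this coherence is established, the rest of the verification is routine bookkeeping.
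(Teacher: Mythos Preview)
Your necessity argument is fine. The sufficiency argument, however, fails precisely at what you flag as the main obstacle: the claim that paired children lie in the same $\sim$-class is false, and as a consequence $[r]$ acquires spurious neighbours in $G$. Take $T=\CR{3}(b)$ where the underlying graph is the path on three vertices $a,b,c$ with pairwise distinct features. Then $T$ has root $r$ (feature $x_b$) with children $v_1,v_2$ (features $x_a,x_c$); each $v_i$ has a single child $w_i$ (feature $x_b$); each $w_i$ has two leaves $u_{i1},u_{i2}$ (features $x_a,x_c$). The back-children are forced: $c_{v_i}=w_i$, $c_{w_1}=u_{11}$, $c_{w_2}=u_{22}$. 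Your relation $\sim$ then yields the classes $[r]=\{r,w_1,w_2\}$, $[v_1]=\{v_1,u_{11}\}$, $[v_2]=\{v_2,u_{22}\}$, and singletons $\{u_{12}\}$, $\{u_{21}\}$. The essential non-root nodes are $v_1,v_2,u_{12},u_{21}$, each contributing an edge incident to $[r]$, so $[r]$ has degree $4$ in $G$ while $r$ has degree $2$ in $T$; hence $\CR{3}([r])\not\simeq T$. The isomorphism $T(w_1)\simeq T(r)|_1$ does pair $u_{12}$ with $v_2$, but $u_{12}$ is a leaf and not itself a back-child, so no generator of $\sim$ touches it and there is simply no $\sim$-chain to induct along.

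The paper avoids this by inducting on $k$: for each child $v$ of the root, delete $T(c_v)$ from $T(v)$ to obtain $T_v$, use the inductive hypothesis to realise $T_v$ as $\CR{k}$ of some tree $T_v'$, and attach all the $T_v'$ to a fresh root $r$. In the resulting tree, $r$ is the single extra neighbour of each $T_v'$-root and restores exactly the deleted back-child subtree when one computes $\CR{k+1}(r)$. A quotient approach along your lines could be salvaged only by coarsening $\sim$ to propagate the full back-child isomorphisms through $T$, not just identify their roots, which is considerably more delicate than your sketch indicates.
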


\begin{example}
    The tree $T$ belongs to $\MPT{3}$. Every node at depth 1 has a child $c$ such that $T(c) \cong T|_1$, and every node at depth 2 has a child with the same color as its parent. $T'$ does not belong to $\MPT{3}$: the nodes at depth 1 in $T'$ lack a child $c$ satisfying the condition $T(c) \cong T(p)|_d$. Note that $\CR{3}(T') = T$.
    
    \begin{figure}[h]
        \centering
   \begin{tikzpicture}[x=0.8cm,y=0.8cm]
      \node (g) at (0, -1) {$T$};
      \node[circle, draw, fill=color1] (0) at (0, 0) {$r$};
      \node[circle, draw, fill=color1] (1) at (-2.5, -1) {};
      \node[circle, draw, fill=color2] (11) at (-3.5, -2) {};
      \node[circle, draw, fill=color1] (111) at (-4.5, -3) {};
      \node[circle, draw, fill=color1] (112) at (-3.5, -3) {};
      \node[circle, draw, fill=color2] (113) at (-2.5, -3) {};      
      \node[circle, draw, fill=color1] (12) at (-1.5, -2) {};      
      \node[circle, draw, fill=color1] (121) at (-1.5, -3) {};      
      \node[circle, draw, fill=color1] (122) at (-.5, -3) {};
      \node[circle, draw, fill=color1] (2) at (2.5, -1) {};
      \node[circle, draw, fill=color2] (21) at (1.5, -2) {};
      \node[circle, draw, fill=color1] (211) at (0.5, -3) {};
      \node[circle, draw, fill=color1] (212) at (1.5, -3) {};
      \node[circle, draw, fill=color2] (213) at (2.5, -3) {}; 
      \node[circle, draw, fill=color1] (22) at (3.5, -2) {};      
      \node[circle, draw, fill=color1] (221) at (3.5, -3) {};      
      \node[circle, draw, fill=color1] (222) at (4.5, -3) {};
      \draw (0) -- (1);
      \draw (1) -- (11);
      \draw (11) -- (111);
      \draw (11) -- (112);
      \draw (11) -- (113);
      \draw (1) -- (12);
      \draw (12) -- (121);
      \draw (12) -- (122);
      \draw (0) -- (2);
      \draw (2) -- (21);
      \draw (21) -- (211);
      \draw (21) -- (212);
      \draw (21) -- (213);
      \draw (2) -- (22);
      \draw (22) -- (221);
      \draw (22) -- (222);
    \end{tikzpicture}
    \hspace{.8cm}
    \begin{tikzpicture}[x=0.8cm,y=0.8cm]
      \node (g) at (0, -1) {$T'$};
      \node[circle, draw, fill=color1] (0) at (0, 0) {$r$};
      \node[circle, draw, fill=color1] (1) at (-2.5, -1) {};
      \node[circle, draw, fill=color2] (11) at (-3.5, -2) {};
      \node[circle, draw, fill=color1] (112) at (-3.5, -3) {};
      \node[circle, draw, fill=color2] (113) at (-2.5, -3) {};      
      \node[circle, draw, fill=color1] (2) at (2.5, -1) {};
      \node[circle, draw, fill=color2] (21) at (1.5, -2) {};
      \node[circle, draw, fill=color1] (212) at (1.5, -3) {};
      \node[circle, draw, fill=color2] (213) at (2.5, -3) {}; 
      \draw (0) -- (1);
      \draw (1) -- (11);
      \draw (11) -- (112);
      \draw (11) -- (113);
      \draw (0) -- (2);
      \draw (2) -- (21);
      \draw (21) -- (212);
      \draw (21) -- (213);
    \end{tikzpicture}
        \caption{Example demonstrating~\Cref{prop:mp}.}
        \label{fig:mpk-example}
    \end{figure}
    \vspace{-.7cm}
\end{example}

\

We can now give an analogue to local convergence based on refinement colors:

\begin{definition}
    For any random graph $G_t$ and $t\in \mathbb N$ we define the random PMF $c_{k, t}$ on $\MPT{k}$ by
    \[c_{k, t}(T) = t^{-1}\cdot \left\vert\{v\in V(G_t): \CR{k}(v)\simeq T\}\right\vert.\]
    If $c_{k, t}$ converges in probability as $t\to\infty$, we denote its limit with $c_{k, \infty}$.
If $c_{k, \infty}$ is defined, we call $G_t$ \emph{$\MPT{k}$-convergent}. If $G_t$ is  \emph{$\MPT{k}$-convergent} for all $k\in\mathbb{N}$ we call $G_t$ \emph{color convergent}.
\end{definition}

\subsection{Color Convergence and Generalization Gap in MPNNs}

In the following example we show that learning MPNNs on random graphs that do not admit color convergence can lead to pathological generalization behavior. 

\begin{example}\label{ex:local-conv-cont}
    Recall Example~\ref{ex:local-conv} where $G_t$ is, with probability $1/2$, either a set of $t$ isolated vertices or a cycle on $t$ vertices. Consider the node label \(f_*(v) = \mathbbm{1}_{\{d_v=0\}}\), which classifies isolated nodes. Let $f$ denote the constant $0$ classifier. Then $\mathbb{P}(R_{\text{emp}}(f, G_t) = 0) = 1/2$ for $t\in\mathbb N$ but $R(f) = 1/2$.  
\end{example}

 \Cref{ex:local-conv-cont} shows that there exists a data distributions (expressed by the random graph $G_t$ in the example)  for which an arbitrarily large sample (measured in graph size) can still lead to a constant non-zero generalization gap.
 In the following theorem we show that such pathological behavior can not occur for color convergent random graphs, i.e., any MPNN we learn on a color convergent random graph generalizes well in the limit with high probability. 
 In fact, we show that for distributions that are not color convergent there is always an MPNN for which the generalization gap does not vanish. 
\begin{theorem}
\label{thm: learn_color_conv_MPNN}
    Let $G_t$ be a random graph. The following are equivalent:
    \begin{itemize}
        \item $G_t$ is $\MPT{k}$-convergent.
        \item $G_t$ satisfies probabilistic consistency of empirical risk with respect to $k$-layer MPNNs.
\end{itemize}
\end{theorem}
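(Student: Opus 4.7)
The plan is to translate the empirical risk of any $k$-layer MPNN into a linear functional of the refinement-color distribution $c_{k,t}$, and then identify $\MPT{k}$-convergence with the required concentration.

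\textbf{Setup and forward direction.} By the theorem of \citet{xu19howpowerful, morris2019weisfeiler}, each $k$-layer MPNN's output at $v$ factors through $\CR{k}(v)$, and with sufficiently expressive $\mathsf{agg}, \mathsf{comb}$ any Boolean map $\MPT{k}\to\{0,1\}$ is realized by some $k$-layer MPNN. Hence for every $A\subseteq\MPT{k}$ there exist $k$-layer MPNNs $(f, f_*)$ with disagreement set $\{v:\CR{k}(v)\in A\}$, and every such pair arises this way, giving $R_{\mathrm{emp}}(f, G_t)=\sum_{T\in A}c_{k,t}(T)=:c_{k,t}(A)$. For the forward direction, assuming $c_{k,t}\to c_{k,\infty}$ in probability and given $\varepsilon>0$, I would pick a finite $S$ with $c_{k,\infty}(\MPT{k}\setminus S)<\varepsilon/4$, split $c_{k,t}(A)-c_{k,\infty}(A)$ into a head over $A\cap S$ (small by finite-sum probabilistic convergence) and a tail bounded by $c_{k,t}(\MPT{k}\setminus S)=1-c_{k,t}(S)$, which converges in probability to $c_{k,\infty}(\MPT{k}\setminus S)<\varepsilon/4$. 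Bounded convergence then identifies $R(f)=c_{k,\infty}(A)$.

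\textbf{Backward direction.} Applying consistency with $A=\{T\}$ yields pointwise $c_{k,t}(T)\to p_T$ in probability. The delicate step is to show $\sum_T p_T=1$. I argue by contradiction: if $\delta:=1-\sum p_T>0$, I construct a single $A^\star\subseteq\MPT{k}$ for which $\mathbb{E}[c_{k,t}(A^\star)]$ oscillates, defeating the existence of $R(f_{A^\star})$ required by consistency. Enumerating $\MPT{k}=\{T_0,T_1,\ldots\}$, I inductively choose times $t_1<t_2<\cdots$ and disjoint finite $E_j\subseteq\MPT{k}$ via interleaved diagonalization: at step $j$, set $S_{j-1}=\{T_0,\ldots,T_{N_{j-1}}\}\cup\bigcup_{i<j}E_i$ with $N_{j-1}$ large enough that $\sum_{T\notin S_{j-1}}p_T<\delta/2^{j+5}$ (feasible since $\sum_T p_T<\infty$); pick $t_j$ large so both $\mathbb{E}[c_{k,t_j}(\MPT{k}\setminus S_{j-1})]\geq 31\delta/32$ and $\mathbb{E}[c_{k,t_j}(E_i)]<\delta/2^{i+4}$ for all $i<j$; and extract finite $E_j\subseteq\MPT{k}\setminus S_{j-1}$ with $\mathbb{E}[c_{k,t_j}(E_j)]\geq 7\delta/8$ (so automatically $\sum_{T\in E_j}p_T<\delta/2^{j+5}$). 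Setting $A^\star=\bigcup_{j\text{ odd}}E_j$: at odd $j$, $\mathbb{E}[c_{k,t_j}(A^\star)]\geq 7\delta/8$ via $E_j\subseteq A^\star$; at even $j$, past $E_i$'s contribute $<\delta/16$ (from their tiny $p$-mass and the chosen $t_j$), and future $E_i$'s at most $\mathbb{E}[c_{k,t_j}(\MPT{k}\setminus S_j)]<\delta/8+\delta/2^{j+5}$, giving $\mathbb{E}[c_{k,t_j}(A^\star)]<\delta/4$. The forced $5\delta/8$ gap prevents concentration.

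The main obstacle is precisely this backward diagonalization: each $E_j$ must simultaneously carry large current $c_{k,t_j}$-mass (to dominate at odd $j$), have negligible $p$-mass (so past contributions decay at future $t_{j'}$), and be disjoint from prior $E_i$'s (so future contributions are bounded by the shrinking tail mass outside $S_j$). Both the summability $\sum_T p_T<\infty$ (making $\{T:p_T\geq\eta\}$ finite for every $\eta>0$) and the PMF-tightness of each individual $c_{k,t_j}$ (furnishing a finite escape-catcher) are essential for the construction to succeed.
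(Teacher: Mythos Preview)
Your proposal follows the same two-part structure as the paper: the forward direction reduces the empirical risk to $c_{k,t}(D)$ for a disagreement set $D$ and uses a tightness split (the paper's Lemma~A.2 simply asserts convergence of $\sum_{T\in D}(c_{k,t}(T)-c_T)$; your head/tail decomposition is the clean justification). For the backward direction, both you and the paper first extract pointwise convergence from singleton disagreement sets, then run a diagonalization to rule out mass escape. The paper filters by bounded degree (the sets $A_n$ of colors with arity $\leq n$) and exhibits oscillation directly in probability, whereas you enumerate $\MPT{k}$ and exhibit oscillation in expectation. Since $c_{k,t}(A^\star)\in[0,1]$, oscillation of the mean already precludes concentration around any candidate value, so either route suffices.

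There is one genuine but small gap in your diagonalization. At even $j$ you bound the future contribution by $\mathbb{E}[c_{k,t_j}(\MPT{k}\setminus S_j)]$ and assert this is $<\delta/8+\delta/2^{j+5}$. That bound needs an \emph{upper} estimate on $\mathbb{E}[c_{k,t_j}(\MPT{k}\setminus S_{j-1})]$ close to $\delta$, so that subtracting $\mathbb{E}[c_{k,t_j}(E_j)]\geq 7\delta/8$ leaves roughly $\delta/8$; but your stated conditions on $t_j$ impose only the lower bound $\geq 31\delta/32$. The fix is immediate: since your choice of $N_{j-1}$ forces $\sum_{T\in S_{j-1}}p_T>(1-\delta)-\delta/2^{j+5}$, the limit of $\mathbb{E}[c_{k,t}(\MPT{k}\setminus S_{j-1})]$ lies in $[\delta,\,\delta+\delta/2^{j+5})$, so you may additionally demand $\mathbb{E}[c_{k,t_j}(\MPT{k}\setminus S_{j-1})]<\delta+\delta/2^{j+4}$ when picking $t_j$. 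Alternatively, choose $E_j$ to exhaust $\mathbb{E}[c_{k,t_j}(\MPT{k}\setminus S_{j-1})]$ up to a remainder of $\delta/2^{j+5}$, and the future term is small automatically. With either patch your argument goes through.
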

The key insight behind the proof of \Cref{thm: learn_color_conv_MPNN} is that color refinement corresponds to a maximally discriminative MPNN: If a $k$-layer MPNN has a non-zero generalization gap we can identify a distinct refinement color on which this discrepancy arises. In such cases, $c_{k ,t}$ cannot converge in probability.
\begin{corollary}
\label{cor: learn_color_conv_MPNN}
    Let $G_t$ be a random graph. The following are equivalent:
    \begin{itemize}
        \item $G_t$ is color convergent.
        \item $G_t$ satisfies probabilistic consistency of empirical risk with respect to MPNNs.
\end{itemize}
\end{corollary}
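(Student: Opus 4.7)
The plan is to reduce the corollary to a layerwise application of \Cref{thm: learn_color_conv_MPNN}, since color convergence is by definition $\MPT{k}$-convergence for every $k\in\mathbb N$, and any MPNN uses a finite number of layers. So nothing beyond quantifier bookkeeping should be needed.

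For the forward direction, I would assume $G_t$ is color convergent, fix an arbitrary $\varepsilon>0$, and pick an arbitrary MPNN $f$. Because $f$ has some finite depth $k$, it is also a $k$-layer MPNN, and $G_t$ is $\MPT{k}$-convergent by hypothesis. Invoking \Cref{thm: learn_color_conv_MPNN} at this particular $k$ then yields $\mathbb{P}\bigl(|R_{\mathrm{emp}}(f,G_t)-R(f)|\geq\varepsilon\bigr)\to 0$ as $t\to\infty$. Since $f$ and $\varepsilon$ were arbitrary, this is exactly probabilistic consistency of empirical risk with respect to MPNNs.

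For the converse, I would assume $G_t$ satisfies probabilistic consistency of empirical risk with respect to MPNNs and fix any $k\in\mathbb N$. Every $k$-layer MPNN is in particular an MPNN, so specializing the hypothesis to $k$-layer MPNNs gives probabilistic consistency of empirical risk with respect to $k$-layer MPNNs. \Cref{thm: learn_color_conv_MPNN} applied at depth $k$ then yields that $G_t$ is $\MPT{k}$-convergent. Since $k\in\mathbb N$ was arbitrary, $G_t$ is color convergent by definition.

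I do not foresee a genuine obstacle, only one subtlety worth flagging in the writeup: the labeling target $f_*$ in the definition of probabilistic consistency is quantified per choice of MPNN class, so when one specializes from all MPNNs to $k$-layer MPNNs the admissible $f_*$'s shrink rather than grow, which is the direction one wants. Conversely, any label function expressible by a $k$-layer MPNN is a fortiori expressible by an MPNN, so the forward direction is compatible as well. Hence the two quantifications align cleanly and no additional technical work is required beyond invoking \Cref{thm: learn_color_conv_MPNN}.
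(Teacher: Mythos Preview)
Your proposal is correct and matches the paper's own proof, which simply records that the corollary ``follows immediately from the definitions'' (i.e., from applying \Cref{thm: learn_color_conv_MPNN} at each $k$). The only cosmetic fix worth making in the forward direction is to choose $k$ large enough to accommodate both $f$ and the target $f_*$ (e.g., the maximum of their depths), not just $f$; your subtlety paragraph gestures at this but states the inclusion in the wrong direction for that case.
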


\Cref{cor: learn_color_conv_MPNN} separates random graph models in terms of MPNN learnability. Sparse SBM, preferential attachment models, and inhomogeneous random graphs satisfy color convergence, while dense random graphs (like SBM and Erdős-Rényi) with growing average degree are incompatible with MPNN learning.

\begin{example}
    Consider a node classifier $f_*$ which expresses the parity of degree, i.e. $f_*(v) = 0$ if $d_v$ is even, and $f_*(v) = 1$ otherwise, and let $G_t$ be an Erdős–Rényi graph with edge probability $p\neq 0$. For $t\in\mathbb N$ let $f_t$ denote the following node labeling function:
    \[f_t(v) = \begin{cases}
        f_*(v) & \text{if $d_v < t$}\\
        1 - f_*(v) & \text{otherwise}
    \end{cases}\]
    By construction, $R_{emp}(f_t, G_t) = 0$ with probability $1$ for all $t\in\mathbb N$. Since almost all nodes have degree greater than $t$ in the limit, we have $R(f_t) = 1$. This is impossible for color convergent $G_t$.
\end{example}

\subsection{Properties of Color Convergent Random Graphs}
Color convergence is a strict relaxation of local convergence:
\begin{proposition}\label{prop:local-color}
    Let $G_t$ be a $\BS_k$-convergent random graph. Then $G_t$ is $\MP_k$-convergent. \end{proposition}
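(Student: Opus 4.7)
The plan is to exploit the fact that the color $\CR{k}(v)$ is a deterministic function of the isomorphism class of the ball $B_k(v)$, so that the color statistics $c_{k,t}$ can be recovered by aggregating the ball statistics $b_{k,t}$. The candidate limit is then
\[
c_{k,\infty}(T) \;:=\; \sum_{B \in \BS_k : \CR{k}(B) \simeq T} b_{k,\infty}(B),
\]
and the task is to upgrade pointwise (in $B$) convergence in probability of $b_{k,t}$ to pointwise (in $T$) convergence in probability of $c_{k,t}$, despite $\BS_k$ being countably infinite.

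First I would note that if $B_k(v) \simeq B$ then $\CR{k}(v) \simeq \CR{k}(B)$, which follows directly from~\Cref{def: color_trees} since the definition of $\CR{k}$ at the root only reads off the $k$-ball. This gives the identity
\[
c_{k,t}(T) \;=\; \sum_{B \in \BS_k,\;\CR{k}(B)\simeq T} b_{k,t}(B).
\]
Fix $T \in \MPT{k}$ and $\varepsilon > 0$. Since $b_{k,\infty}$ is a PMF on the countable set $\BS_k$, choose a finite subset $S \subseteq \BS_k$ with $b_{k,\infty}(\BS_k \setminus S) < \varepsilon/4$, and let $S_T := \{B \in S : \CR{k}(B) \simeq T\}$.

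Next I would handle the finite part and the tail separately. Because $S$ is finite and $b_{k,t}(B) \to b_{k,\infty}(B)$ in probability for each $B \in S$, a union bound gives $\sum_{B \in S_T} b_{k,t}(B) \to \sum_{B \in S_T} b_{k,\infty}(B)$ in probability, and also $\sum_{B \in S} b_{k,t}(B) \to \sum_{B \in S} b_{k,\infty}(B)$ in probability. For the tail, using that both $b_{k,t}$ and $b_{k,\infty}$ are PMFs,
\[
b_{k,t}(\BS_k \setminus S) \;=\; 1 - \sum_{B \in S} b_{k,t}(B) \;\xrightarrow{\;\mathbb{P}\;}\; 1 - \sum_{B\in S} b_{k,\infty}(B) \;=\; b_{k,\infty}(\BS_k\setminus S) \;<\; \varepsilon/4.
\]
Combining these, $|c_{k,t}(T) - c_{k,\infty}(T)|$ is bounded above by the deviation on $S_T$ plus the two tail terms, each of which is at most $b_{k,t}(\BS_k\setminus S)$ and $b_{k,\infty}(\BS_k\setminus S)$ respectively. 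Hence $\mathbb{P}(|c_{k,t}(T) - c_{k,\infty}(T)| \geq \varepsilon) \to 0$, giving $\MPT{k}$-convergence.

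The only mildly delicate step is the tail control, i.e.\ the need to invoke that $b_{k,t}$ sums to one so that convergence on a large finite set automatically controls the remainder; without this, one could not pass from finitely many pointwise convergences to convergence of a countable sum. Everything else is bookkeeping around the observation that $\CR{k}$ factors through $B_k$.
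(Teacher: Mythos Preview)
Your proof is correct and follows essentially the same approach as the paper: define the candidate limit $c_{k,\infty}(T)$ by summing $b_{k,\infty}$ over the preimage $\{B\in\BS_k : \CR{k}(B)\simeq T\}$ and then pass from convergence in probability of $b_{k,t}$ to convergence in probability of $c_{k,t}$. Your explicit tail control via a finite exhausting set and the identity $b_{k,t}(\BS_k\setminus S)=1-\sum_{B\in S}b_{k,t}(B)$ is in fact more careful than the paper's own proof, which asserts convergence of the countable sum $\sum_{B\in A_T}(b_{k,t}(B)-b_{k,\infty}(B))$ without spelling out this step.
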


\begin{remark}
    There are random graphs which are color convergent but not locally convergent: Suppose $G_t$ deterministically, i.e. with probability $1$, consists of:
    \begin{itemize}
        \item $t/3$ cycles of length $3$, if $t$ is a multiple of $3$,
        \item a single $t$-cylce, otherwise.
    \end{itemize}
    $\CR{k}(v)$ is the perfect binary tree $BT_k$ of depth $k$ for all $t\in\mathbb N$, $v\in G_t$.
Therefore, $G_t$ is color convergent. On the other hand, $G_t$ is not $\BS_1$-convergent: $b_{1,t}$ is $\mathbbm{1}_{C_3}$ when $t$ is a multiple of $3$ and $\mathbbm{1}_{P_3}$ otherwise, where $C_3$ denotes the $3$-cycle and $P_3$ denotes the $3$-node path.
\end{remark}

In the case of tree-like random graphs, however, both notions are equivalent.

\begin{proposition}
    Suppose $G_t$ is a locally tree-like and $\MP_k$-convergent. Then $G_t$ is $\BS_k$-convergent and, for $T\in \T_k$, we have
    \[b_{k,\infty}(T) = c_{k,\infty}(\CR{k}(T)).\]
\end{proposition}

From a learning perspective, the distribution $c_{k,\infty}$ completely captures the limit behavior of GNNs on a given random graph $G_t$. However, not all distributions on $\MP_k$ can arise as such a limit.

\begin{definition}
     A PMF $\mu$ on $\MPT{k}$ is \emph{sofic} if there exists a random graph $G_t$ such that $\mu = c_{k, \infty}$.
\end{definition}

In the case of local convergence, involution invariance is an important property of sofic PMFs. We define an analogous notion for color convergence. It is going to play an important role in the motivation and analysis of our random graph model in Section~\ref{sec:refined-configuration-model}.

\begin{definition}
Let $\mu$ be a PMF on $\MP_k$ and define
\[d_\mu := \sum_{T\in\MP_k}d_T\cdot \mu(T).\]
If $d_\mu < \infty$, we define the \emph{edge-type marginal} $\overline\mu$, a PMF on $\MP_{k-1}^2$, as
\[\overline\mu(T_0, T_1) := \frac{1}{d_\mu}\cdot\sum_{\substack{T\in\MP_{k}\\T|_{k-1} = T_0}}\left\vert\left\{c\in N(T): T(c) = T_1\right\}\right\vert\cdot \mu(T).\]
If $d_\mu < \infty$ and $\bar\mu$ is symmetric we call $\mu$ \emph{involution invariant with finite degree}.
\end{definition}

We can interpret $d_\mu$ as the average degree and $\bar\mu$ as a PMF over the pair of vertex colors of a uniformly chosen pair of connected vertices: their edge-type. Symmetry expresses that the probability of a given edge-type $(T_0, T_1)$ equals that of the inverse edge type $(T_1, T_0)$. Analogous to local convergence, our notion of involution invariance with finite degree is a necessary condition for soficity.

\begin{theorem}\label{thm:sofic-unimodular}
    Let $k\geq 2$, $\mu$ a sofic PMF on $\MP_k$. Then $\mu$ is involution invariant with finite degree.
\end{theorem}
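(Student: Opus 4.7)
The plan is to pass a pre-limit symmetry from finite graphs through to the limit measure $\mu$, using the convergence $c_{k,t}(T)\to\mu(T)$ together with a careful truncation. The pre-limit symmetry is elementary: for any finite graph $G$, define
\[
N_G(T_0,T_1):=\sum_{v\in V(G)}\mathbbm{1}[\CR{k-1}(v)=T_0]\cdot \bigl|\{c\in N(v):\CR{k-1}(c)=T_1\}\bigr|.
\]
Summing over undirected edges, each edge $\{v,w\}$ with $v\neq w$ contributes its pair of colors once in each orientation, and loops contribute symmetrically to the diagonal, so $N_G(T_0,T_1)=N_G(T_1,T_0)$. To relate $N_{G_t}$ to $c_{k,t}$, observe that $\CR{k}(v)=T$ determines both $\CR{k-1}(v)=T|_{k-1}$ and the multiset $\{\CR{k-1}(w):w\in N(v)\}=\{T(c):c\in N(T)\}$; grouping vertices by $\CR{k}$-color gives
\[
\tfrac{1}{t}N_{G_t}(T_0,T_1)=\sum_{T\in\MP_k,\ T|_{k-1}=T_0}c_{k,t}(T)\cdot n_T(T_1),\qquad n_T(T_1):=\bigl|\{c\in N(T):T(c)=T_1\}\bigr|.
\]

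For the finite-degree claim, I would use that for $k\geq 1$ every $T\in\MP_k$ has root degree determined by $T|_{k-1}$, so all vertices with $\CR{k-1}(v)=T_0$ share the graph-degree $d_{T_0}$. Summing the previous display over $T_1$ and using $\sum_{T_1}n_T(T_1)=d_T=d_{T_0}$ gives the marginal identity $\sum_{T_1}t^{-1}N_{G_t}(T_0,T_1)=d_{T_0}c_{k-1,t}(T_0)$. Since $c_{k,t}\to\mu$ as random PMFs, a standard tightness argument (any finite subset of $\MP_k$ captures all but $\varepsilon$ of the mass with high probability) yields $c_{k-1,t}(T_0)\to\mu|_{k-1}(T_0)$ in probability, where $\mu|_{k-1}$ is the pushforward to $\MP_{k-1}$ via $T\mapsto T|_{k-1}$. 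Combining this with the pre-limit symmetry across all choices of $T_0$ and exploiting $k\geq 2$ (so that a single color $T_0$ already constrains the graph-degrees of both endpoints of every incident oriented edge) one rules out mass escape to infinite degrees and concludes $d_\mu=\sum_{T_0}d_{T_0}\mu|_{k-1}(T_0)<\infty$.

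Once finite degree is in hand, the symmetry of $\bar\mu$ is routine: for any finite $A\subseteq\MP_k$ the truncated sum $\sum_{T\in A,\,T|_{k-1}=T_0}c_{k,t}(T)n_T(T_1)$ involves only finitely many terms, each bounded, so it converges in probability to $\sum_{T\in A,\,T|_{k-1}=T_0}\mu(T)n_T(T_1)$. Taking $A_N\uparrow\MP_k$ and using $d_\mu<\infty$ together with the marginal identity to dominate the tail, one obtains $t^{-1}N_{G_t}(T_0,T_1)\to d_\mu\,\bar\mu(T_0,T_1)$ in probability. Applying the same argument to $N_{G_t}(T_1,T_0)$ and invoking the finite-graph identity $N_{G_t}(T_0,T_1)=N_{G_t}(T_1,T_0)$ yields $\bar\mu(T_0,T_1)=\bar\mu(T_1,T_0)$.

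The main obstacle is clearly the finite-degree step: the symmetry of oriented edge counts is a free combinatorial identity, and once $d_\mu<\infty$ the passage of symmetry to the limit is only bookkeeping. What genuinely needs work is preventing a sequence of witnesses whose average degree diverges from producing a limit $\mu$ with $d_\mu=\infty$. I expect this step to use the edge-count symmetry iteratively together with the convergence of $c_{k-1,t}$, rather than a direct tightness argument on $c_{k,t}$, since mass can escape to high-degree colors without disturbing pointwise convergence of $c_{k,t}$ itself.
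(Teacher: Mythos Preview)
Your overall strategy matches the paper's: both pass the finite-graph edge symmetry $N_G(T_0,T_1)=N_G(T_1,T_0)$ to the limit, and both isolate the finite-degree claim as the step requiring real work. The paper packages this via an ``excess'' PMF $c'_{k,t}$ on $\MP_{k-1}$ (sample $T\sim c_{k,t}$, return a uniformly chosen child's subtree) and asserts $c'_{k,t}(T)\propto d_T\,c_{k-1,t}(T)$; your oriented-edge count $N_{G_t}$ is the more transparent version of the same identity, and your argument for symmetry of $\bar\mu$ \emph{given} $d_\mu<\infty$ is correct and essentially what the paper does.

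The finite-degree step, however, is a genuine gap, and it cannot be closed: the claim $d_\mu<\infty$ is false under the paper's definition of soficity. Let $G_t$ be the deterministic disjoint union, over $1\le d\le D_t$ with $D_t\sim t^{1/3}$, of a $d$-regular graph on $\lfloor 6t/(\pi^2 d^2)\rfloor$ vertices, padded with isolated vertices to reach $t$ nodes. Every vertex in the $d$-regular block has $\CR{2}$-color $T_d$ (root and all neighbours of degree $d$), so $c_{2,t}(T_d)\to 6/(\pi^2 d^2)$ for each fixed $d$ and the limit $\mu$ is a genuine PMF on $\MP_2$. Yet $d_\mu=\sum_d 6/(\pi^2 d)=\infty$, and correspondingly $|E(G_t)|$ grows like $t\log t$. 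Your closing diagnosis is exactly right: mass escapes to high-degree colors without disturbing pointwise convergence of $c_{k,t}$, and the edge-count symmetry gives no leverage here because each block is internally symmetric. The paper's own argument has the same defect: the asserted proportionality $c'_{k,t}(T)\propto d_T\,c_{k-1,t}(T)$ already fails on the $3$-path (one computes $c'_{2,t}(S_1)=1/3$ versus $d_{S_1}c_{1,t}(S_1)=2/3$), and under the alternative reading where $c'_{k,t}$ samples a uniform oriented edge, convergence of $c'_{k,t}$ to a PMF is \emph{equivalent} to $d_\mu<\infty$, so that route is circular. (A minor aside: your claim that $T|_{k-1}$ determines $d_T$ needs $k\ge 2$, not $k\ge 1$.)
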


Note that $\mathbb E[2\cdot |E(G_t)|] = t\cdot d_{c_{1, t}}$ and $d_{1, t} \to d_{1, \infty} < \infty$ as  $t\to\infty$. That is, \Cref{thm:sofic-unimodular} immediately implies that color convergent random graphs must be sparse.

\begin{corollary}
    Let $k\geq 2$ and suppose $G_t$ is $\MPT{k}$-convergent. Then
    \(\mathbb E[|E(G_t)|]\in O(t)\).
\end{corollary}

 \section{Refined Configuration Model}\label{sec:refined-configuration-model}

We are now ready to introduce our generalization of the configuration model. We show its universality with regard to color convergence, local convergence to GWTs and limit behavior of MPNNs.

\begin{definition}
    The \emph{refined configuration model} $\RCM_t(\mu)$ is parametrized by:
    \begin{itemize}
        \item a finite or countable set of types $S$, with a type-to-feature mapping $s\to x_s$,
        \item a PMF $\mu$ over $S\times \text{Multiset}(S)$, the product of types and finite multisets of types.
    \end{itemize}
    $\RCM_t(\mu)$ is defined on $\{v_i\}_{i\in[t]}$ as follows:

    \begin{itemize}
        \item For each node $v_i$ assign a type-multiset pair $(s_i, A_i)\sim \mu$ independently at random. $s_i$ determines the type of $v_i$, while $A_i$ determines the types of nodes $v_i$ may be connected to. Let $U_s := \{v_i\:|\:s_i = s\}$ denote the set of nodes which are assigned type $s\in S$. 
        \item For each type $s\in S$, we independently generate a configuration model on $U_s$:
        \begin{itemize}
            \item Each vertex $v_i$ with $s_i = s$ is given a stub for each occurrence of $s$ in $A_i$. The stubs are paired uniformly at random to form edges, until there are $0$ or $1$ stubs left.
        \end{itemize}
        \item For each pair of distinct types $s_L\neq s_R$, we independently generate a bipartite configuration model between $U_{s_L}$ and $U_{s_R}$:
        \begin{itemize}
            \item Each $v_i\in U_{s_L}$ is given a stub for each occurrence of $s_R$ in $A_i$. Each $v_i\in U_{s_R}$ is given a stub for each occurrence of $s_L$ in $A_i$. Then the stubs in $U_{s_L}$ are matched uniformly at random with the stubs in $U_{s_R}$ to form edges, until there are no more stubs left in $U_{s_L}$ or $U_{s_R}$.
        \end{itemize}
    \end{itemize}
\end{definition}

\begin{example}\label{ex:capacity-volume}
    Let $S = X = \{\tikz[baseline]{\node[circle, draw, fill=color1] at (0, 0.1) {};}, \tikz[baseline]{\node[circle, draw, fill=color2] at (0, 0.1) {};}\}$ and $x_s = s$. Consider vertices $\{v_i\}_{i\in[5]}$ and, without specifying the exact distribution, suppose $\mu$ assigns type-multiset pairs as illustrated in Figure~\ref{fig:graph-and-colors}. The generation process occurs in $3$ independent steps: 
    \begin{itemize}
        \item Steps 1 \& 2: Configuration models are generated on nodes of type $\tikz[baseline]{\node[circle, draw, fill=color1] at (0, 0.1) {};}$ and $\tikz[baseline]{\node[circle, draw, fill=color2] at (0, 0.1) {};}$, respectively.
        \item Step 3: A bipartite configuration model is generated between nodes of type $\tikz[baseline]{\node[circle, draw, fill=color1] at (0, 0.1) {};}$ and $\tikz[baseline]{\node[circle, draw, fill=color2] at (0, 0.1) {};}$.
    \end{itemize}
    The possible results are given as $G_0$ and $G_1$, which occur with probability $2/3$ and $1/3$, respectively.
    \begin{figure}[ht]
        \centering
        \newsavebox{\volumes}
        \sbox{\volumes}{
        \begin{tabular}{c|c|l}
             $i$& $s_i$&$A_i$  \\
             \hline
             $0$&$\tikz[baseline]{\node[circle, draw, fill=color1] at (0, 0.1) {};}$&$\tikz[baseline]{\node[circle, draw, fill=color1] at (0, 0.1) {};}$ $\tikz[baseline]{\node[circle, draw, fill=color2] at (0, 0.1) {};}$\\
             $1$&$\tikz[baseline]{\node[circle, draw, fill=color2] at (0, 0.1) {};}$&$\tikz[baseline]{\node[circle, draw, fill=color1] at (0, 0.1) {};}$ $\tikz[baseline]{\node[circle, draw, fill=color2] at (0, 0.1) {};}$ $\tikz[baseline]{\node[circle, draw, fill=color1] at (0, 0.1) {};}$\\
             $2$&$\tikz[baseline]{\node[circle, draw, fill=color2] at (0, 0.1) {};}$&$\tikz[baseline]{\node[circle, draw, fill=color2] at (0, 0.1) {};}$\\
             $3$&$\tikz[baseline]{\node[circle, draw, fill=color1] at (0, 0.1) {};}$&$\tikz[baseline]{\node[circle, draw, fill=color1] at (0, 0.1) {};}$ $\tikz[baseline]{\node[circle, draw, fill=color1] at (0, 0.1) {};}$\\
             $4$&$\tikz[baseline]{\node[circle, draw, fill=color1] at (0, 0.1) {};}$&$\tikz[baseline]{\node[circle, draw, fill=color2] at (0, 0.1) {};}$ $\tikz[baseline]{\node[circle, draw, fill=color1] at (0, 0.1) {};}$
        \end{tabular}
        }
    \begin{tikzpicture}[x=0.8cm,y=0.8cm]
        \node (table) at (0, 0) {\usebox{\volumes}};
    \end{tikzpicture}
    \hspace{.3cm}
    \begin{tikzpicture}[x=0.5cm,y=0.5cm]
        \begin{scope}[shift={(0, 0)}]
            \node[circle, draw, fill=color1, label=center:$v_0$] (0) at (0, 0) {$\phantom{r}$};
            \node[circle, draw, fill=color2, label=center:$v_1$] (1) at (0, -2) {$\phantom{r}$};
            \node[circle, draw, fill=color2, label=center:$v_2$] (2) at (0, -4) {$\phantom{r}$};
            \node[circle, draw, fill=color1, label=center:$v_3$] (3) at (2, 0) {$\phantom{r}$};
            \node[circle, draw, fill=color1, label=center:$v_4$] (4) at (2, -2) {$\phantom{r}$};
            \draw let \p1=(0) in (0) -- (\x1+11,\y1-11);
            \draw let \p1=(3) in (3) -- (\x1-13,\y1);
            \draw let \p1=(3) in (3) -- (\x1,\y1-13);
            \draw let \p1=(4) in (4) -- (\x1-11,\y1+11);
        \draw[dotted, thick] (3, 0) -- (3, -4);
        \end{scope}
        \begin{scope}[shift={(4, 0)}]
            \node[circle, draw, fill=color1, label=center:$v_0$] (0) at (0, 0) {$\phantom{r}$};
            \node[circle, draw, fill=color2, label=center:$v_1$] (1) at (0, -2) {$\phantom{r}$};
            \node[circle, draw, fill=color2, label=center:$v_2$] (2) at (0, -4) {$\phantom{r}$};
            \node[circle, draw, fill=color1, label=center:$v_3$] (3) at (2, 0) {$\phantom{r}$};
            \node[circle, draw, fill=color1, label=center:$v_4$] (4) at (2, -2) {$\phantom{r}$};
            \draw let \p1=(1) in (1) -- (\x1,\y1-13);
            \draw let \p1=(2) in (2) -- (\x1,\y1+13);
        \draw[dotted, thick] (3, 0) -- (3, -4);
        \end{scope}\begin{scope}[shift={(8, 0)}]
            \node[circle, draw, fill=color1, label=center:$v_0$] (0) at (0, 0) {$\phantom{r}$};
            \node[circle, draw, fill=color2, label=center:$v_1$] (1) at (0, -2) {$\phantom{r}$};
            \node[circle, draw, fill=color2, label=center:$v_2$] (2) at (0, -4) {$\phantom{r}$};
            \node[circle, draw, fill=color1, label=center:$v_3$] (3) at (2, 0) {$\phantom{r}$};
            \node[circle, draw, fill=color1, label=center:$v_4$] (4) at (2, -2) {$\phantom{r}$};
            \draw let \p1=(0) in (0) -- (\x1+11,\y1-11);
            \draw let \p1=(1) in (1) -- (\x1+13,\y1);
            \draw let \p1=(1) in (1) -- (\x1,\y1+13);
            \draw let \p1=(4) in (4) -- (\x1-11,\y1+11);
        \draw[dotted, thick] (3, 0) -- (3, -4);
        \end{scope}
        \begin{scope}[shift={(12, 0)}]
            \node (g) at (2, -4)  {$G_0$};
            \node[circle, draw, fill=color1, label=center:$v_0$] (0) at (0, 0) {$\phantom{r}$};
            \node[circle, draw, fill=color2, label=center:$v_1$] (1) at (0, -2) {$\phantom{r}$};
            \node[circle, draw, fill=color2, label=center:$v_2$] (2) at (0, -4) {$\phantom{r}$};
            \node[circle, draw, fill=color1, label=center:$v_3$] (3) at (2, 0) {$\phantom{r}$};
            \node[circle, draw, fill=color1, label=center:$v_4$] (4) at (2, -2) {$\phantom{r}$};
            \draw (0) -- (1);
            \draw (0) -- (3);
            \draw (1) -- (2);
            \draw (1) -- (4);
            \draw (3) -- (4);
        \draw[dotted, thick] (3, 0) -- (3, -4);
        \end{scope}
        \begin{scope}[shift={(16, 0)}]
            \node (g) at (2, -4)  {$G_1$};
            \node[circle, draw, fill=color1, label=center:$v_0$] (0) at (0, 0) {$\phantom{r}$};
            \node[circle, draw, fill=color2, label=center:$v_1$] (1) at (0, -2) {$\phantom{r}$};
            \node[circle, draw, fill=color2, label=center:$v_2$] (2) at (0, -4) {$\phantom{r}$};
            \node[circle, draw, fill=color1, label=center:$v_3$] (3) at (2, 0) {$\phantom{r}$};
            \node[circle, draw, fill=color1, label=center:$v_4$] (4) at (2, -2) {$\phantom{r}$};
            \draw (0) -- (1);
            \draw (0) -- (4);
            \draw (1) -- (2);
            \draw (1) -- (4);
            \draw (3)  to[loop below] (3);
        \end{scope}
        
    \end{tikzpicture}
        \caption{From left to right, we have: a table containing the assigned types, stubs occurring in steps 1, 2, and 3, and possible outputs $G_0$ and $G_1$ of the $\RCM$ algorithm.}
        \label{fig:graph-and-colors}
    \end{figure}
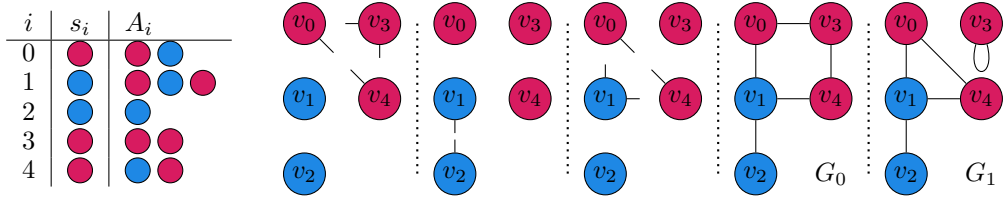
\end{example}
\newpage

For general distributions $\mu$, $\RCM_t(\mu)$ may be poorly behaved. For instance, a type $s$ could occur with probability $0$ as a node type $s_i$ but multiple times in $A_i$ with high probability. In this case $\mu$ would suggest very different neighborhoods from those observed empirically. We give a condition which guarantees local convergence to the distribution that $\mu$ would lead you to expect.

\begin{definition}\label{def:rcm-involution-invariance}
    Let $\mu$ be a PMF on $S\times\text{MultiSet}(S)$ and define
    \[d_\mu := \sum_{s\in S}\sum_{A\in\text{MultiSet}(S)} |A|\cdot \mu(s, A).\]
    If $d_\mu < \infty$, we define the \emph{edge-type marginal} $\bar\mu$, a PMF on $S^2$, as
    \[\bar\mu(s_0, s_1) := \frac{1}{d_\mu}\cdot\sum_{A\in\text{MultiSet}(S)}|\{\!\!\{a\in A: a = s_1\}\!\!\}|\cdot \mu(s_0, A).\]
    If $d_\mu < \infty$ and $\bar\mu$ is symmetric we call $\mu$ \emph{involution invariant with finite degree}.
\end{definition}

Given $s\in S$, the subgraph of $\RCM_t(\mu)$ spanned by $U_s$ is simply the configuration model $\mathsf{CM}_{|U_s|}(\nu)$, where
\[\nu(n) = \frac{1}{Z_s}{\sum\limits_{\substack{A\in\text{MultiSet}(S)\\\{\!\!\{s'\in A: s' = s\}\!\!\} = n}}\mu(s, A)}\hspace{2cm}Z_s := {\sum\limits_{A\in\text{MultiSet}(S)}\mu(s, A)}\]
Given distinct types $s_L\neq s_R$, involution invariance with finite degree of $\mu$ ensures that the bipartite subgraph of $\RCM_t(\mu)$ between $U_{s_L}$ and $U_{s_R}$ is the bipartite configuration model $\mathsf{BCM}_{|U_{s_L}| + |U_{s_R}|}(\nu_L, \nu_R)$ with
\[\nu_L(n) = \frac{1}{Z_{s_L}}\sum\limits_{\substack{A\in\text{MultiSet}(S)\\\{\!\!\{s'\in A: s' = s_R\}\!\!\} = n}}\mu(s_L, A)\hspace{2cm}\nu_R(n) = \frac{1}{Z_{s_R}}{\sum\limits_{\substack{A\in\text{MultiSet}(S)\\\{\!\!\{s'\in A: s' = s_L\}\!\!\} = n}}\mu(s_R, A)}\] It follows that $\RCM_t(\mu)$ converges locally to the following Galton-Watson tree:

\begin{theorem}\label{thm:rcm-local-limit}
    Suppose $G_t = \RCM_t(\mu)$ parametrized by types $S_0$, type-to-feature mapping $s\to x_s$ and $\mu$ is involution invariant with finite degree. Then $G_t$ converges locally to the GWT $W_t$ parametrized by type set $S = (\{\bot\}\cup S_0)\times S_0$, type-to-feature mapping $(s_0, s_1)\mapsto x_{s_1}$ and
    \begin{align}
        &\mu_{0}(\bot, s) = Z_s \hspace{2cm}\mu_{0}(s_0, s_1) = 0\label{eq:root}\\
        &\mu_{\bot, s}(A) = \begin{cases}
        \frac{1}{Z_s}\mu(s, \{\!\!\{q: (p, q)\in A\}\!\!\})&\forall (p, q)\in A: p = s\\
        0&\text{otherwise}
    \end{cases}\label{eq:conditioned}\\
    &\mu_{s_0, s_1}(A) = \frac{|\{\!\!\{s\in A: s = (s_1, s_0)\}\!\!\}|\cdot\mu_{\bot, s_1}(A)}{\sum\limits_{B\in\text{MultiSet}(S)}|\{\!\!\{s\in B: s = (s_1, s_0)\}\!\!\}|\cdot\mu_{\bot, s_1}(B)}. \label{eq:non-root}
    \end{align}
    for all $s_0, s_1\in S_0$.
\end{theorem}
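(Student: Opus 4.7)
The plan is to reduce $\RCM_t(\mu)$ to a collection of independent configuration and bipartite configuration models, apply the known local GWT limit to each piece, and then read off the combined limit. First I would establish concentration of the type assignments. Because the pairs $(s_i,A_i)$ are i.i.d.\ from $\mu$, the strong law of large numbers gives $|U_s|/t \to Z_s$ in probability, and for every $s \in S_0$ the empirical distribution of $A_i$ over $i \in U_s$ concentrates on $\mu(s,\cdot)/Z_s$. In particular, for each ordered pair $(s_L, s_R)$ with $s_L \ne s_R$, the number of $s_R$-stubs held by vertices of $U_{s_L}$, divided by $t$, converges to $d_\mu \cdot \overline\mu(s_L, s_R)$. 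Involution invariance with finite degree, $\overline\mu(s_L,s_R) = \overline\mu(s_R,s_L)$, then guarantees that the two stub counts used to form the bipartite matching between $U_{s_L}$ and $U_{s_R}$ are asymptotically balanced, so all but an $o(t)$ fraction of stubs are paired.

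Next I would apply local-GWT convergence to each piece. As observed in the paragraph preceding the theorem, conditional on the type assignments the subgraph on $U_s$ is exactly $\mathsf{CM}_{|U_s|}(\nu)$ and, between distinct types $s_L \ne s_R$, it is $\mathsf{BCM}_{|U_{s_L}|+|U_{s_R}|}(\nu_L,\nu_R)$ for the displayed marginals. By \Cref{thm:local-GW} and the analogous local-limit statement for the bipartite configuration model, each such subgraph converges locally in probability to the explicit monochromatic or bichromatic GWT determined by those marginals; Step~1 supplies the required convergence of the (empirical) degree distributions.

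I would then combine the pieces. Because the monochromatic CM on $U_s$ and all bipartite CMs incident to $U_s$ are generated independently given the type assignments, the $k$-ball around a uniformly chosen vertex $v \in U_s$ is, up to $o(1)$ error, the union of independent $k$-balls from each piece, glued at their common root. A uniform vertex has type $s$ with asymptotic probability $Z_s$, which produces $\mu_0(\bot,s) = Z_s$ and $\mu_0(s_0,s) = 0$ in~\eqref{eq:root}. Conditioning on root type $s$, the multiset of neighbours is distributed as $A \sim \mu(s,\cdot)/Z_s$ with each entry $q \in A$ relabelled as $(s,q) \in S$, giving~\eqref{eq:conditioned}. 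For a non-root vertex of actual type $s_1$ reached by traversing a stub from a type-$s_0$ vertex, the standard size-biasing argument for the configuration model, namely that each $s_0$-stub at a type-$s_1$ vertex is equally likely to be the one traversed, yields that its neighbourhood information $A$ is distributed proportionally to its number of $s_0$-stubs, i.e.\ to $|\{\!\!\{s \in A : s = (s_1,s_0)\}\!\!\}|$ after relabelling; this is exactly~\eqref{eq:non-root}.

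The main obstacle is to make the ``independent pieces'' decomposition precise uniformly across the potentially countably infinite type set $S_0$, so that local convergence really holds ball-by-ball. I would handle this by a truncation argument: for fixed $k$ and $\varepsilon > 0$ choose a finite $S_0' \subseteq S_0$ capturing $\mu$-mass within $\varepsilon$ and with $\mathbb{E}[|A| \cdot \mathbbm{1}_{A \text{ touches } S_0 \setminus S_0'}] \le \varepsilon$; by a union bound the $k$-ball of a uniform vertex avoids types outside $S_0'$ with probability $1 - O_k(\varepsilon)$ uniformly in $t$, reducing everything to a finite-type case where the combined CM/BCM limit is immediate. Letting $\varepsilon \to 0$ and $k$ range over $\mathbb{N}$ then gives local convergence to the GWT $W_t$ specified by \eqref{eq:root}--\eqref{eq:non-root}.
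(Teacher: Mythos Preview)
Your proposal is correct and follows essentially the same approach as the paper: decompose $\RCM_t(\mu)$ into independent $\mathsf{CM}$ and $\mathsf{BCM}$ pieces, invoke the known local-GWT limits for those models from \citet{van2024random}, and use independence of the edge sampling procedures to combine them. The paper's own argument is in fact only the brief paragraph preceding the theorem statement; your write-up supplies considerably more detail than the paper does, in particular the explicit size-biasing derivation of~\eqref{eq:non-root} and the truncation to finitely many types to handle countable $S_0$, neither of which the paper spells out.
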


Intuitively, the state-pair $(s_0, s_1)$ of each node is aware of its own state $s_1\in S_0$, but also its parent's state $s_0\in S_0$. Root nodes have no parent~(\ref{eq:root}). Given a root $r$, $\mu_{\bot, r}(A)$ is given by $\mu(s, A)$ conditioned on $s = r$, and requiring each element $(p, q)$ of $A$ to come with the correct parent type $p = r$~(\ref{eq:conditioned}). Finally, $\mu_{s_0, s_1}$ is simply $\mu_{\bot, s_1}$ conditioned on there being a neighbor of type $s_0$~(\ref{eq:non-root}).

With this, we can construct a refined configuration model that is color convergent to any sofic PMF $\nu$  on $\MP_{k}$, and obtain universality of the refined configuration model with respect to MPNN learnability.

\begin{corollary}\label{cor:rcm-ck}
     Suppose $\nu$ is a sofic distribution over $\MP_k$. Let $S = \MP_{k-1}$ and consider the PMF $\mu$ on $S\times\text{MultiSet}(S)$ defined as follows:
\[\mu(s, A) =\begin{cases}
   \nu(T) & \text{if there exists $T\in \MP_{k}$ such that $T|_{k-1} = s$ and $\{\!\!\{T(c)\,|\,c\in N(T)\}\!\!\} = A$}\\
   0 & \text{otherwise}.
\end{cases}\]

Then $\mu$ is involution invariant with finite degree and for $G_t = \RCM_{t}(\mu)$ we have $c_{k,\infty} = \nu$.
\end{corollary}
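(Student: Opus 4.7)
The plan is to proceed in three stages: (i) verify $\mu$ is a well-defined involution-invariant PMF with finite degree; (ii) apply Theorem~\ref{thm:rcm-local-limit} to identify the local limit of $\RCM_t(\mu)$ as an explicit GWT; and (iii) show $\mathbb{P}(\CR{k}(W_k) = T) = \nu(T)$ for each $T \in \MP_k$.

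For step (i), the map $T \mapsto (T|_{k-1}, \{\!\!\{T(c) : c \in N(T)\}\!\!\})$ is injective on $\MP_k$, since the pair determines both the root feature (via $T|_{k-1}$) and all children subtrees. Hence $\mu$ is a well-defined PMF with $\sum_{(s,A)}\mu(s,A) = \sum_{T \in \MP_k}\nu(T) = 1$. Since $\nu$ is sofic, Theorem~\ref{thm:sofic-unimodular} yields $d_\nu < \infty$ and symmetry of $\bar\nu$. Applying the same bijection, direct computation gives $d_\mu = d_\nu$ and $\bar\mu(s_0,s_1) = \bar\nu(s_0,s_1)$, so $\mu$ is involution invariant with finite degree.

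For step (ii), Theorem~\ref{thm:rcm-local-limit} identifies the local limit of $\RCM_t(\mu)$ as the GWT $W_t$ specified by \eqref{eq:root}--\eqref{eq:non-root}. Since $W_t$ is a tree, $\RCM_t(\mu)$ is locally tree-like; combining Proposition~\ref{prop:local-color} with the relation $b_{k,\infty}(T) = c_{k,\infty}(\CR{k}(T))$ for locally tree-like graphs, summing over preimages of $\CR{k}$ yields $c_{k,\infty}(T) = \mathbb{P}(\CR{k}(W_k) = T)$ for $T \in \MP_k$.

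For step (iii), I would establish the following deterministic statement: on $W_k$, if the root $r$ has own-type $s_r \in \MP_{k-1}$ and its GWT-children have own-types forming the multiset $\{\!\!\{q_c\}\!\!\}$, then almost surely $\CR{k}(r) = T^*$, where $T^*$ is the unique element of $\MP_k$ with $T^*|_{k-1} = s_r$ and $\{\!\!\{T^*(c)\}\!\!\} = \{\!\!\{q_c\}\!\!\}$. Given this claim, $\mathbb{P}(\CR{k}(r) = T) = \mathbb{P}(s_r = T|_{k-1},\,\{\!\!\{q_c\}\!\!\} = \{\!\!\{T(c)\}\!\!\}) = \mu(T|_{k-1},\{\!\!\{T(c)\}\!\!\}) = \nu(T)$, completing the proof. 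I would prove the claim by induction down the tree: at each non-root node $v$ of $W_k$, the WL-unfolding integrates a back-edge contribution from $v$'s parent together with $v$'s GWT-children. Proposition~\ref{prop:mp}, applied to $v$'s own-type $q_v \in \MP_{k-1}$, guarantees a distinguished child of $q_v$'s root whose subtree matches the back-edge, while the size-biasing in \eqref{eq:non-root} is designed precisely so that the GWT supplies the remaining child-subtrees with the correct distribution. Carefully matching the back-edge to this distinguished child at every depth is the principal obstacle; once this bookkeeping is in place, the identification closes up inductively and yields the claim.
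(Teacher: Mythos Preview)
Your proposal is correct and the three-stage decomposition is sound, but the route differs from the paper's. The paper never unwinds the explicit GWT of Theorem~\ref{thm:rcm-local-limit}. Instead it invokes a direct corollary of that theorem: with the identity type-to-feature map $x_s=s$, the depth-one color distribution of $\RCM_t(\mu)$ satisfies $c_{1,\infty}(T)=\mu(x_T,\{\!\!\{x_v:v\in N(T)\}\!\!\})$. Concretely, this says that with high probability almost every vertex $v_i$ in the \emph{finite} graph has neighbour \emph{types} forming exactly the assigned multiset $A_i$. Switching to the intended feature map (the root feature of each $s\in\MP_{k-1}$), a short induction on $m\le k-1$ carried out on the finite graph then gives $\CR{m}(v_i)=s_i|_m$ for almost all $v_i$; hence $\CR{k}(v_i)$ is the unique $T\in\MP_k$ with $(T|_{k-1},\{\!\!\{T(c)\}\!\!\})=(s_i,A_i)$, and $c_{k,\infty}=\nu$ follows.

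The two arguments share the same inductive core but place it differently. By staying on the finite graph, the paper sidesteps precisely the back-edge bookkeeping you flag as the principal obstacle in step~(iii): neighbours are just neighbours, with no parent/child split to reconcile. Your route through the limiting tree is conceptually cleaner---local convergence and the color computation are fully decoupled---at the cost of that bookkeeping. One small sharpening of your step~(iii): the key point is not that the GWT ``supplies the remaining child-subtrees with the correct distribution'', but that for \emph{every} multiset $B$ in the support of $\mu_{s_0,s_1}$ the depth-restriction $\{\!\!\{b|_m:b\in B\}\!\!\}$ is already determined by $s_1$ alone; so $\CR{k}(r)=T^\ast$ holds surely, not merely almost surely, once the root and first-generation own-types are fixed. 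Your step~(i) also makes explicit the involution-invariance verification via Theorem~\ref{thm:sofic-unimodular}, which the paper's proof leaves implicit.
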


\begin{corollary}
    Let $G_t$ be a random graph. The following are equivalent:
    \begin{itemize}
        \item $G_t$ satisfies probabilistic consistency of empirical risk with respect to $k$-layer MPNNs
        \item There exists an involution invariant PMF $\mu$ with finite degree such that $\RCM_t(\mu)$ is equivalent to $G_t$ in probability for all $k$-layer MPNNs $f, f_*$. That is, for all $\varepsilon > 0$, as $t\to\infty$,
        \[P(|R_{\text{emp}}(f, \RCM_t(\mu)) - R_{\text{emp}}(f, G_t)|\geq\varepsilon)\to 0.\]
    \end{itemize}
\end{corollary}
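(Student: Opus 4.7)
The plan is to leverage the characterization of probabilistic consistency by $\MP_k$-convergence (Theorem~\ref{thm: learn_color_conv_MPNN}) together with the construction of a matching RCM in Corollary~\ref{cor:rcm-ck}. For the forward direction, assume $G_t$ satisfies probabilistic consistency with respect to $k$-layer MPNNs. Then $G_t$ is $\MP_k$-convergent, so $\nu := c_{k,\infty}$ exists and is sofic by definition. Corollary~\ref{cor:rcm-ck} produces an involution-invariant $\mu$ with finite degree such that $\RCM_t(\mu)$ is also $\MP_k$-convergent to $\nu$. Since the output of any $k$-layer MPNN on a vertex $v$ depends only on $\CR{k}(v)$, for any $k$-layer MPNNs $f, f_*$ we can write
\[R_{\mathrm{emp}}(f, H_t) \;=\; \sum_{T \in \MP_k} \mathbbm{1}_{f(T)\neq f_*(T)}\cdot c_{k,t}^{H_t}(T)\]
for $H_t \in \{G_t,\,\RCM_t(\mu)\}$. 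Pointwise in-probability convergence of both PMFs to the common limit $\nu$, combined with the uniform tail control forced by $\sum_T c_{k,t}(T) = 1$, transfers to this bounded linear functional, so both empirical risks converge in probability to the same value; a triangle inequality closes this direction.

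For the reverse direction, suppose such a $\mu$ exists. Theorem~\ref{thm:rcm-local-limit} gives local convergence of $\RCM_t(\mu)$, which by Proposition~\ref{prop:local-color} entails $\MP_k$-convergence, and hence by Theorem~\ref{thm: learn_color_conv_MPNN} probabilistic consistency of $\RCM_t(\mu)$ with some true risk $R^{\RCM}(f) := \lim_{t \to \infty} \mathbb{E}[R_{\mathrm{emp}}(f, \RCM_t(\mu))]$. The triangle inequality
\[|R_{\mathrm{emp}}(f, G_t) - R(f)| \;\le\; |R_{\mathrm{emp}}(f, G_t) - R_{\mathrm{emp}}(f, \RCM_t(\mu))| + |R_{\mathrm{emp}}(f, \RCM_t(\mu)) - R^{\RCM}(f)| + |R^{\RCM}(f) - R(f)|\]
makes the first two terms vanish in probability by the hypothesis and by consistency of the RCM, respectively. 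It remains to identify $R(f) = R^{\RCM}(f)$: since both empirical risks lie in $[0,1]$ and their difference tends to $0$ in probability, bounded convergence yields that their expectations share a common limit.

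The main obstacle is precisely this final step of the reverse direction, since the hypothesis only supplies equivalence \emph{in probability} of empirical risks, whereas the true risk is defined via a limit of expectations; invoking bounded convergence is what bridges the gap, and is justified because $R_{\mathrm{emp}} \in [0,1]$. A minor subtlety in the forward direction is that $\MP_k$ can be countably infinite, so the pointwise in-probability convergence of $c_{k,t}$ must be upgraded to convergence of the empirical-risk functional, which is handled by using that the $c_{k,t}$ are PMFs summing to $1$ in order to control tail contributions uniformly.
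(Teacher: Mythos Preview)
Your forward direction is essentially the paper's: both pass through Theorem~\ref{thm: learn_color_conv_MPNN} to obtain $c_{k,\infty}$, invoke Corollary~\ref{cor:rcm-ck} to build $\mu$, and then express $R_{\mathrm{emp}}$ as a sum over the disagreement set $D=\{T\in\MP_k: f(T)\neq f_*(T)\}$ so that convergence of both color PMFs to the common $\nu$ forces the empirical risks together.

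Your reverse direction is correct but takes a genuinely different route. The paper does not go through the triangle inequality or bounded convergence at all: instead it specializes the hypothesis to $f\equiv 0$ and $f_*=\mathbbm{1}_{\{\CR{k}(\cdot)=T\}}$, so that $R_{\mathrm{emp}}(f,H_t)=c_{k,t}^{H_t}(T)$ for each fixed $T\in\MP_k$; the hypothesis then reads $c_{k,t}^{G_t}(T)-c_{k,t}^{\RCM}(T)\to 0$ in probability, and since the RCM is $\MP_k$-convergent this immediately gives $\MP_k$-convergence of $G_t$, whence Theorem~\ref{thm: learn_color_conv_MPNN} finishes. This is shorter and sidesteps the issues you flagged yourself: you never need to argue that $R(f)$ exists a priori, nor invoke bounded convergence to align $R(f)$ with $R^{\RCM}(f)$, because you land directly on the characterizing condition. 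Your approach has the mild advantage of being agnostic to the specific form of the characterization, working purely at the level of risks; the paper's approach buys brevity by exploiting that indicator MPNNs recover the color statistics exactly.
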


Note that this holds only for fixed $k$. There may be deep dependencies which our model can not capture. However, for Galton-Watson trees we show that this is not the case:

\begin{theorem}
    The following are equivalent:
    \begin{itemize}
        \item $W_t$ is the local limit of $\RCM_t(\mu)$ for some involution invariant PMF $\mu$ with finite degree.
        \item $W_t$ is a GWT that arises as the local limit of some random graph.
    \end{itemize}
\end{theorem}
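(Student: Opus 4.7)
The plan is to prove the two implications separately. The forward direction is immediate from \Cref{thm:rcm-local-limit}: if $W_t$ is the local limit of $\RCM_t(\mu)$ for some involution invariant PMF $\mu$ of finite degree, then $W_t$ is a GWT by construction, and the random graph $\RCM_t(\mu)$ itself witnesses that $W_t$ arises as a local limit of a random graph.

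For the reverse direction, suppose $W_t$ is a GWT with parameters $(S_0, \mu_0', \{\mu_s'\}_{s\in S_0})$ and feature map $s\mapsto x_s$ that is the local limit of some random graph $G_t$. I would parametrize the refined configuration model over the same type set with the same feature map, setting $\mu(s,A) := \mu_0'(s)\cdot \mu_s'(A)$ as a PMF on $S_0 \times \text{MultiSet}(S_0)$. This choice makes $Z_s = \mu_0'(s)$, so the root distribution $\mu_0(\bot, s) = Z_s$ of the RCM-GWT produced by \Cref{thm:rcm-local-limit} immediately matches the root distribution of $W_t$.

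I then verify the two hypotheses on $\mu$. Its finite degree $d_\mu = \sum_s \mu_0'(s)\cdot \mathbb{E}_{A\sim \mu_s'}[|A|]$ equals the expected root degree of $W_t$, which is finite because random graphs locally convergent to $W_t$ are sparse: combine \Cref{prop:local-color} (local implies color convergence) with the sparsity corollary after \Cref{thm:sofic-unimodular}, then pass to the limit by Fatou. The symmetry of $\bar\mu$ reduces to the statement that the expected number of $(s_0, s_1)$-edges incident to a root of $W_t$ equals the expected number of $(s_1, s_0)$-edges, which is precisely the involution invariance of $W_t$, inherited from its being a local limit of a random graph. The final step is to verify that the GWT $W_t^{\textrm{RCM}}$ obtained by applying \Cref{thm:rcm-local-limit} to $\mu$ agrees with $W_t$ as a distribution over rooted feature-labeled trees; I would establish this by induction on depth, the base case being the matching of root distributions above.

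The main obstacle is the inductive step. The RCM-GWT introduces enriched types $(s_0, s_1) \in (\{\bot\}\cup S_0)\times S_0$ that record the parent's type, and its non-root offspring law $\mu_{s_0, s_1}$ from \Cref{thm:rcm-local-limit} is a size-biasing of $\mu_{\bot, s_1}$ by the count of $(s_1, s_0)$-elements, while the offspring law of $W_t$ at a type-$s_1$ non-root is simply $\mu_{s_1}'$ regardless of parent type. I would show that, once projected back to features, the size-biased law reproduces $\mu_{s_1}'$; this is exactly the identity enforced by unimodularity of $W_t$, i.e., the mass-transport principle applied to a suitable test function along directed $(s_0, s_1)$-edges. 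Once this identity is in place, the induction closes and the feature-labeled tree distributions of $W_t$ and $W_t^{\textrm{RCM}}$ coincide at every depth, concluding the proof.
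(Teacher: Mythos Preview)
The forward direction is fine. The reverse direction has a real gap: your construction $\mu(s,A)=\mu_0'(s)\cdot\mu_s'(A)$ is tied to whichever parametrization $(S_0,\mu_0',\{\mu_s'\})$ the GWT happens to carry, but involution invariance of $W_t$ is a property of the \emph{feature}-labeled random rooted tree, whereas the symmetry of $\bar\mu$ you need is a \emph{type}-level identity, and mass transport only supplies test functions of the visible feature structure, not of latent types. Concretely, take the local limit of a configuration model with a non-Poisson degree law $\nu$: this unimodular GWT has a single feature but cannot be parametrized with a single type (root and non-root offspring laws differ), and in the standard two-type parametrization one has $\mu_0'(\text{non-root})=0$, so your $\mu$ assigns that type no mass and $\bar\mu$ is not symmetric---\Cref{thm:rcm-local-limit} does not even apply. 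Your final claim that the size-biased law reproduces $\mu_{s_1}'$ after projecting to features fails for the same reason: size-biasing $\mu_{s_1}'$ does not in general return $\mu_{s_1}'$, and unimodularity of the feature-labeled tree places no constraint on this type-level relation.

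The paper closes the gap by inserting a reparametrization step before your construction: it shows that any GWT arising as a local limit is equivalent, as a feature-labeled process, to one in a canonical ``simplified unimodular'' form with base types $S_0=\{s:\mu_0'(s)>0\}$ and full type set $(\{\bot\}\cup S_0)\times S_0$ recording the parent's base type, in which the non-root offspring law is \emph{defined} to be the size-biased root law. Unimodularity is invoked precisely to justify that this restricted type set suffices. Your construction, applied to that particular parametrization, is then exactly the paper's argument; so your plan is on the right track but is missing the reduction to a parametrization in which the identities you want hold by construction rather than as consequences of mass transport.
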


\begin{remark}
    Analogously to the configuration and bipartite configuration model, the refined configuration model can be sampled in time proportional to the expected number of stubs. If $d_\mu < \infty$, we can sample $\RCM_t(\mu)$ in linear time $O(t)$ with high probability.
\end{remark}

 \section{Discussion, Limitations and Broader Impact}
We have introduced color convergence and shown that it characterizes learnability by MPNNs in the large graph limit. We have investigated connections between color convergence, local convergence, and Galton–Watson trees. We have introduced the refined configuration model --- a tractable random graph model which is universally expressive with respect to local limit behavior of MPNNs. 

\textbf{Limitations and Future Work.} Although our results fill a significant gap in the understanding of MPNNs on sparse random graphs they are inherently limited by the expressivity of color refinement and cannot represent features used by more expressive machine learning architectures, such as higher-order graph neural networks~\citep{maron2019provably} or models leveraging subgraph information~\citep{bouritsas2022improving}. These limitations could, for instance, be addressed by considering graph limits with respect to higher-dimensional versions of the Weisfeiler–Leman algorithm.
Furthermore, while we have established a criterion for the learnability of MPNNs, we have not established formal guarantees regarding convergence rates or error bounds. This limits our ability to rigorously quantify the approximation quality or sample complexity of learning in this setting. A natural direction for future work is to investigate under what conditions this framework can yield stronger learnability guarantees. The refined configuration model defines a generative model for structured random graphs. Its practical utility as a data model requires further investigation: Does it admit natural learning algorithms tailored to its structure? Additionally, it is worth exploring whether specific restrictions on its parametrization give rise to interesting subclasses.

\textbf{Broader Impact.} Understanding the theoretical foundations of graph neural networks is essential for ensuring their robust and transparent use in high-stakes domains such as drug discovery, recommender systems, and social network analysis. Our framework aims to contribute to the development of principled, interpretable, and responsible machine learning on graphs. 

 \section{Acknowledgements}

{\footnotesize\euflag}\,AP acknowledges the support of the project VASSAL: ``Verification and Analysis for Safety and Security of Applications in Life'' funded by the European Union under Horizon Europe WIDERA Coordination and Support Action/Grant Agreement No. 101160022.
SM acknowledges the support of FWF and ANR project NanOX-ML (6728) 

\newpage

\bibliography{references}
\bibliographystyle{icml2025}

\newpage
\appendix
\section{Color Convergence Proofs}\label{app:sec:mp-convergence}

\begin{definition}\label{app:def:MP}
    The set $\MPT{k}\subseteq\T_k$ of \emph{refinement colors} of depth $k$ comprises the isomorphism classes of rooted trees $T\in\T_k$ that occur as the result of color refinement. That is, trees $T$ such that $T \simeq \CR{k}(v)$ for some graph $G$ and vertex $v\in V(G)$.
\end{definition}
There is a convenient structural characterization of the elements of $\MPT{k}$.
\begin{proposition}\label{app:prop:mp}
A rooted tree $T\in\T_k$ belongs to $\MPT{k}$ if and only if for every node $v \in V(T)$ that is neither the root nor at depth $k$, there exists a child $c$ of $v$ such that $T(c) \simeq T(p)|_d,$
where $p$ is the parent of $v$ and $d$ is the depth of the subtree $T(c)$.
\end{proposition}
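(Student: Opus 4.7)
For the forward direction, suppose $T \simeq \CR{k}(v_0)$ for some graph $G$ and vertex $v_0$, and let $\pi : V(T) \to V(G)$ be the natural projection mapping each tree node to the graph vertex it represents. Take any non-root non-depth-$k$ node $v$ of $T$ at depth $d'$ with parent $p$. Since $\pi(v)$ and $\pi(p)$ are adjacent in $G$ and the children of $v$ in $T$ are indexed (with multiplicity) by the neighbors of $\pi(v)$, there is a child $c$ of $v$ corresponding to the edge back to $\pi(p)$. By construction, $T(c) = \CR{k-d'-1}(\pi(p))$ and $T(p) = \CR{k-d'+1}(\pi(p))$, so $T(c) = T(p)|_{k-d'-1}$. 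Setting $d := k-d'-1 = \text{depth}(T(c))$ yields the required $T(c) \simeq T(p)|_d$.

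For the backward direction, assume $T$ satisfies the structural condition, and let $r$ be its root. For each non-root non-depth-$k$ node $v$, use the hypothesis to fix a back-edge child $c_v$ with $T(c_v) \simeq T(\text{parent}(v))|_{d_v}$ where $d_v := \text{depth}(T(c_v)) = k - \text{depth}(v) - 1$. Let $R := \bigcup_v V(T(c_v))$, and define $G$ as the rooted subtree of $T$ induced on $V(T) \setminus R$, with root $r$. Any ancestor of a kept vertex is kept (else the kept vertex would itself be a back-edge descendant), so $G$ is a well-defined rooted tree.

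To verify $\CR{k}(r) \simeq T$ in $G$, I prove the following stronger claim by induction on $m$: for every $u \in V(G)$ at depth $d$ and every $m$ with $0 \le m \le k-d$, $\CR{m}(u) \simeq T(u)|_m$ in $G$. The target statement is the case $u = r$, $m = k$. The base $m = 0$ is trivial. For $m \ge 1$ and a vertex $u$ of depth $1 \le d \le k-1$, the neighbors of $u$ in $G$ are precisely $\{\text{parent}(u)\} \cup (\text{children}_T(u) \setminus \{c_u\})$; match each non-back-edge kept child with itself, and $\text{parent}(u)$ with $c_u$. For a kept non-back-edge child $c$, the inductive hypothesis gives $\CR{m-1}(c) \simeq T(c)|_{m-1}$. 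For the back-edge match, the structural condition gives $T(c_u) \simeq T(\text{parent}(u))|_{d_u}$; since $m-1 \le k-d-1 = d_u$, truncation yields $T(c_u)|_{m-1} \simeq T(\text{parent}(u))|_{m-1}$, which equals $\CR{m-1}(\text{parent}(u))$ by the inductive hypothesis applied at depth $d-1$. The root case and the leaf case ($d=k$, vacuous since then only $m=0$ applies) are handled analogously.

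The main obstacle is matching the back-edge child in the walk tree of $G$ (which unfolds the adjacency to $\text{parent}(u)$) with the back-edge child $c_u$ in $T$. The structural condition is exactly what makes this match possible, and the depth bound $m \le k - d$ is essential: it guarantees $m - 1 \le d_u$, so the truncation $T(c_u)|_{m-1}$ always collapses to $T(\text{parent}(u))|_{m-1}$. In this way the back-edge descendants removed from $G$ are implicitly reconstructed in the walk tree through the unfolding, and never need to appear as vertices of $G$ themselves.
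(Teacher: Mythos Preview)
Your argument is correct. The forward direction is essentially the paper's idea, phrased directly via the projection $\pi$ rather than by induction on $k$. The backward direction, however, takes a genuinely different route. The paper argues by induction on $k$: at each depth-$1$ child $v$ it deletes the single back-edge subtree $T(c_v)$, invokes the induction hypothesis on the remainder $T_v$ to obtain a witness $T_v'$, and glues the $T_v'$'s to a fresh root. You instead delete \emph{all} back-edge subtrees at once to form the witnessing graph $G$, and then prove $\CR{m}(u)\simeq T(u)|_m$ for every surviving $u$ by induction on the refinement depth $m$. The payoff of your global construction is that the induction runs entirely inside the original tree $T$: you never need the structural condition to be inherited by the pruned subtrees $T_v$. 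The paper's recursion implicitly relies on that inheritance to apply the induction hypothesis, but does not check it, and it is in fact delicate (after removing $T(c_v)$, a depth-$1$ node $w$ of $T_v$ may no longer have a child matching $T_v|_{d}$, since $T_v|_d$ has one fewer child at depth $1$ than $T(v)|_d$).

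One small gap on your side: you assert $d_v=\mathrm{depth}(T(c_v))=k-\mathrm{depth}(v)-1$ without proof, and your truncation step $T(c_u)|_{m-1}\simeq T(\mathrm{parent}(u))|_{m-1}$ needs precisely $m-1\le d_u$ for all $m\le k-d$, i.e.\ $d_u\ge k-d-1$. The missing observation is that the structural hypothesis forces every non-root node at depth $<k$ to have at least one child (its witness), so iterating yields a descendant at depth $k$; hence $T(u)$ has depth exactly $k-\mathrm{depth}(u)$ for every such $u$, and in particular $d_v=k-\mathrm{depth}(v)-1$. With this one line added, your proof is complete.
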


\begin{proof}
    We proceed by induction on $k$. 
    
    The base case $k = 1$ is clear: We have $\CR{1}(T) = T$ for all $T\in\T_1$, so $\MP_{1} = \T_1$. On the other hand, the condition in \Cref{app:prop:mp} becomes trivial since every node is the root or of depth $1$.

    Let's do the induction step in both directions:

    Let $T\in\MPT{k+1}$, that is, $T = \CR{k+1}(r_G)$ for some graph $G$ and vertex $r_G\in V(G)$. By construction, we have $T(c)\in\MPT{k}$ for all $c\in N(r_T)$, where $r_T$ denotes the root of $T$. That is, by induction hypothesis all vertices at depth $k\geq 2$ have a child as desired. Let $v_T$ be a vertex at depth $1$ in $T$. By the definition of color refinement, there is a vertex $v_G\in N(r_G)$ such that $\CR{k}(v_G) = T(v_T)$. Since $v_G$ and $r_G$ are adjacent, $v_T$ has a child $c$ such that $\CR{k-1}(r_G) = T(c)$. That is, $T(c) = T|_{k-1}$ as desired.

    In the other direction, suppose $T\in\T_{k+1}$ with all non-root vertices at depth less than $k+1$ having a child $c$ satisfying $T(c) = T(p)|_d$. We show that there is a rooted tree $T'$ satisfying $\CR{k+1}(T') = T$. Each child $v$ of the root $r_T$ has a child $c_v$ such that $T(c_v) = T|_{k-1}$. For each $v\in N(T)$ consider the subtree $T_v$ of $T(v)$ where $T(c_v)$ has been removed. By induction hypothesis there exists a rooted tree $T_v'$ such that $\CR{k}(T_v') = T_v$. Now construct the tree $T'$ by taking a root node $r$ with $x_r = x_T$ and connecting to it the tree $T_v$ via its root for each $v\in N(T)$. Then we have $\CR{k+1}(T') = T$.
\end{proof}

\begin{definition}
    For any random graph $G_t$ and $t\in \mathbb N$ we define the random PMF $c_{k, t}$ on $\MPT{k}$ by
    \[c_{k, t}(T) = t^{-1}\cdot \left\vert\{v\in V(G_t): \CR{k}(v)\simeq T\}\right\vert.\]
    If $c_{k, t}$ converges in probability as $t\to\infty$, we denote its limit with $c_{k, \infty}$.
If $c_{k, \infty}$ is defined, we call $G_t$ \emph{$\MPT{k}$-convergent}. If $G_t$ is  \emph{$\MPT{k}$-convergent} for all $k\in\mathbb{N}$ we call $G_t$ \emph{color convergent}.
\end{definition}

\subsection{Color Convergence and Generalization Gap in MPNNs}

\begin{example}\label{app:ex:local-conv-cont}
    $G_t$ is, with probability $1/2$, either a set of $t$ isolated vertices or a cycle on $t$ vertices. Consider the node label \(f_*(v) = \mathbbm{1}_{\{d_v=0\}}\), which classifies isolated nodes. Let $f$ denote the constant $0$ classifier. Then $\mathbb{P}(R_{\text{emp}}(f, G_t) = 0) = 1/2$ for $t\in\mathbb N$ but $R(f) = 1/2$.  
\end{example}

\begin{theorem}
\label{app:thm:learn_color_conv_MPNN}
    Let $G_t$ be a random graph. The following are equivalent:
    \begin{itemize}
        \item $G_t$ is $\MPT{k}$-convergent.
        \item $G_t$ satisfies probabilistic consistency of empirical risk with respect to $k$-layer MPNNs.
\end{itemize}
\end{theorem}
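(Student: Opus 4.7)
The proof pivots on the Xu--Morris theorem: both $f(v)$ and $f_*(v)$ are functions of $\CR{k}(v)$, so the disagreement indicator $\mathbbm{1}[f(v)\neq f_*(v)]$ is the indicator of some subset $A\subseteq\MPT{k}$ and
\[R_{\mathrm{emp}}(f, G_t) \;=\; c_{k,t}(A) \;:=\; \sum_{T\in A} c_{k,t}(T).\]
Because the feature space is countable and the aggregation/combination functions are arbitrary, every Boolean function on $\MPT{k}$ arises as such a disagreement indicator; the theorem therefore reduces to showing that $c_{k,t}$ converges in probability as a PMF if and only if $c_{k,t}(A)$ converges in probability to a constant for every $A\subseteq\MPT{k}$. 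For the forward direction, I would fix $A$ and $\varepsilon>0$, choose a finite $E\subseteq\MPT{k}$ with $c_{k,\infty}(E^c) < \varepsilon/3$, and decompose $c_{k,t}(A) = c_{k,t}(A\cap E) + c_{k,t}(A\setminus E)$. A finite union bound over $E$ handles the first summand, while $c_{k,t}(A\setminus E) \leq 1 - c_{k,t}(E) \to c_{k,\infty}(E^c) < \varepsilon/3$ in probability handles the second, yielding $c_{k,t}(A)\to c_{k,\infty}(A)$ in probability; bounded convergence applied to $\E[c_{k,t}(A)]$ identifies the limit with $R(f)$.

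\textbf{Reverse direction.} For each $T\in\MPT{k}$, setting $f_*(v) := \mathbbm{1}[\CR{k}(v) = T]$ and $f\equiv 0$ (both realizable as $k$-layer MPNNs since color refinement is the maximally discriminative MPNN) gives $R_{\mathrm{emp}}(f, G_t) = c_{k,t}(T)$, so probabilistic consistency delivers pointwise convergence $c_{k,t}(T)\to\alpha_T$ in probability. Repeating with $f_*(v) := \mathbbm{1}[\CR{k}(v)\in A]$ gives $c_{k,t}(A)\to\beta_A$ in probability for every $A\subseteq\MPT{k}$, and bounded convergence then shows the averaged PMFs $\mu_t(A) := \E[c_{k,t}(A)]$ also satisfy $\mu_t(A)\to\beta_A$. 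The main obstacle is to upgrade this to the identity $\sum_T\alpha_T = 1$; Fatou only gives $\leq 1$. I would view $\mu_t$ as elements of $\ell^1(\MPT{k})$: pointwise convergence on every subset is exactly weak $\ell^1$-convergence, and Schur's theorem (equivalently Vitali--Hahn--Saks) upgrades this to $\ell^1$-norm convergence, so $\beta$ is a PMF and $\sum_T \alpha_T = \beta(\MPT{k}) = 1$. Combined with pointwise convergence, this yields $\MPT{k}$-convergence.

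\textbf{Main obstacle.} The delicate step is ruling out escape of mass in the reverse direction: pointwise convergence of the singletons $c_{k,t}(T)$ alone does not preclude the $c_{k,t}$ concentrating on ever-larger refinement colors, and the full strength of probabilistic consistency --- control over every Boolean classifier on $\MPT{k}$, not merely indicators of individual colors --- is precisely what Schur's theorem consumes. An elementary alternative would enumerate $\MPT{k}$ and construct an adversarial subset $A$ diagonally, so that $c_{k,t}(A)$ alternates along a subsequence of $t$ where the escaping mass wanders through distinct colors, contradicting consistency directly; the Schur/VHS route is cleaner and bypasses the need for this diagonal construction.
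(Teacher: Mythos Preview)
Your proposal is correct, and the forward direction is essentially the paper's argument made more careful: the paper writes $R_{\mathrm{emp}}(f,G_t)=\sum_{T\in D}c_{k,t}(T)$ and asserts convergence without explicitly handling infinite $D$, whereas your finite-truncation via $E$ closes that gap cleanly.

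The reverse direction is where your route genuinely diverges. The paper splits non-convergence into two cases: (i) some singleton $c_{k,t}(T)$ fails to converge, handled exactly as you do with $f\equiv 0$ and $f_*=\mathbbm{1}_{\{T\}}$; and (ii) pointwise limits $c_T$ exist but $\sum_T c_T = 1-\varepsilon_0 < 1$. For (ii) the paper carries out precisely the ``elementary alternative'' you sketch at the end: it builds an increasing sequence of degree-bounded color sets $A_{b_n}$ so that the escaping $\varepsilon_0$ mass lies in $A_{b_{n+1}}\setminus A_{b_n}$ at time $b_{n+1}$, then takes $f_*$ to be the indicator of the union of every second such annulus, forcing $R_{\mathrm{emp}}$ to oscillate between roughly $0$ and $0.8\varepsilon_0$ along the subsequence $(b_n)$. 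Your route instead passes to the expectations $\mu_t=\E[c_{k,t}]$, observes that probabilistic consistency for every indicator $\mathbbm{1}_A$ gives setwise convergence $\mu_t(A)\to\beta_A$, and then invokes Schur/Vitali--Hahn--Saks to upgrade setwise to $\ell^1$-norm convergence, forcing the pointwise limits to sum to $1$. This is shorter and conceptually cleaner --- it isolates exactly the abstract measure-theoretic content (no mass escape under setwise convergence of probability measures on a countable space) --- at the cost of importing a non-elementary theorem. The paper's explicit construction, by contrast, is self-contained and exhibits a concrete witness $f_*$ to the failure of consistency, which is arguably more informative for the MPNN interpretation.
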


We divide the proof into~\Cref{app:lem:c-conv-emp-risk}, which covers the forward implication, and \Cref{app:lem:non-deterministic} and \Cref{app:lem:mass-escape}, completing the equivalence.

\begin{lemma}\label{app:lem:c-conv-emp-risk}
    Suppose $G_t$ is $\MPT{k}$-convergent. $G_t$ satisfies probabilistic consistency of empirical risk with respect to $k$-layer MPNNs. 
\end{lemma}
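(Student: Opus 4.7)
The plan is to leverage the Xu--Morris theorem stated earlier in the paper to rewrite the empirical risk as a linear functional of the random PMF $c_{k,t}$, then exploit the pointwise convergence $c_{k,t}(T) \to c_{k,\infty}(T)$ via a truncation argument over the countable set $\MPT{k}$. Since both the learner $f$ and the target $f_*$ are expressible by $k$-layer MPNNs, the theorem yields a well-defined function $h \colon \MPT{k} \to \{0,1\}$ with $\mathbbm{1}_{f(v) \neq f_*(v)} = h(\CR{k}(v))$; well-definedness across different graphs is verified by applying the theorem inside the disjoint union of any two graphs containing vertices of equal refinement color. This gives the decomposition
\[R_{\text{emp}}(f, G_t) = \sum_{T \in \MPT{k}} h(T)\, c_{k,t}(T),\]
so the natural candidate limit is $L := \sum_{T \in \MPT{k}} h(T)\, c_{k,\infty}(T) \in [0,1]$.

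To show $R_{\text{emp}}(f, G_t) \to L$ in probability I would truncate: given $\varepsilon > 0$, choose a finite $F \subseteq \MPT{k}$ with $\sum_{T \in F} c_{k,\infty}(T) > 1 - \varepsilon/4$, which is possible because $c_{k,\infty}$ is a PMF on the countable set $\MPT{k}$. On $F$, the pointwise convergence $c_{k,t}(T) \to c_{k,\infty}(T)$ together with a union bound over the finite index set gives convergence in probability of the finite partial sum $\sum_{T \in F} h(T)\, c_{k,t}(T)$ to $\sum_{T \in F} h(T)\, c_{k,\infty}(T)$. The tail is controlled by $\sum_{T \notin F} h(T)\, c_{k,t}(T) \leq 1 - \sum_{T \in F} c_{k,t}(T)$, whose limit in probability is $1 - \sum_{T \in F} c_{k,\infty}(T) < \varepsilon/4$, so both the empirical tail and its limit are small with high probability. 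Combining these via a further union bound yields $P(|R_{\text{emp}}(f, G_t) - L| \geq \varepsilon) \to 0$.

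To conclude, I would identify $L$ with $R(f)$: since $R_{\text{emp}}(f, G_t)$ takes values in $[0,1]$ and converges in probability to the constant $L$, the bounded convergence theorem yields $\mathbb{E}[R_{\text{emp}}(f, G_t)] \to L$, so $R(f) = L$ by definition, and probabilistic consistency follows. The main obstacle is the countable infinitude of $\MPT{k}$, since the hypothesis only provides pointwise convergence of each $c_{k,t}(T)$ rather than convergence of the entire weighted sum. The truncation argument resolves this cleanly precisely because $c_{k,t}$ and $c_{k,\infty}$ are both PMFs, so the mass outside any finite set can be controlled from above by $1$ minus the mass inside that set, a quantity to which pointwise convergence applies.
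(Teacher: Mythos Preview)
Your proof is correct and follows essentially the same approach as the paper: both write the empirical risk as $\sum_{T\in D} c_{k,t}(T)$ over the disagreement set $D=\{T:f(T)\neq f_*(T)\}$ and then pass to the limit. Your version is in fact more complete, since you supply the truncation argument needed to pass from pointwise convergence of $c_{k,t}(T)$ to convergence of the countable sum, and you explicitly identify the limit with $R(f)$ via bounded convergence---steps the paper's proof asserts without detail.
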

\begin{proof}
    Since $G_t$ is $\MPT{k}$-convergent, for every $T\in\MPT{k}$ there exists $c_T\in[0, 1]$ such that for all $\varepsilon$ we have
    \[\mathbb P(|c_{k, t}(T) - c_t|\leq \varepsilon)\to 1\]
    as $t\to\infty$. Let \[D := \{T\in\MPT{k}: f(T)\neq f_*(\MPT{k})\}\]
    denote the disagreement set. Then, for every $\varepsilon > 0$, we have
    \[\mathbb P\left(\left|\sum_{T\in D}(c_{k, t}(T) - c_T)\right| \leq\varepsilon\right)\to 1\]
    and thus
    \[R_{\text{emp}}(f, G_t) = \sum_{T\in D}c_{k, t}(T)\hspace{2cm}R(f) = \sum_{T\in D}c_T\]
    and the empirical risk converges.
\end{proof}

In the other direction we distinguish two ways in which $G_t$ can fail to be $\MPT{k}$-convergent:
\begin{itemize}
    \item \Cref{app:lem:non-deterministic} captures the random graph failing to be sufficiently deterministic in the limit, as in~\Cref{app:ex:local-conv-cont}.
    \item The other mode of failure captures some probability mass escaping to infinity, covered by~\Cref{app:lem:mass-escape}. This happens for example in dense random graphs. ``Tracking'' the mass as it escapes to infinity with the empirical risk turns out to be more technically involved.
\end{itemize} 

\begin{lemma}\label{app:lem:non-deterministic}
    Let $G_t$ be a random graph that. Suppose $c_{k, t}$ does not converge pointwise. That is, there exists $T\in\MPT{k}$ such that for all $c\in [0, 1]$ there exists $\varepsilon > 0$ such that 
    \[\mathbb P(|c_{k, t}(T) - c|\leq \varepsilon)\not\to 1\] as $t\to\infty$. Then $G_t$ is does not satisfy probabilistic consistency of empirical risk with respect to $k$-layer GNNs.
\end{lemma}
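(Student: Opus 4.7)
The plan is to turn the failure of concentration of $c_{k,t}(T)$ at the distinguished color $T\in\MPT{k}$ supplied by the hypothesis into a concrete pair $(f, f_*)$ of $k$-layer MPNNs whose empirical risk refuses to concentrate. I would choose an MPNN $f$ that recognizes exactly the color $T$, namely $f(v) = \mathbbm{1}[\CR{k}(v)\simeq T]$, together with the constant $f_*\equiv 0$. These choices produce
\[
R_{\mathrm{emp}}(f, G_t) \;=\; t^{-1}\,\bigl|\{v\in V(G_t):\CR{k}(v)\simeq T\}\bigr| \;=\; c_{k,t}(T),
\]
so probabilistic consistency for this particular pair is exactly in-probability convergence of $c_{k,t}(T)$ to a constant, which the hypothesis forbids.

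The substantive step is producing $f$. Here I would invoke the expressivity equivalence of $k$-layer MPNNs and $\CR{k}$ \citep{xu19howpowerful, morris2019weisfeiler} recalled in the preliminaries to obtain a $k$-layer MPNN $g$ whose output on $v$ is (an encoding of) $\CR{k}(v)$. Post-composing its final $\mathsf{comb}_{k}$ with the indicator of the encoding of $T$ yields $f$ with the desired behaviour; since the feature space is allowed to be countable and discrete, this post-composition keeps $f$ a bona fide $k$-layer MPNN and requires no extra refinement round. The constant classifier $f_*$ is trivially realized by a $k$-layer MPNN (e.g.\ by taking all $\mathsf{comb}_i, \mathsf{agg}_i$ to return $0$).

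Once $(f, f_*)$ is in place, the contradiction is immediate: were $G_t$ to admit probabilistic consistency with respect to $k$-layer MPNNs, then $R(f) = \lim_t \mathbb{E}[c_{k,t}(T)]$ would exist in $[0,1]$ and $c_{k,t}(T) \to R(f)$ in probability, contradicting the hypothesis instantiated at $c = R(f)$. The only real obstacle I foresee is the definitional subtlety that $\lim_t \mathbb{E}[c_{k,t}(T)]$ might fail to exist, leaving $R(f)$ undefined. I would dispose of this by either reading ``probabilistic consistency'' as automatically falsified when $R(f)$ is undefined, or by using compactness of $[0,1]$ to pass to a subsequence along which $\mathbb{E}[c_{k,t_m}(T)]\to \bar c$ and then applying the hypothesis at $c = \bar c$ to rule out concentration along that subsequence and hence along the full sequence.
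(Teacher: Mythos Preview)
Your proposal is correct and matches the paper's argument: the paper likewise exhibits the indicator of the color $T$ and the constant-$0$ classifier (with the roles of $f$ and $f_*$ swapped, which is immaterial since $R_{\mathrm{emp}}$ is symmetric in the pair), observes $R_{\mathrm{emp}}(f,G_t)=c_{k,t}(T)$, and concludes. Your additional care about realizing the indicator as a $k$-layer MPNN and about the possible non-existence of $R(f)$ goes beyond what the paper spells out, but the core idea is identical.
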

\begin{proof}
    Let $f$ be constant $0$ and 
    \[f_*(v) = \begin{cases}
        1&\CR{k}(v) = T\\
        0&\text{else}
    \end{cases}.\]
    Then $R_{\text{emp}}(f, G_t) = c_{k, t}(T)$, which, by assumption, does not converge.
\end{proof}

\begin{lemma}\label{app:lem:mass-escape}
    Let $G_t$ be a random graph. Suppose $c_{k, t}$ does not converge in probability, but converges pointwise. That is, for every $T\in\MPT{k}$ there exists $c_T$ such that for every $\varepsilon > 0$
    \[\mathbb P(|c_{k, t}(T) - c_T|\leq \varepsilon)\to 1\]as $t\to\infty$. Then $G_t$ is does not satisfy probabilistic consistency of empirical risk with respect to $k$-layer GNNs.
\end{lemma}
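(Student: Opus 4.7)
The plan is to reduce the random statement to a deterministic oscillation construction on the expected PMF and then lift the resulting bad subset of $\MP_k$ to an MPNN pair via the Xu/Morris expressivity theorem.

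First I would set $p_t(T) := \mathbb E[c_{k,t}(T)]$. Bounded convergence upgrades the pointwise-in-probability convergence $c_{k,t}(T) \to c_T$ to pointwise convergence of the deterministic PMFs $p_t(T) \to c_T$, and the hypothesis forces $\delta := 1 - \sum_T c_T > 0$ because $c$ is the only candidate pointwise limit and the paper's definition requires the limit of a convergent sequence of random PMFs to itself be a PMF. A second application of bounded convergence gives: if $c_{k,t}(D) \to L$ in probability for any $D \subseteq \MP_k$, then $p_t(D) \to L$. So it suffices to find a fixed $D \subseteq \MP_k$ for which $\lim_t p_t(D)$ does not exist; for such a $D$, the $k$-layer MPNNs $f \equiv 0$ and $f_*(v) := \mathbbm{1}[\CR{k}(v) \in D]$ (both expressible by the Xu/Morris theorem as arbitrary functions of $\CR{k}(v)$) have disagreement set exactly $D$, so $R_{\mathrm{emp}}(f, G_t) = c_{k,t}(D)$ cannot converge in probability to any value of $R(f)$, breaking consistency.

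To build such $D$, I would enumerate $\MP_k = \{T_1, T_2, \ldots\}$ in decreasing order of $c_{T_i}$ so that $\sum_{i > n} c_{T_i} \to 0$. Then I would inductively choose indices $n_0 < n_1 < K_1 < n_2 < K_2 < \cdots$ and times $t_1 < t_2 < \cdots$ so that at stage $k$: (i) pointwise convergence places $p_{t_k}(\{T_i : i > n_k\})$ within $\delta/16$ of $1 - \sum_{i \leq n_k} c_{T_i} \geq \delta$, giving tail mass $\geq 15\delta/16$; (ii) the PMF tail bound gives $p_{t_k}(\{T_i : i > K_k\}) \leq \delta/16$; and (iii) for every $k' < k$ the fixed finite ``active set'' $E_{k'} := \{T_i : n_{k'} < i \leq K_{k'}\}$ satisfies $p_{t_k}(E_{k'}) \leq 2\eta_{k'}$ with $\eta_{k'} := \sum_{i \in E_{k'}} c_{T_i}$. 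Enforcing $n_{k+1} > K_k$ makes the $E_k$ pairwise disjoint, and (i)+(ii) force $p_{t_k}(E_k) \geq 7\delta/8$. Taking $D := \bigcup_{k \text{ even}} E_k$ and decomposing $p_{t_k}(D) = \sum_{k' \text{ even}} p_{t_k}(E_{k'})$ into the groups $k' = k$, $k' < k$, $k' > k$: the first contributes $\geq 7\delta/8$ exactly when $k$ is even, the second is bounded by $2 \sum_{k' < k} \eta_{k'} \leq 2\sum_{i > n_0} c_{T_i}$ (small by the enumeration), and the third is a disjoint family of subsets of $\{T_i : i > K_k\}$ and so bounded by $\delta/16$ via (ii). Hence $p_{t_k}(D) \geq 7\delta/8$ for even $k$ and $p_{t_k}(D) \leq 3\delta/16$ for odd $k$, so $p_t(D)$ has no limit.

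The main obstacle is the simultaneous bookkeeping in this induction: the enumeration must concentrate the $c$-mass near the beginning so that every future active set $E_{k'}$ has vanishing limit mass $\eta_{k'}$, and at each stage $t_k$ must be chosen large enough to simultaneously satisfy (i), (ii), and the finitely many instances of (iii) for $k' < k$ via pointwise convergence. Once this is arranged, the oscillation of $p_t(D)$ and the resulting failure of consistency follow immediately from the three-way sum decomposition together with the bounded-convergence reduction in the first step.
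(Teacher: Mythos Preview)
Your argument is correct and shares the paper's overall strategy: construct a set $D\subseteq\MP_k$ that alternately captures and misses the escaping mass $\delta$ (the paper's $\varepsilon_0$), then take $f_*=\mathbbm 1_D$ so that $R_{\mathrm{emp}}(f,G_t)=c_{k,t}(D)$ oscillates. The execution differs in two respects. First, you pass at the outset to the deterministic sequence $p_t(T)=\E[c_{k,t}(T)]$ via bounded convergence and do all the bookkeeping on ordinary PMFs; the paper instead carries probabilistic inequalities of the form $\mathbb P(a_{n,t}\geq\cdots)\geq 0.8$ throughout, which makes the constants and the final two-sided oscillation estimate noticeably messier. Second, your exhaustion of $\MP_k$ is by an arbitrary enumeration (sorted by $c_T$), whereas the paper filters by the maximum number of children via the sets $A_n$; both choices serve only to trap the escaping mass in a finite window at each stage, so neither is essential, but your version makes explicit that only the countability of $\MP_k$ is being used. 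One small wrinkle: condition~(iii) as written, $p_{t_k}(E_{k'})\leq 2\eta_{k'}$, is not literally achievable when $\eta_{k'}=0$; replacing it by $p_{t_k}(E_{k'})\leq \eta_{k'}+2^{-k'}\cdot\delta/16$ (or any summable slack) keeps the sum over $k'<k$ bounded by $\sum_{i>n_0}c_{T_i}+\delta/16$ and the final oscillation bounds go through unchanged.
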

\begin{proof}
    Since $c_{k, t}$ converges point-wise to $c_{T} \in [0,1]$ but does not converge in probability, the mapping $T\mapsto c_T$ can not define a PMF. That is, there exists $\varepsilon_0 > 0$ such that 
    \[\sum_{T\in\MPT{k}} c_T = 1 - \varepsilon_0.\]
    Let $A_n\subseteq\MPT{k}$ denote the set of refinement colors containing no node with more than $n$ children. Define
    \[a_{n, t} := \sum_{T\in A_n}c_{k, t}(T).\]
    and 
    \[a_{n, \infty} := \lim_{t\to\infty} a_{n, t} = \sum_{T\in A_n}c_T.\]
    Note that for all $n\in\mathbb N, \varepsilon > 0$ we have
    \[\mathbb P(|a_{n, t} - a_{n, \infty}| \leq\varepsilon)\to 1\]
    as $t\to\infty$, $a_{n, \infty}\to 1-\varepsilon_0$ as $n\to\infty$, and therefore
    \[\mathbb P(|a_{n, t} - (1 - \varepsilon_0)|)\leq \varepsilon)\to 1\]
    as $ n\to\infty, t\to\infty$.  
    Furthermore, for each $t\in\mathbb N$, choose $K_t\in\mathbb N$, such that \[\mathbb P(a_{K_t, t} \geq 1 - 0.1\varepsilon_0) \geq 0.9.\]

    The key idea now is to construct sets $A_{b_n}$ such that $A_{b_{n + 1}}\setminus A_{b_n}$ ``tracks'' the mass $\varepsilon_0$ at time $b_{n + 1}$ as it is escaping to infinity.
    
    Let $N, t_0\in\mathbb N$ such that for all $t\geq t_0$ we have
        \(\mathbb P\left(|a_{N, t} - (1 - \varepsilon_0)|\leq 0.1\varepsilon_0\right) \geq 0.9.\)\\
    Construct the sequence $b_i$ over $\mathbb N$ as follows:
    \begin{itemize}
        \item Let $b_0 = K_{t_0}$.
        \item For $i\geq 1$, take $b_i > b_{i-1}$ such that
        \[\mathbb P(a_{K_{b_{i}}, b_i} - a_{K_{b_{i-1}}, b_i} \geq 0.8\varepsilon_0) \geq 0.8.\]
    \end{itemize}
    Note that choosing such ${b_i}$ is always possible since
    \[\mathbb P(a_{K_{b_{i-1}}, t}\geq 1 - 0.9\varepsilon_0)\to 0\hspace{2cm}\]
    as $t\to\infty$.

    Let $f$ be the constant $0$ function and define $f_*$ as follows:
    \[f_*(v) = \begin{cases}
        1& \exists n\in\mathbb N: \CR{k}(v)\in A_{b_{2n + 1}}\setminus A_{b_{2n}}\\
        0&\text{else}
    \end{cases}\]

    The disagreement set $D$ is
    \[D = \bigcup_{n\in\mathbb N}\left(A_{b_{2n + 1}}\setminus A_{b_{2n}}\right).\]

    Then we have \begin{align*}
        \mathbb P(|R_{\text{emp}}(f, G_{b_{2n+ 1}})|)\geq 0.8\varepsilon_0)&\geq \mathbb P(a_{K_{b_{2n + 1}}, b_{2n+ 1}} - a_{K_{b_{2n}, b_{2n + 1}}}\geq 0.8\varepsilon_0)\geq 0.8\\
    \mathbb P(|R_{\text{emp}}(f, G_{b_{2n+2}})|) \leq 0.3\varepsilon_0)&\geq \mathbb P(a_{N, b_{2n+2}} + (a_{K_{b_{2n + 2}, b_{2n + 2}}} - a_{K_{b_{2n + 1}}, b_{2n + 2}})\geq 1 - 0.3\varepsilon_0)\\&\geq \mathbb P(a_{N, b_{2n + 2}}\geq 1 - 1.1\varepsilon_0 \wedge a_{K_{b_{2n + 2}, b_{2n + 2}}} - a_{K_{b_{2n + 1}}, b_{2n + 2}}\geq 0.8\varepsilon_0)\\&
    \geq \mathbb P(a_{N, b_{2n + 2}}\geq 1 - 1.1\varepsilon_0) + \mathbb P(a_{K_{b_{2n + 2}, b_{2n + 2}}} - a_{K_{b_{2n + 1}}, b_{2n + 2}}\geq 0.8\varepsilon_0) - 1
    \\&\geq 0.7
    \end{align*}
    for all $n\in\mathbb N$. That is, the empirical risk does not converge.
\end{proof}

\Cref{app:cor:learn_color_conv_MPNN} follows immediately from the definitions.

\begin{corollary}
\label{app:cor:learn_color_conv_MPNN}
    Let $G_t$ be a random graph. Then the following are equivalent:
    \begin{itemize}
        \item $G_t$ is color convergent.
        \item $G_t$ satisfies probabilistic consistency of empirical risk with respect to MPNNs.
\end{itemize}
\end{corollary}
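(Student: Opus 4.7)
\textbf{Proof plan for Corollary~\ref{app:cor:learn_color_conv_MPNN}.} The plan is to derive the corollary directly from \Cref{app:thm:learn_color_conv_MPNN} by quantifying the equivalence uniformly over the MPNN depth $k$. Color convergence is defined as $\MPT{k}$-convergence for every $k\in\mathbb N$, and an arbitrary MPNN has some finite number of layers, so both sides of the desired equivalence are really conjunctions indexed by $k$.

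For the forward direction, assume $G_t$ is color convergent. Fix any MPNN $f$; by definition it has some finite depth $k$, and likewise $f_*$ is expressible by an MPNN of some depth, which we may assume without loss of generality to equal $k$ (by padding with identity layers). Since $G_t$ is $\MPT{k}$-convergent by hypothesis, \Cref{app:thm:learn_color_conv_MPNN} yields probabilistic consistency of empirical risk for this specific $f$. Because $f$ was arbitrary, $G_t$ satisfies probabilistic consistency of empirical risk with respect to MPNNs.

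For the converse, assume $G_t$ satisfies probabilistic consistency of empirical risk with respect to MPNNs. Fix $k\in\mathbb N$. Every $k$-layer MPNN is in particular an MPNN, so probabilistic consistency holds with respect to all $k$-layer MPNNs. By the reverse implication of \Cref{app:thm:learn_color_conv_MPNN}, $G_t$ is $\MPT{k}$-convergent. Since $k$ was arbitrary, $G_t$ is color convergent.

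I do not expect a genuine obstacle here: the entire content of the corollary is the per-$k$ equivalence of \Cref{app:thm:learn_color_conv_MPNN}, combined with the observation that both ``color convergence'' and ``probabilistic consistency with respect to MPNNs'' are defined by universally quantifying the corresponding $k$-indexed properties over $k\in\mathbb N$. The only mild subtlety is the matching of depths between the candidate hypothesis $f$ and the target $f_*$ in the definition of probabilistic consistency, which is handled by the standard padding argument noted above.
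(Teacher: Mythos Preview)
Your proposal is correct and matches the paper's approach: the paper simply states that the corollary ``follows immediately from the definitions,'' which is exactly your observation that both sides are universal quantifications over $k$ of the two equivalent properties in \Cref{app:thm:learn_color_conv_MPNN}. You have just spelled out the argument (including the padding step) more explicitly than the paper does.
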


\subsection{Properties of Color Convergent Random Graphs}
\begin{proposition}
    Let $G_t$ be a $\BS_k$-convergent random graph. Then $G_t$ is $\MP_k$-convergent. \end{proposition}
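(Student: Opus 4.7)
The plan is to show that the color refinement output $\CR{k}(v)$ depends only on the rooted $k$-ball around $v$, so that the refinement-color statistics are an aggregation of the ball statistics, and then to pass to the limit by a truncation argument.

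\textbf{Step 1: Color refinement factors through the ball.} First I would argue, by a simple induction on $k$ using the recursive definition of $\CR{k}$, that there is a well-defined map $\Phi\colon \BS_k\to \MP_k$ such that $\CR{k}(v) \simeq \Phi(B_k(v))$ for every graph $G$ and every vertex $v\in V(G)$. The base case is $\CR{0}(v)=x_v$, which is read off from $B_0(v)=\{v\}$; the inductive step uses that $\CR{k+1}(v)$ is determined by $x_v$ and the multiset $\{\!\!\{\CR{k}(w):w\in N(v)\}\!\!\}$, each of whose entries depends only on $B_k(w)\subseteq B_{k+1}(v)$.

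\textbf{Step 2: Candidate limit.} I would define $c_{k,\infty}(T) := \sum_{B\in \Phi^{-1}(T)} b_{k,\infty}(B)$ for $T\in \MP_k$. Since $\{\Phi^{-1}(T)\}_{T\in\MP_k}$ partitions the support of $b_{k,\infty}$ (elements of $\BS_k$ outside the image of $\Phi$ receive mass zero because $b_{k,\infty}$ is only supported on balls that actually appear in some graph, and such balls are in $\Phi^{-1}(\CR{k}(B))$), this is a valid PMF on $\MP_k$. By Step~1, the identity
\[c_{k,t}(T) \;=\; \sum_{B\in \Phi^{-1}(T)} b_{k,t}(B)\]
holds for every $t$.

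\textbf{Step 3: Convergence in probability via truncation.} The main technical point is that $\Phi^{-1}(T)$ may be infinite, so pointwise convergence in probability of the summands does not immediately transfer to convergence of the sum. Fix $\varepsilon>0$ and choose a finite set $F\subseteq \BS_k$ with $\sum_{B\notin F} b_{k,\infty}(B) < \varepsilon/4$. Split
\[c_{k,t}(T)-c_{k,\infty}(T) \;=\; \sum_{B\in \Phi^{-1}(T)\cap F}\!\!\bigl(b_{k,t}(B)-b_{k,\infty}(B)\bigr) \;+\; \sum_{B\in \Phi^{-1}(T)\setminus F} b_{k,t}(B) \;-\; \sum_{B\in \Phi^{-1}(T)\setminus F} b_{k,\infty}(B).\]
The third term is bounded in absolute value by $\varepsilon/4$ by the choice of $F$. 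For the first term, $\BS_k$-convergence gives $b_{k,t}(B)\to b_{k,\infty}(B)$ in probability for each of the finitely many $B\in F$, so a union bound makes the first term smaller than $\varepsilon/2$ with probability tending to $1$. For the second term, I would use that the total mass equals one: $\sum_{B\notin F} b_{k,t}(B) = 1-\sum_{B\in F} b_{k,t}(B)$, and the finite sum on $F$ converges in probability to $\sum_{B\in F} b_{k,\infty}(B) = 1-\sum_{B\notin F} b_{k,\infty}(B) \geq 1-\varepsilon/4$, so the second term is smaller than $\varepsilon/4$ with probability tending to~$1$. Combining the three bounds and taking $t\to\infty$ gives $\mathbb P(|c_{k,t}(T)-c_{k,\infty}(T)|\geq \varepsilon)\to 0$.

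The main obstacle is the tail-handling in Step~3, because without finiteness of $\BS_k$ one cannot simply interchange limit and sum; the truncation argument, driven by the fact that both $b_{k,t}$ and $b_{k,\infty}$ have total mass one, is what makes the proof go through.
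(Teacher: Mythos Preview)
Your proof is correct and follows the same approach as the paper: define the candidate limit $c_{k,\infty}(T)$ as the $b_{k,\infty}$-mass of the preimage $\Phi^{-1}(T)=\{B\in\BS_k:\CR{k}(B)=T\}$ and then pass to the limit. The paper's proof simply writes
\[
\mathbb P\!\left(\Big|\sum_{B\in A_T}(b_{k,t}(B)-b_{k,\infty}(B))\Big|\leq\varepsilon\right)\to 1
\]
without further justification, whereas you supply the missing tail-control argument (Step~3) that makes this limit rigorous when $A_T$ is infinite. Your Step~1, making explicit that $\CR{k}(v)$ depends only on $B_k(v)$, is likewise something the paper uses implicitly via its notation $\CR{k}(B)$. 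In short, same route, but your write-up fills in details the paper leaves to the reader.
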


\begin{proof}
    For $T\in\MPT{k}$ define $A_T := \{B\in\BS_k: \CR{k}(B) = T\}$ and
    \[c_T = \sum_{B\in A_T}b_{k,\infty}(B).\]
    Then $T\mapsto c_T$ defines a PMF and, for every $\varepsilon > 0$, we have
    \[\mathbb P(|c_{k, t}(T) -  c_T| \leq \varepsilon) = \mathbb P\left(\left|\sum_{B\in A_T}(b_{k, t}(B) - b_{k, \infty}(B))\leq \varepsilon\right|\right)\to 1\]
    as $t\to\infty$.
\end{proof}

\begin{proposition}
    Suppose $G_t$ is a locally tree-like and $\MP_k$-convergent. Then $G_t$ is $\BS_k$-convergent and, for $T\in \T_k$, we have
    \[b_{k,\infty}(T) = c_{k,\infty}(\CR{k}(T)).\]
\end{proposition}

\begin{proof}
    Let $\varepsilon, \delta\geq 0$. Since $G_t$ is locally tree-like, there exists $N_{\text{tree}}$ such that for $t\geq N_{\text{tree}}$ we have
    \[\mathbb P\left(t^{-1}\cdot |\{v\in G_t: B_{k}(v)\text{ contains a cycle}\}|\leq\varepsilon/2\right)\geq 1 - \delta/2.\]
    Let $T\in\MPT{k}$.
    Since $G_t$ is $\MPT{k}$ convergent, there exists $N$ such that for $t\geq N$ we have
    \[\mathbb P\left(|c_{k, t}(\CR{k}(T)) - c_{k, \infty}(\CR{k}(T))|\leq\varepsilon/2\right)\geq 1-\delta/2.\]
    Let $t\geq \max(N_{\text{tree}}, N)$. Then we have
    \[\mathbb P(|t^{-1}\cdot |\{v\in G_t: \CR{k}(v) = \CR{k}(T)\wedge B_k(v)\text{ is a tree}\}| - c_{k, \infty}(\CR{k}(T))|\leq\varepsilon)\geq 1-\delta.\]
    Since $\CR{k}(v) = \CR{k}(T)$ together with $B_k(v)\text{ being a tree}$ implies $B_k(v) = T$ we are done. 
\end{proof}

\begin{definition}
     A PMF $\mu$ on $\MPT{k}$ is \emph{sofic} if there exists a random graph $G_t$ such that $\mu = c_{k, \infty}$.
\end{definition}

\begin{definition}
Let $\mu$ be a PMF on $\MP_k$ and define
\[d_\mu := \sum_{T\in\MP_k}d_T\cdot \mu(T).\]
If $d_\mu < \infty$, we define the \emph{edge-type marginal} $\overline\mu$, a PMF on $\MP_{k-1}^2$, as
\[\overline\mu(T_0, T_1) := \frac{1}{d_\mu}\cdot\sum_{\substack{T\in\MP_{k}\\T|_{k-1} = T_0}}\left\vert\left\{c\in N(T): T(c) = T_1\right\}\right\vert\cdot \mu(T).\]
If $d_\mu < \infty$ and $\bar\mu$ is symmetric we call $\mu$ \emph{involution invariant with finite degree}.
\end{definition}

\begin{theorem}\label{app:thm:sofic-unimodular}
    Let $k\geq 2$, $\mu$ a sofic PMF on $\MP_k$. Then $\mu$ is involution invariant with finite degree.
\end{theorem}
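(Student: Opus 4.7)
The plan is to prove the symmetry of $\bar\mu$ via a double-counting identity on directed edges at fixed $t$, then transport the identity through the limit using the convergence $c_{k,t}\to\mu$ in probability together with Scheff\'e's lemma. Fix $t$ and $T_0,T_1\in\MP_{k-1}$, and let $N_{T_0,T_1}(t)$ denote the number of directed edges $(v,w)$ of $G_t$ with $\CR{k-1}(v)=T_0$ and $\CR{k-1}(w)=T_1$. Reversing each undirected edge swaps its endpoint colors, so pointwise on every outcome $N_{T_0,T_1}(t)=N_{T_1,T_0}(t)$. Since for $k\ge 1$ the color $\CR{k}(v)=T$ determines both $\CR{k-1}(v)=T|_{k-1}$ and the multiset $\{T(c):c\in N(T)\}$ of neighbor depth-$(k-1)$ colors, grouping by root color gives
\[\frac{N_{T_0,T_1}(t)}{t}\;=\;\sum_{\substack{T\in \MP_k\\T|_{k-1}=T_0}}\bigl|\{c\in N(T):T(c)=T_1\}\bigr|\cdot c_{k,t}(T)\;=:\;L_{T_0,T_1}(t),\]
yielding the pointwise identity $L_{T_0,T_1}(t)=L_{T_1,T_0}(t)$ for every $t$.

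Next I would pass to the limit. The row sums satisfy $\sum_{T_1}L_{T_0,T_1}(t)=d_{T_0}\mu'_t(T_0)$, where $\mu'_t(T_0):=\sum_{T:T|_{k-1}=T_0}c_{k,t}(T)$ is the push-forward of $c_{k,t}$ to $\MP_{k-1}$. Since $c_{k,t}\to\mu$ in probability and both are PMFs of total mass one, Scheff\'e's lemma gives total-variation convergence in probability, hence $\mu'_t(T_0)\to\mu'(T_0)$ in probability, so the row sum converges in probability to the finite quantity $d_{T_0}\mu'(T_0)$. For each finite $F\subseteq\MP_k$ the $F$-truncated sum converges in probability to its $\mu$-analogue by continuity of finite sums, so Fatou yields $\liminf_t L_{T_0,T_1}(t)\ge A_{T_0,T_1}:=\sum_{T:T|_{k-1}=T_0}|\{c:T(c)=T_1\}|\mu(T)$ in probability, while $\sum_{T_1}A_{T_0,T_1}=d_{T_0}\mu'(T_0)$. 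A standard argument---a non-negative sequence of random variables whose total sum converges in probability to a finite limit matching the sum of its in-probability liminfs must converge entry-wise in probability---then upgrades this to $L_{T_0,T_1}(t)\to A_{T_0,T_1}$ in probability. The pointwise symmetry $L_{T_0,T_1}(t)=L_{T_1,T_0}(t)$ now forces $A_{T_0,T_1}=A_{T_1,T_0}$; unfolding definitions gives $A_{T_0,T_1}=d_\mu\bar\mu(T_0,T_1)$, proving symmetry of $\bar\mu$ conditional on $d_\mu<\infty$.

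The main obstacle is establishing $d_\mu<\infty$. Summing $\sum_{T_1}A_{T_0,T_1}=d_{T_0}\mu'(T_0)$ over $T_0$ gives $d_\mu=\sum_{T_0}d_{T_0}\mu'(T_0)=d_{\mu'}$, reducing finite expected degree at depth $k$ to the same claim at depth $k-1$. Since $\mu'$ is itself sofic on $\MP_{k-1}$ (being the limit of $c_{k-1,t}$, the push-forward of $c_{k,t}$), for $k\ge 3$ this closes by induction on $k$ via the theorem at depth $k-1$. The base case $k=2$ is the technical heart: one must show that the limiting degree PMF $\mu'$ on $\MP_1$ has finite mean. The plan is to exploit the extra depth-$2$ information that $\CR{2}(v)$ also records the neighbor-degree multiset of $v$: if $\sum_n n\mu'(n)=\infty$, the size-biased neighbor-degree distribution cannot stabilize in $t$, so a typical vertex's depth-$2$ color would have to wander with $t$, contradicting the in-probability convergence $c_{2,t}\to\mu$.
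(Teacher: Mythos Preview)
Your symmetry argument is the paper's: double-count directed edges to obtain the pointwise identity at each $t$, then pass to the limit. The paper is terser, invoking only the convergence $c_{k,t}\to\mu$ and the bound $|\{c\in N(T):T(c)=T_1\}|\le d_{T_0}$ for $T|_{k-1}=T_0$; your Fatou-plus-row-sum machinery is a more detailed rendering of the same step.

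The difference is in $d_\mu<\infty$. Your induction is superfluous---you already observe $d_\mu=d_{\mu'}$, so every depth reduces to the base case immediately---and the base case is only sketched. The paper runs your size-bias idea directly at depth $k$ and makes precise the step your sketch is missing. Define the \emph{excess} $c'_{k,t}$ on $\MP_{k-1}$ by sampling $T\sim c_{k,t}$, picking a child $c$ of the root uniformly, and returning $T(c)$; this satisfies the identity $d_{c_{k,t}}\,c'_{k,t}(T)=d_T\,c_{k-1,t}(T)$. The key observation is that $c'_{k,t}$ \emph{always} converges: written as $c'_{k,t}(T)=\sum_{T'}\tfrac{|\{c:T'(c)=T\}|}{d_{T'}}\,c_{k,t}(T')$ it has coefficients bounded by $1$, so convergence of $c_{k,t}$ to $\mu$ forces $c'_{k,t}(T)\to c'_{k,\infty}(T)$ with $c'_{k,\infty}$ a genuine PMF. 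The identity then gives $d_{c_{k,t}}=d_T\,c_{k-1,t}(T)/c'_{k,t}(T)$ bounded in probability for any $T$ with $c'_{k,\infty}(T)>0$, and Fatou delivers $d_\mu<\infty$. So your intuition ``the size-biased distribution cannot stabilize'' has the logic slightly inverted: it \emph{does} stabilize, unconditionally, and that is what forces the mean degree to be finite.
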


\begin{proof}
    Consider a random graph $G_t$ such that $c_{k,\infty} = \mu$.

    Define the \emph{excess} $c'_{k, t}$ as the PMF on $\MPT{k-1}$ of the following sampling process:
    \begin{itemize}
        \item Sample $T\sim c_{k, t}$.
        \item Uniformly at random chose a node $c$ among the children of the root.
        \item Return $T(c)$.
    \end{itemize}
   By construction, $c'_{k, t}(T)$ is proportional to $d_T\cdot c_{k-1, t}(T)$, More concretely, we have\[d_{c_{k, t}} \cdot c'_{k, t}(T) = d_{T}\cdot c_{k-1, t}(T)\]
    for all $T\in\MPT{k-1}$. Furthermore, $c_{k,\infty}'$ is a well-defined PMF and we have $c_{k, t}'(T)\to c_{k,\infty}'(T)$ as $t\to\infty$. It follows that
    \[d_{c_{k, \infty}} \cdot c'_{k, \infty}(T) = d_{T}\cdot c_{k-1, \infty}(T)\]for all $T\in\MPT{k-1}$. That is, $d_{\mu} = d_{c_{k, \infty}} < \infty$.
    
    The symmetry of $\overline\mu$ then follows from the symmetry of $\overline{c}_{k, t}$, the convergence of $c_{k, t}$ to $\mu$, and the bound \[\left\vert\left\{c\in N(T): T(c) = T_1\right\}\right\vert \leq d_{T_0}\] for $T\in\MPT{k}$ with $T|_{k-1} = T_0$.
\end{proof}

\begin{corollary}
    Let $k\geq 2$ and suppose $G_t$ is $\MPT{k}$-convergent. Then
    \(\mathbb E[|E(G_t)|]\in O(t)\).
\end{corollary}

\begin{proof}
    We have $\mathbb E[|E(G_t)|] = t\cdot \frac{d_{c_{k, t}}}{2}$ and $d_{c_{k, t}}\in O(1)$.
\end{proof} \newpage
\section{Refined Configuration Model Proofs}\label{app:sec:refined-configuration-model}

\begin{definition}
    The \emph{refined configuration model} $\RCM_t(\mu)$ is parametrized by:
    \begin{itemize}
        \item a finite or countable set of types $S$, with a type-to-feature mapping $s\to x_s$,
        \item a PMF $\mu$ over $S\times \text{Multiset}(S)$, the product of types and finite multisets of types.
    \end{itemize}
    $\RCM_t(\mu)$ is defined on $\{v_i\}_{i\in[t]}$ as follows:

    \begin{itemize}
        \item For each node $v_i$ assign a type-multiset pair $(s_i, A_i)\sim \mu$ independently at random. $s_i$ determines the type of $v_i$, while $A_i$ determines the types of nodes $v_i$ may be connected to.
        \item For each type $s\in S$, we independently generate a configuration model on the vertices of type $s$:
        \begin{itemize}
            \item Each vertex $v_i$ with $s_i = s$ is given a stub for each occurrence of $s$ in $A_i$. The stubs are paired uniformly at random to form edges, until there are $0$ or $1$ stubs left.
        \end{itemize}
        \item For distinct types $s_L\neq s_R$, we independently generate a bipartite configuration model on the vertices of type $s_L$ and $s_R$:
        \begin{itemize}
            \item Each vertex $v_i$ with $s_i = s_L$ is assigned to the set $L$ of left nodes. Each vertex $v_i$ with $s_i = s_R$ is assigned to the set $R$ if right nodes.
            \item Each $v_i\in L$ is given a stub for each occurrence of $s_R$ in $A_i$. Each $v_i\in R$ is given a stub for each occurrence of $s_L$ in $A_i$. Then the left stubs are matched uniformly at random with the right stubs to form edges, until there are no more stubs left in $L$ or $R$.
        \end{itemize}
    \end{itemize}
\end{definition}

Our model is essentially not more complex than a combination of configuration and bipartite configuration models, which are well-behaved under the conditions given in~\Cref{app:def:rcm-involution-invariance}.

\begin{definition}\label{app:def:rcm-involution-invariance}
    Let $\mu$ be a PMF on $S\times\text{MultiSet}(S)$ and define
    \[d_\mu := \sum_{s\in S}\sum_{A\in\text{MultiSet}(S)} |A|\cdot \mu(s, A).\]
    If $d_\mu < \infty$, we define the \emph{edge-type marginal} $\bar\mu$, a PMF on $S^2$, as
    \[\bar\mu(s_0, s_1) := \frac{1}{d_\mu}\cdot\sum_{A\in\text{MultiSet}(S)}|\{\!\!\{a\in A: a = s_1\}\!\!\}|\cdot \mu(s_0, A).\]
    If $d_\mu < \infty$ and $\bar\mu$ is symmetric we call $\mu$ \emph{involution invariant with finite degree}.
\end{definition}

\Cref{app:def:rcm-involution-invariance} ensures that, for involution invariant $\mu$ with finite degree, the subgraph of $\RCM_t(\mu)$ spanned by the nodes $U_s$ of a given type $s\in S$ is equal to $\mathsf{CM}(\nu)$, where
\[\nu(n) = \frac{1}{Z_s}{\sum\limits_{\substack{A\in\text{MultiSet}(S)\\\{\!\!\{s'\in A: s' = s\}\!\!\} = n}}\mu(s, A)}\hspace{2cm}Z_s := {\sum\limits_{A\in\text{MultiSet}(S)}\mu(s, A)}\]
and the bipartite subgraph of $\RCM_t(\mu)$ between nodes $U_{s_L}$ and $U_{s_R}$ for distinct types $s_L, s_R\in S$ is equal to $\mathsf{BCM}(\nu_L, \nu_R)$ with
\[\nu_L(n) = \frac{1}{Z_{s_L}}\sum\limits_{\substack{A\in\text{MultiSet}(S)\\\{\!\!\{s'\in A: s' = s_R\}\!\!\} = n}}\mu(s_L, A)\hspace{2cm}\nu_R(n) = \frac{1}{Z_{s_R}}{\sum\limits_{\substack{A\in\text{MultiSet}(S)\\\{\!\!\{s'\in A: s' = s_L\}\!\!\} = n}}\mu(s_R, A)}\] Given the corresponding results for these models given e.g. in~\citet{van2024random}, and due to the independence of the edge sampling procedures, it follows that $\RCM_t(\mu)$ converges locally to the following Galton-Watson tree:

\begin{theorem}\label{app:thm:rcm-local-limit}
    Suppose $G_t = \RCM_t(\mu)$ parametrized by types $S_0$, type-to-feature mapping $s\to x_s$ and is involution invariant $\mu$ with finite degree. Then $G_t$ converges locally to the GWT $W_t$ parametrized by type set $S = (\{\bot\}\cup S_0)\times S_0$, type-to-feature mapping $(s_0, s_1)\mapsto x_{s_1}$ and
    \begin{align}
        &\mu_{0}(\bot, s) = Z_s \hspace{2cm}\mu_{0}(s_0, s_1) = 0\label{app:eq:root}\\
        &\mu_{\bot, s}(A) = \begin{cases}
        \frac{1}{Z_s}\mu(s, \{\!\!\{q: (p, q)\in A\}\!\!\})&\forall (p, q)\in A: p = s\\
        0&\text{otherwise}
    \end{cases}\label{app:eq:conditioned}\\
    &\mu_{s_0, s_1}(A) = \frac{|\{\!\!\{s\in A: s = (s_1, s_0)\}\!\!\}|\cdot\mu_{\bot, s_1}(A)}{\sum\limits_{B\in\text{MultiSet}(S)}|\{\!\!\{s\in B: s = (s_1, s_0)\}\!\!\}|\cdot\mu_{\bot, s_1}(B)}. \label{app:eq:non-root}
    \end{align}
    for all $s_0, s_1\in S_0$.
\end{theorem}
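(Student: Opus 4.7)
The plan is to reduce the claim to the known local limit theorems for the configuration and bipartite configuration models (\Cref{thm:local-GW} and its bipartite analogue), using the key observation that $\RCM_t(\mu)$ decomposes into a collection of independent such models across types, already flagged in the paragraph preceding the theorem. I would proceed in three stages.

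First, I would condition on the full type-multiset assignment $(s_i, A_i)_{i\in[t]}$. By the strong law of large numbers, as $t\to\infty$, the fraction $|U_s|/t$ converges in probability to $Z_s$, and the empirical stub-count distributions within each $U_s$ and between each pair $U_{s_L}, U_{s_R}$ converge in probability to $\nu_s$ and $(\nu_L,\nu_R)$, respectively. Crucially, the symmetry hypothesis $\bar\mu(s_L,s_R)=\bar\mu(s_R,s_L)$ forces the total number of $s_R$-targeted stubs emanating from $U_{s_L}$ to asymptotically equal the total number of $s_L$-targeted stubs from $U_{s_R}$; otherwise, a positive fraction of stubs would remain unmatched and the realized bipartite subgraph would deviate from $\mathsf{BCM}(\nu_L,\nu_R)$. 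Together with $d_\mu<\infty$, this also rules out an exploding fraction of unused stubs in the monochromatic pieces.

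Second, I would invoke the standard local limit for $\mathsf{CM}$ on each monochromatic piece and its bipartite analogue on each cross-type piece. Given independence of these pieces conditional on the type assignment, the joint local limit at a uniformly chosen root $r\in V(G_t)$ is obtained by superimposing the individual branching processes at the root's type class. This is precisely what the multi-type GWT with state space $(\{\bot\}\cup S_0)\times S_0$ encodes: the root has parent-coordinate $\bot$ and type-coordinate $s$ with probability $Z_s$, matching \eqref{eq:root}, and its complete type-multiset of stubs is distributed as $\mu(s,\cdot)/Z_s$, matching \eqref{eq:conditioned} once each stub is decorated with its own type as the parent coordinate of the child it will reach. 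For non-root nodes, the standard size-biased exploration formula for $\mathsf{CM}$ (and $\mathsf{BCM}$) says that when we arrive at a type-$s_1$ vertex via an edge from a type-$s_0$ vertex, the multiset of its remaining stubs is weighted by the number of available $s_0$-targeted stubs on the arriving vertex, which gives exactly \eqref{eq:non-root}. To close the local convergence argument, one verifies that $B_k(r)$ is asymptotically tree-like: since $d_\mu<\infty$, the expected size of $B_k(r)$ is $O(d_\mu^k)$, and the probability that exploring the next stub loops back to the already-explored region is $O(1/t)$ per step, so cycles in $B_k(r)$ vanish in probability.

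The main obstacle will be the unified bookkeeping in \eqref{eq:non-root}. Same-type neighbors are supplied by $\mathsf{CM}$ on $U_s$ with its own standard size-biasing of $\nu_s$, while cross-type neighbors come from the $\mathsf{BCM}$ pieces, each with its own size-biasing of $\nu_L$ or $\nu_R$; one must verify that all of these piecewise prescriptions collapse into the single formula \eqref{eq:non-root} indexed by the ordered pair $(s_0,s_1)$. This is a short but careful computation using the definitions of $\nu_s, \nu_L, \nu_R$ and the fact that $|\{\!\!\{s\in A:s=(s_1,s_0)\}\!\!\}|$ is exactly the number of type-$s_0$ stubs on a type-$s_1$ vertex with multiset $A$. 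A secondary subtlety is that the size-biasing consumes one stub for the arrival edge; this is automatically accounted for by the parent coordinate in the $(s_0,s_1)$ type, so no extra correction appears in \eqref{eq:non-root}.
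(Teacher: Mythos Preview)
Your proposal is correct and follows essentially the same approach as the paper: decompose $\RCM_t(\mu)$ into independent configuration and bipartite configuration models across types, then invoke the known local limit results for each piece. In fact, the paper does not give a detailed proof of this theorem at all---it simply asserts, in the paragraph preceding the statement, that the result follows from the decomposition into $\mathsf{CM}$ and $\mathsf{BCM}$ pieces together with the independence of the edge sampling procedures and the results in \citet{van2024random}; your three-stage plan and your bookkeeping discussion around \eqref{eq:non-root} are already more explicit than what appears in the paper.
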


Intuitively, the state-pair $(s_0, s_1)$ of each node is aware of its own state $s_1\in S_0$, but also its parent's state $s_0\in S_0$. Root nodes have no parent~(\ref{app:eq:root}). Given a root $r$, $\mu_{\bot, r}(A)$ is given by $\mu(s, A)$ conditioned on $s = r$, and each element $(p, q)$ of $A$ has to come with the correct parent node $p = r$~(\ref{app:eq:conditioned}). Finally, $\mu_{s_0, s_1}$ is simply $\mu_{\bot, s_1}$ conditioned on there being a neighbor of type $s_0$~(\ref{app:eq:non-root}). For a more detailed intuition refer to~\Cref{app:def:su-GWT}. In fact, we shall see in~\Cref{app:thm:RCM-GWT} that every GWT that arises as the local limit of a random graph can be represented in this way.

\Cref{app:cor:involution-invariant-RCM} is a direct consequences of~\Cref{app:thm:rcm-local-limit}.

\begin{corollary}\label{app:cor:involution-invariant-RCM}
    Let $G_t = \RCM_t(\mu)$, $S = X$ and $x_s = s$. If $\mu$ is involution invariant with finite degree,
    \begin{itemize}
        \item $G_t$ is locally tree-like.
        \item $c_{1, \infty}(T) = \mu(x_T, \{\!\!\{x_v: v\in N(T)\}\!\!\})$ for all $T\in\MPT{1}$.
    \end{itemize}
\end{corollary}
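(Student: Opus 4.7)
The plan is to deduce both conclusions directly from Theorem~\ref{app:thm:rcm-local-limit}, which identifies the local limit of $G_t = \RCM_t(\mu)$ with an explicit Galton--Watson tree $W_t$. Since $W_t$ is by construction almost surely a tree, the locally tree-like property will transfer from $W_t$ back to $G_t$ via the convergence $b_{k,t}\to b_{k,\infty}$ in probability. The formula for $c_{1,\infty}$ will then follow by computing $\mathbb{P}(W_1 \simeq T)$ explicitly from the parametrization (\ref{app:eq:root})--(\ref{app:eq:non-root}).

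For the locally tree-like statement, I would fix $k\in\mathbb{N}$ and $\varepsilon>0$ and rewrite the fraction of nodes whose $k$-ball contains a cycle as $1 - \sum_{B\in \T_k\cap\BS_k} b_{k,t}(B)$. Because $b_{k,\infty}$ matches the law of $W_k$ and is supported on trees, this limiting mass equals $1$. I would then pick a finite subset $F\subset\T_k$ with $\sum_{B\in F} b_{k,\infty}(B)\geq 1-\varepsilon/2$, apply pointwise convergence in probability on each element of $F$, and use $\sum_{B} b_{k,t}(B)=1$ to bound the complementary mass --- in particular the mass sitting on cycle-containing balls --- by $\varepsilon$ with high probability.

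For the formula, I would use that $\MPT{1}=\T_1$ (base case of Proposition~\ref{app:prop:mp}) and observe that once $G_t$ is locally tree-like, $\CR{1}(v)$ agrees with $B_1(v)$ viewed as a rooted tree for all but a vanishing fraction of vertices: any discrepancy requires a loop at $v$, a multi-edge incident to $v$, or an edge between two neighbors of $v$, all of which are ruled out with probability tending to $1$ by the first claim. Hence $c_{1,\infty}(T)=b_{1,\infty}(T)=\mathbb{P}(W_1\simeq T)$ for every $T\in\MPT{1}$. Computing the right-hand side from (\ref{app:eq:root})--(\ref{app:eq:conditioned}): the root of $W_1$ has type $(\bot,s)$ with probability $Z_s$, and conditional on this, its children multiset takes the form $A=\{\!\!\{(s,q_i)\}\!\!\}$ with probability $\frac{1}{Z_s}\mu(s,\{\!\!\{q_i\}\!\!\})$. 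The factors of $Z_s$ cancel, and since $x_s=s$ under the hypothesis $S=X$, the answer simplifies to $\mu(x_T,\{\!\!\{x_v: v\in N(T)\}\!\!\})$, as desired.

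The main obstacle I anticipate is the tightness step: local convergence in probability as defined in the paper is purely pointwise, so summing $b_{k,t}$ over the infinite set of tree-shaped balls requires the finite-truncation argument outlined above rather than a naive interchange of limit and sum. The remaining computation is just unraveling the nested conditioning in $\mu_{\bot,s}$ and checking that the $Z_s$ normalizer introduced by the root-type distribution $\mu_0(\bot,\cdot)$ cancels cleanly.
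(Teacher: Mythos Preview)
Your proposal is correct and follows the same approach as the paper: both deduce the corollary from Theorem~\ref{app:thm:rcm-local-limit}, using that the limit is a GWT (hence $G_t$ is locally tree-like) and computing $\mathbb{P}(W_1\simeq T)=\mu_0(\bot,x_T)\cdot\mu_{\bot,x_T}(\{\!\!\{(x_T,x_v):v\in N(T)\}\!\!\})$ from (\ref{app:eq:root})--(\ref{app:eq:conditioned}) with the $Z_s$ factors cancelling. The paper's proof is terser---it declares the tree-like part ``clear'' and writes the formula in one line---while your truncation argument and the passage from $b_{1,\infty}$ to $c_{1,\infty}$ are accurate elaborations of the same route.
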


\begin{proof}
    $G_t$ being tree-like is clear. Furthermore, we have
    \[c_{1, \infty}(T) = \mu_0(\bot, x_T)\cdot \mu_{\bot, x_T}(\{\!\!\{(x_T, x_v) : v\in N(T)\}\!\!\}) = \mu(x_T, \{\!\!\{x_v: v\in N(T)\}\!\!\})\]
    for $T\in\MPT{k}$.
\end{proof}

\begin{corollary}\label{app:cor:rcm-ck}
     Suppose $\nu$ is a sofic distribution over $\MP_k$. Then there is a refined configuration model $G_t = \RCM_t(\mu)$ with types $S = \MPT{k-1}$ such that $c_{k,\infty} = \nu$.
\end{corollary}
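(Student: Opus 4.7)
The plan is to realize the explicit construction from Corollary~\ref{cor:rcm-ck}: take $S = \MPT{k-1}$ with the type-to-feature map sending each refinement color $s \in \MPT{k-1}$ (viewed as a rooted tree) to its root feature, and set $\mu(s,A) = \nu(T)$ when $(s,A) = (T|_{k-1}, \{\!\!\{T(c) : c \in N(T)\}\!\!\})$ for some $T \in \MPT{k}$, zero otherwise. The first step is to observe that the map $\phi : T \mapsto (T|_{k-1}, \{\!\!\{T(c) : c \in N(T)\}\!\!\})$ is injective on $\MPT{k}$, since a rooted tree is determined by its root feature (recoverable from $T|_{k-1}$) together with the multiset of subtrees at its root's children. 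Hence $\mu$ is a well-defined PMF with $\sum_{(s,A)} \mu(s,A) = \sum_{T \in \MPT{k}} \nu(T) = 1$.

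Next, I would verify that $\mu$ is involution invariant with finite degree. Soficity of $\nu$ together with Theorem~\ref{app:thm:sofic-unimodular} gives that $\nu$ is itself involution invariant with finite degree. Reindexing the defining sums of $d_\mu$ and $\bar\mu$ along $\phi$ yields $d_\mu = d_\nu < \infty$ and $\bar\mu(s_0,s_1) = \bar\nu(s_0,s_1)$ for all $s_0, s_1 \in \MPT{k-1}$, so $\bar\mu$ inherits symmetry from $\bar\nu$. Theorem~\ref{app:thm:rcm-local-limit} then yields an explicit GWT local limit $W_t$ for $\RCM_t(\mu)$, and since this limit is a tree the RCM is locally tree-like (as in Corollary~\ref{app:cor:involution-invariant-RCM}).

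Finally, I would identify $c_{k,\infty}$ with $\nu$. By Theorem~\ref{app:thm:rcm-local-limit}, a uniformly random root in the GWT has type $s$ with marginal $Z_s = \sum_{T : T|_{k-1} = s} \nu(T)$ and, conditioned on $s$, a children-type multiset $A$ with probability $\mu(s,A)/Z_s$; under $\phi$ the pair $(s,A)$ is distributed as $\phi(T)$ for $T \sim \nu$. I would then induct on $j \leq k-1$ to show that the rooted-subtree color refinement of depth $j$ at any node of type $s_v$ in the GWT equals $s_v|_j$, using the key identity $s_v(c') = T(c')|_{k-2}$ (for the $T$ with $\phi(T) = (s_v, A)$), so that truncations $\{\!\!\{T(c)|_{j-1}\}\!\!\}$ of the children-type multiset match the truncations $\{\!\!\{s_v(c')|_{j-1}\}\!\!\}$ of the corresponding child subtrees of $s_v$. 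Combined with local tree-likeness and the injectivity of $\CR{k}$ on $\T_k$, this yields $c_{k,\infty}(T') = \nu(T')$ for every $T' \in \MPT{k}$.

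The main obstacle is the last identification step. Refinement colors in $\MPT{k}$ satisfy a structural self-reference condition (Proposition~\ref{app:prop:mp}), and depth-$k$ color refinement at a root in the RCM unrolls the ball with the matching back-references. Making precise that the parent-dependent size-biasing $\mu_{s_0, s_1}$ from Theorem~\ref{app:thm:rcm-local-limit} --- carrying the conditioning factor $|\{\!\!\{s \in A : s = (s_1, s_0)\}\!\!\}|$ --- correctly reproduces these "up-the-tree" subtrees in the unrolled depth-$k$ color, rather than only the marginal child-multiset, is the technically delicate part of the argument, requiring careful bookkeeping between the GWT's size-biasing and the structural characterization of $\MPT{k}$-colors.
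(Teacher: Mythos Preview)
Your construction of $\mu$ matches the paper's exactly, and your verification of involution invariance via Theorem~\ref{app:thm:sofic-unimodular} is correct (the paper leaves this implicit). Where the two approaches diverge is in identifying $c_{k,\infty}$ with $\nu$.

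You propose to pass through the explicit GWT local limit of Theorem~\ref{app:thm:rcm-local-limit} and carry out the induction there, tracking the size-biased offspring distributions $\mu_{s_0,s_1}$ at non-root nodes. This can be made to work, but as you recognize, it imports the full bookkeeping of parent-conditioning and the structural self-reference of Proposition~\ref{app:prop:mp}. The paper sidesteps all of this: it invokes Corollary~\ref{app:cor:involution-invariant-RCM} (with the identity type-to-feature map $x_s=s$) to obtain directly that, with high probability, a $1-\varepsilon$ fraction of vertices $v_i$ in $G_t$ satisfy $\{\!\!\{s_{v_j}:v_j\in N(v_i)\}\!\!\}=A_i$. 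Working in the finite graph $G_t$ rather than the GWT, one then argues by induction on $j\le k-1$ that $\CR{j}(v)=s_v|_j$ for almost all $v$: the inductive step only needs that the multiset of neighbor types equals $A_v$, and the truncation $\{\!\!\{T(c)|_j\}\!\!\}$ for $j\le k-2$ depends only on $s_v=T|_{k-1}$, not on which particular $T$ with $\phi(T)=(s_v,A_v)$ was drawn. At $j=k-1$ this gives $\CR{k-1}(v)=s_v$, whence $\CR{k}(v)$ has children multiset $A_v$ and equals the unique $T$ with $\phi(T)=(s_v,A_v)$.

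The upshot: the obstacle you flag---reconciling the GWT's size-biasing with the back-references in $\MPT{k}$---is an artifact of working in the limit object. By reducing to a depth-$1$ statement in $G_t$ (already packaged in Corollary~\ref{app:cor:involution-invariant-RCM}) and bootstrapping there, the paper never has to open that box. Your route is viable but harder than necessary. A minor remark: your appeal to ``injectivity of $\CR{k}$ on $\T_k$'' is not quite the right tool---$\CR{k}$ is not the identity on $\T_k$---and what you actually need is simply that $\CR{k}(v)$ is determined by the root feature together with the multiset $\{\!\!\{\CR{k-1}(w):w\in N(v)\}\!\!\}$.
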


\begin{proof}
    Let 
    \[\mu(s, A) =\begin{cases}
   \nu(T) & \text{if there exists $T\in \MP_{k}$ such that $T|_{k-1} = s$ and $\{\!\!\{T(c)\,|\,c\in N(T)\}\!\!\} = A$}\\
   0 & \text{otherwise}
    \end{cases}.\]
    By~\Cref{app:cor:involution-invariant-RCM}, for every $\varepsilon > 0$, we have
    \[\mathbb P\left(t^{-1}\cdot\left|\left\{v_i\in V(G_t): \{\!\!\{s_{v_j}: v_j\in N(v_i)\}\!\!\} = A_i  \right\}\right|\geq 1-\varepsilon\right)\to 1\]
    as $t\to\infty$. Setting the type-to-feature mapping $s\to x_s$, we obtain
    \[\mathbb P\left(t^{-1}\cdot\left|\left\{v_i\in V(G_t): \{\!\!\{\CR{k-1}(v_j): v_j\in N(v_i)\}\!\!\} = A_i\right\}\right|\geq 1-\varepsilon\right)\to 1\]
    as $t\to\infty$. It follows that $c_{k, \infty} = \nu$.
\end{proof}

\begin{corollary}
    Let $G_t$ be a random graph. The following are equivalent:
    \begin{itemize}
        \item $G_t$ satisfies probabilistic consistency of empirical risk with respect to $k$-layer MPNNs
        \item There exists an involution invariant PMF $\mu$ with finite degree such that $\RCM_t(\mu)$ is equivalent to $G_t$ in probability for all $k$-layer MPNNs $f$. That is, for all $\varepsilon > 0$, as $t\to\infty$,
        \[P(|R_{\text{emp}}(f, \RCM_t(\mu)) - R_{\text{emp}}(f, G_t)|\geq\varepsilon)\to 0.\]
    \end{itemize}
\end{corollary}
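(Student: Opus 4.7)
The plan is to factor both directions of the equivalence through $\MPT{k}$-convergence, using Theorem \ref{app:thm:learn_color_conv_MPNN} to bridge probabilistic consistency and color convergence, and Corollary \ref{app:cor:rcm-ck} to construct the witnessing RCM.

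\textbf{Forward direction.} Suppose $G_t$ satisfies probabilistic consistency with respect to $k$-layer MPNNs. By Theorem \ref{app:thm:learn_color_conv_MPNN}, $G_t$ is $\MPT{k}$-convergent with limit $\nu := c_{k,\infty}$, which is sofic by definition. Corollary \ref{app:cor:rcm-ck} then supplies an involution invariant PMF $\mu$ with finite degree such that $\RCM_t(\mu)$ is $\MPT{k}$-convergent to the same $\nu$. Since any $k$-layer MPNN factors through $\CR{k}$, for any pair $f, f_*$ we have
\[R_{\mathrm{emp}}(f, H_t) = \sum_{T \in D} c_{k, t}^{H}(T),\]
where $D := \{T \in \MPT{k} : f(T) \neq f_*(T)\}$ is the disagreement set and $H_t$ ranges over $G_t$ and $\RCM_t(\mu)$. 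Both sums converge in probability to $\sum_{T \in D} \nu(T)$ by the argument of Lemma \ref{app:lem:c-conv-emp-risk}, and the triangle inequality yields convergence of their difference to zero in probability.

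\textbf{Converse direction.} Assume the second condition holds. By Theorem \ref{app:thm:rcm-local-limit}, $\RCM_t(\mu)$ converges locally to a GWT, hence is $\MPT{k}$-convergent by Proposition \ref{prop:local-color}, and therefore satisfies probabilistic consistency by Theorem \ref{app:thm:learn_color_conv_MPNN}. For any $k$-layer MPNNs $f, f_*$, set $c := \lim_{t \to \infty} \mathbb E[R_{\mathrm{emp}}(f, \RCM_t(\mu))]$. Then $R_{\mathrm{emp}}(f, \RCM_t(\mu)) \to c$ in probability, and combining with the hypothesis $|R_{\mathrm{emp}}(f, \RCM_t(\mu)) - R_{\mathrm{emp}}(f, G_t)| \to 0$ in probability yields $R_{\mathrm{emp}}(f, G_t) \to c$ in probability. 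Since $R_{\mathrm{emp}}(f, G_t) \in [0, 1]$, bounded convergence gives $\mathbb E[R_{\mathrm{emp}}(f, G_t)] \to c$, so $R(f)$ is well-defined and equals $c$; thus $G_t$ satisfies probabilistic consistency.

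\textbf{Main technical subtlety.} The chief concern is that the disagreement set $D$ may be infinite, so pointwise $\MPT{k}$-convergence of $c_{k,t}$ does not immediately imply convergence of $\sum_{T \in D} c_{k,t}(T)$. This is resolved exactly as in Lemma \ref{app:lem:c-conv-emp-risk}: since the common limit $\nu$ is a PMF on the countable set $\MPT{k}$, we approximate $D$ by a finite subset capturing all but arbitrarily small mass, and the tail contributions on both sides stay small with high probability because both $c_{k,t}^{G}$ and $c_{k,t}^{\RCM_t(\mu)}$ converge to the same $\nu$. Everything else is an application of existing results plus triangle inequalities.
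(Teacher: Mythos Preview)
Your forward direction is essentially identical to the paper's. Your converse, however, takes a different route. The paper specializes to $f \equiv 0$ and $f_* = \mathbbm{1}_{\{\CR{k}(\cdot)=T\}}$ for each fixed $T\in\MPT{k}$, so that $R_{\mathrm{emp}}(f,G_t)=c_{k,t}(T)$; the hypothesis then directly yields $c_{k,t}(T)\to c'_{k,\infty}(T)$ in probability, i.e.\ $G_t$ is $\MPT{k}$-convergent, and probabilistic consistency follows from Theorem~\ref{app:thm:learn_color_conv_MPNN}. You instead verify probabilistic consistency of $G_t$ abstractly, transferring it from $\RCM_t(\mu)$ via the triangle inequality and bounded convergence, without ever isolating individual colors. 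Both arguments are correct; the paper's version additionally establishes that $G_t$ inherits the same color limit $c'_{k,\infty}$ from $\RCM_t(\mu)$, whereas your version is slightly more self-contained in that it does not need to argue pointwise convergence of $c_{k,t}$ at all.
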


\begin{proof}
    Suppose $G_t$ satisfies probabilistic consistency of empirical risk with respect to $k$-layer MPNNs. By~\Cref{app:thm:learn_color_conv_MPNN} there exists a PMF $\nu$ over $\MPT{k}$ such that $c_{k,\infty} = \nu$. By~\ref{app:thm:sofic-unimodular} this PMF is involution invariant with finite degree. By~\Cref{app:cor:rcm-ck} there exists an involution invariant distribution $\mu$ with finite degree such that setting $G_t'=\RCM_t(\mu)$ yields $c_{k,\infty} = c'_{k,\infty}$. Let $f, f_*$ be any $k$-layer GNN and define
    \[D := \{T\in\MPT{k}: f(T)\neq f_*(T)\}\]
    the disagreement set. Since $c_{k,\infty} = c'_{k,\infty}$ we have
    \[\mathbb P(|R_{\text{emp}}(f, \RCM_t(\mu)) - R_{\text{emp}}(f, G_t)|\geq\varepsilon) = \mathbb P\left(\left|\sum_{T\in D}(c_{k, t}'(T) - c_{k, t}(T))\right|\geq\varepsilon\right)\to 0\]
    as $t\to\infty$.
    
    Suppose on the other hand there exists an involution invariant $\mu$ with finite degree and $G_t'= \RCM_t(\mu)$ such that for all $\varepsilon > 0$ and $k$-layer MPNNs $f$, \[\mathbb P(|R_{\text{emp}}(f, \RCM_t(\mu)) - R_{\text{emp}}(f, G_t)|\geq\varepsilon) = \mathbb P\left(\left|\sum_{T\in D}(c_{k, t}'(T) - c_{k, t}(T))\right|\geq\varepsilon\right)\to 0\]
    as $t\to\infty$. Then, in particular, this is true for $f$ constant $0$ and $f_*$ the indicator function of $T$. That is, for all $T\in\MPT{k}$ and $\varepsilon > 0$, we have
    \[\mathbb P\left(\left|c_{k, t}'(T) - c_{k, t}(T)\right|\geq\varepsilon\right)\to 0\]
    as $t\to\infty$. Since $G_t'=\RCM_t(\mu)$ is color convergent, $c_{k, t}$ converges in probability to $c_{k, \infty}'$.
\end{proof}

For the final result we need to show that every GWT can be represented as in~\Cref{app:thm:rcm-local-limit}.

\begin{definition}\label{app:def:su-GWT}
        A \emph{simplified unimodular} GWT $W_t$ is a GWT parametrized by $S, \mu_0, \{\mu_s\}_{s\in S}$ such that
    \begin{itemize}
        \item there is a base type set $S_0$ such that $S = (\{\bot\}\cup S_0)\times S_0$, that is, each node's type records its own base type and its parent's base type ($\bot$ in the case of the root).
        \item The root distribution \(\mu_0\) is supported on nodes of type $(\bot, s)$, i.e., \(\mu_0(s_0, s_1) = 0\) if \(s_0 \ne \bot\).
        \item For all \(s_0, s_1 \in S_0\), the offspring distribution \(\mu_{s_0, s_1}\) is supported only on multisets of children of the form \(\{\{(s_1, q_i)\}_{i=1}^n\}\), meaning every child has parent type \(s_1\):
        \[\mu_{s_0, s_1}(A) = 0 \quad \text{if } \exists\, (p, q) \in A \text{ with } p \ne s_1.\]
    \end{itemize}
    Additionally, the following two conditions hold:
    \begin{itemize}
        \item (Root-to-child consistency) For all $s_0\neq\bot$, the pmf $\mu_{s_0, s_1}$ coincides with the conditional offspring distribution of a root with type $(\bot, s_1)$ and conditioned on having a child of type $(s_1, s_0)$, that is\[\mu_{s_0, s_1}(A) = \frac{|\{\!\!\{s\in A: s = (s_1, s_0)\}\!\!\}|\cdot\mu_{\bot, s_1}(A)}{\sum\limits_{B\in\text{MultiSet}(S)}|\{\!\!\{s\in B: s = (s_1, s_0)\}\!\!\}|\cdot\mu_{\bot, s_1}(B)}.\]
        \item (Edge-type symmetry) Let the expected degree of the tree be
        \[
        d_W := \sum_{s \in S_0} \sum_{A} |A| \cdot \mu_0(\bot, s) \cdot \mu_{\bot, s}(A),
        \]
        and define the edge-type marginal
        \[
        \bar\mu(s_0, s_1) := \frac{1}{d_W} \sum_{A} |\{a \in A : a = s_1\}| \cdot \mu_0(\bot, s_0) \cdot \mu_{\bot, s_0}(\{\!\!\{s_0, a\}\!\!\}_{a \in A}).
        \]
        Then we require that \(\bar\mu\) is symmetric:
        \[
        \bar\mu(s_0, s_1) = \bar\mu(s_1, s_0).
        \]
    \end{itemize}
\end{definition}

Note that $S_0$, $\mu_0$, $\{\mu_{(\bot, s)}\}_{s\in S_0}$ completely determine a simplified unimodular GWT. The GWT in~\Cref{app:thm:rcm-local-limit} is an simplified unimodular GWT as edge-type symmetry is a direct consequence of involution invariance with finite degree.

\begin{lemma}\label{app:lem:RCM-su-GWT}
    Suppose $\mu$ is involution invariant with finite degree. Then $\RCM_t(\mu)$ converges locally to a simplified unimodular Galton-Watson tree.
\end{lemma}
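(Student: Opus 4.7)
The plan is to invoke \Cref{app:thm:rcm-local-limit} to identify the local limit of $\RCM_t(\mu)$ as the explicit multi-type Galton--Watson tree $W_t$ with type set $S = (\{\bot\} \cup S_0) \times S_0$, feature map $(s_0, s_1) \mapsto x_{s_1}$, and the root/offspring distributions given by equations~(\ref{app:eq:root})--(\ref{app:eq:non-root}). It then suffices to verify, one by one, that $W_t$ satisfies every clause of \Cref{app:def:su-GWT}.

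The structural clauses are essentially immediate from the formulas produced by \Cref{app:thm:rcm-local-limit}. The type set is built in the required product form. Equation~(\ref{app:eq:root}) directly says $\mu_0(s_0, s_1) = 0$ unless $s_0 = \bot$, so $\mu_0$ is supported on $(\bot, s)$-types. Equation~(\ref{app:eq:conditioned}) enforces that every pair $(p, q)$ in a multiset in the support of $\mu_{\bot, s}$ satisfies $p = s$, and~(\ref{app:eq:non-root}) inherits the same parent-type constraint for $\mu_{s_0, s_1}$, giving the offspring-support condition. Root-to-child consistency is written out verbatim in~(\ref{app:eq:non-root}).

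The main calculation is verifying edge-type symmetry. I would first reparameterise the sum defining $d_W$ via the bijection $A \leftrightarrow A_0 := \{\!\!\{q : (p, q) \in A\}\!\!\}$, which is well defined on the support of $\mu_{\bot, s}$ because of the parent-type constraint in~(\ref{app:eq:conditioned}). The factor $Z_s = \mu_0(\bot, s)$ then cancels against the $1/Z_s$ in~(\ref{app:eq:conditioned}), giving $d_W = d_\mu$. Exactly the same substitution in the edge-type marginal of $W_t$ yields
\[
\bar\mu_W(s_0, s_1) = \frac{1}{d_W} \sum_{A_0 \in \text{MultiSet}(S_0)} \left|\{\!\!\{q \in A_0 : q = s_1\}\!\!\}\right| \cdot \mu(s_0, A_0) = \bar\mu(s_0, s_1),
\]
so $\bar\mu_W$ coincides with the edge-type marginal of $\mu$ itself. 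Since $\mu$ is involution invariant with finite degree, $\bar\mu$ is symmetric, and hence so is $\bar\mu_W$.

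The main obstacle is essentially bookkeeping rather than anything deep: one must carefully unwind the parent-type annotation in the multisets over $S = (\{\bot\} \cup S_0) \times S_0$ in order to identify $d_W$ with $d_\mu$ and $\bar\mu_W$ with $\bar\mu$. Once this identification is made, all structural clauses of \Cref{app:def:su-GWT} read off the formulas of \Cref{app:thm:rcm-local-limit} and edge-type symmetry transfers directly from the hypothesis on $\mu$.
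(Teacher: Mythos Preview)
Your proposal is correct and follows the same approach the paper takes: the paper does not give a separate proof of this lemma but simply remarks, just before stating it, that ``the GWT in~\Cref{app:thm:rcm-local-limit} is a simplified unimodular GWT as edge-type symmetry is a direct consequence of involution invariance with finite degree.'' You have spelled out that one-line justification in detail, checking each clause of \Cref{app:def:su-GWT} and making explicit the cancellation of $Z_s$ and the bijection $A \leftrightarrow A_0$ that identifies $d_W$ with $d_\mu$ and $\bar\mu_W$ with $\bar\mu$.
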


In fact, \Cref{app:def:su-GWT} exactly captures the local limits of refined configuration models.

\begin{lemma}\label{app:lem:su-GWT-RCM}
    Suppose $W_t$ is a simplified unimodular GWT determined by $S_0$, $\mu_0$, $\{\mu_{(\bot, s)}\}_{s\in S_0}$. Then $W_t$ is the local limit of $\RCM_t(\mu)$ where $\mu$ is involution invariant with finite degree.
\end{lemma}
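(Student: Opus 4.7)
The plan is to reverse-engineer the parameters of an $\RCM$ from the Galton–Watson data and then verify that Theorem~\ref{app:thm:rcm-local-limit} recovers $W_t$ exactly. Given a simplified unimodular GWT determined by $S_0$, $\mu_0$, and $\{\mu_{\bot, s}\}_{s\in S_0}$, define a PMF $\mu$ on $S_0 \times \text{MultiSet}(S_0)$ by
\[
\mu(s, A) \;:=\; \mu_0(\bot, s)\cdot \mu_{\bot, s}\bigl(\{\!\!\{(s, q) : q \in A\}\!\!\}\bigr).
\]
Because $\mu_{\bot, s}$ is supported on multisets whose elements all have first coordinate $s$, the assignment $A \mapsto \{\!\!\{(s, q): q\in A\}\!\!\}$ is a bijection between $\text{MultiSet}(S_0)$ and the support of $\mu_{\bot, s}$, so $\mu$ is a well-defined PMF on $S_0\times \text{MultiSet}(S_0)$.

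Next I would verify that $\mu$ is involution invariant with finite degree. Finiteness of $d_\mu$ follows directly from the identity $d_\mu = d_W < \infty$, obtained by swapping the order of summation in the definition of $d_\mu$. For the symmetry of $\bar\mu$, a short computation shows
\[
\bar\mu(s_0, s_1) \;=\; \frac{1}{d_\mu}\sum_{A}\bigl|\{a\in A: a = s_1\}\bigr|\cdot \mu_0(\bot, s_0)\cdot \mu_{\bot, s_0}\bigl(\{\!\!\{(s_0, q): q\in A\}\!\!\}\bigr),
\]
which, after re-indexing the sum over multisets of $S$ via the same bijection, is exactly the edge-type marginal $\bar\mu_W(s_0, s_1)$ from Definition~\ref{app:def:su-GWT}. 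Since $W_t$ is simplified unimodular, $\bar\mu_W$ is symmetric, hence so is $\bar\mu$.

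I then apply Theorem~\ref{app:thm:rcm-local-limit} to conclude that $\RCM_t(\mu)$ converges locally to the GWT $W'_t$ with parameters $\mu_0', \mu_{\bot, s}', \mu_{s_0, s_1}'$ given by equations~(\ref{app:eq:root})--(\ref{app:eq:non-root}). It remains to check $W'_t = W_t$. For the root distribution, $\mu_0'(\bot, s) = Z_s = \sum_A \mu(s, A) = \mu_0(\bot, s)$, using the same bijection. For a root of type $(\bot, s)$, on admissible $A$ we obtain $\mu'_{\bot, s}(A) = \frac{1}{Z_s}\mu(s, \{\!\!\{q: (p,q)\in A\}\!\!\}) = \mu_{\bot, s}(A)$. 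Finally, the formula (\ref{app:eq:non-root}) for $\mu'_{s_0, s_1}$ coincides, term by term, with the root-to-child consistency formula in Definition~\ref{app:def:su-GWT}, so $\mu'_{s_0, s_1} = \mu_{s_0, s_1}$.

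The one step that requires genuine care is the verification of edge-type symmetry of $\mu$: it is here that the simplified-unimodularity of $W_t$ is essential, since without edge-type symmetry of $\bar\mu_W$ the resulting $\mu$ would not be involution invariant and Theorem~\ref{app:thm:rcm-local-limit} would not apply. The rest of the argument is essentially bookkeeping: one tracks the bijection between multisets over $S_0$ and the $\bot$-conditioned multisets over $S$, and observes that all three GWT parameters are preserved under the construction.
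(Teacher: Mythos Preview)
Your proposal is correct and follows exactly the same approach as the paper: define $\mu(s,A)=\mu_0(\bot,s)\cdot\mu_{\bot,s}(\{\!\!\{(s,a):a\in A\}\!\!\})$, derive involution invariance with finite degree from the edge-type symmetry of $W_t$, and apply Theorem~\ref{app:thm:rcm-local-limit} to recover $W_t$. In fact you supply more detail than the paper does---the explicit verification that $\mu_0'=\mu_0$, $\mu_{\bot,s}'=\mu_{\bot,s}$, and $\mu_{s_0,s_1}'=\mu_{s_0,s_1}$ is only implicit in the paper's one-line ``we recover $W_t$''.
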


\begin{proof}
    Define the distribution $\mu$ over $S_0\times\text{MultiSet}(S)$ by
    \[\mu(s, A) = \mu_0(\bot, s)\cdot \mu_{\bot, s}(\{\!\!\{(s, a): a\in A\}\!\!\}).\]
    Involution invariance with finite degree of $\mu$ follows directly from edge-type symmetry of $W_t$. Applying~\Cref{app:thm:rcm-local-limit} we recover $W_t$ as the local limit of $\RCM_t(\mu)$.
\end{proof}

All that's left is to show that every sofic GWT is equivalent to a simplified, unimodular GWT.

\begin{lemma}\label{app:lem:simplified-GWT}
    Suppose $W_t'$ is a Galton Watson tree that is the local limit of a random graph. There exists a simplified unimodular Galton Watson tree $W_t$ that defines the same random process.
\end{lemma}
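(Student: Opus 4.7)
The plan is to construct $W_t$ as a parent-type refinement of $W_t'$: keep the base-type distribution on rooted trees unchanged, but record at each non-root the base type of its parent so that the constraints of \Cref{app:def:su-GWT} can be met. Let $W_t'$ be parametrized by base types $S_0'$, root distribution $\mu_0'$, and offspring distributions $\{\mu_s'\}_{s \in S_0'}$, and fix a random graph $G_t$ whose local limit is $W_t'$. The essential input throughout will be that $W_t'$, being a local limit, is involution invariant; this unimodularity is what drives the argument.

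First I would take $S_0 := S_0'$, $S := (\{\bot\} \cup S_0) \times S_0$, and the feature map $(p,s) \mapsto x_s$. I set $\mu_0(\bot, s) := \mu_0'(s)$ (and zero otherwise), and define $\mu_{\bot, s}$ by decorating each offspring multiset in the support of $\mu_s'$ with parent tag $s$. For $s_0 \in S_0$, I define $\mu_{s_0, s_1}$ by taking the root-to-child consistency formula of \Cref{app:def:su-GWT} as the definition itself. The support conditions on $\mu_0$ and the $\mu_{s_0, s_1}$, along with the root-to-child consistency, then hold tautologically.

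Next I would verify edge-type symmetry of the induced $\bar\mu$. By direct computation, $\bar\mu(s_0, s_1)$ equals (up to the normalizing $d_W$) the expected number of type-$s_1$ children of a root of base type $s_0$ drawn from $\mu_0'$; symmetry of this quantity is exactly the conclusion of the mass-transport principle applied to the indicator of an edge carrying a prescribed ordered pair of endpoint base types. Because $W_t'$ arises as a local limit, this symmetry is immediate from involution invariance.

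The main obstacle is showing that $W_t$ and $W_t'$ induce the same distribution on feature-labeled rooted trees. The root marginals agree by construction, so the work is inductive: at a non-root of refined type $(s_0, s_1)$ one must show that $\mu_{s_0, s_1}$ marginalizes, after forgetting parent tags, to $\mu_{s_1}'$. The hard part is precisely this identification, because the Palm-style formula a priori tilts $\mu_{\bot, s_1}$ by the count of $(s_1, s_0)$-children, which is not the identity. I plan to close this gap by applying involution invariance of $W_t'$ to subtree-valued functionals — comparing the distribution of the subtree seen from a uniformly chosen child-edge to the one seen from the root — so that the size-bias cancels against the conditioning on the correct parent type and recovers $\mu_{s_1}'$. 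If the naive choice $S_0 = S_0'$ is insufficient for this cancellation, the fallback is to refine $S_0$ by equivalence classes of subtree distributions (effectively augmenting each base type with enough structural information to make the Palm marginals agree by construction), and then to verify that the enlarged GWT still coincides in distribution with $W_t'$.
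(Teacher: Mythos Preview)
Your plan shares its skeleton with the paper's argument (decorate with parent types, lean on involution invariance), but it diverges at the one step the paper treats as essential and which your construction omits: the paper does \emph{not} take $S_0 = S_0'$. It first restricts to the root-supported types
\[
S_0 := \{\, s \in S' : \mu_0'(s) > 0 \,\},
\]
and invokes unimodularity to argue that every type seen at a neighbor of the root already appears (at the level of features) in $S_0$. Only after this restriction does it assert the existence of an offspring law $\nu^s$ on $\mathrm{MultiSet}(S_0)$, obtained via a re-rooting/conditioning sampling process rather than by simply decorating $\mu_s'$. In particular, $\nu^s$ need not equal $\mu_s'$.

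Keeping all of $S_0'$ as you do breaks both of your verification steps. First, mass transport for $W_t'$ is a statement about the \emph{feature}-labeled tree, not about the internal GWT types; when the type-to-feature map is not injective your $\bar\mu$ need not be symmetric. The standard two-type parametrization of the configuration-model limit ($\mu_0'$ supported on a single ``root'' type $r$, all descendants of a ``non-root'' type $n$) already exhibits $\bar\mu(r,n)>0=\bar\mu(n,r)$ and a $0/0$ in the Palm formula for $\mu_{r,n}$. Second, even when types equal features, the identification you want---that the Palm-defined $\mu_{s_0,s_1}$ projects back to $\mu_{s_1}'$---is simply not what involution invariance gives: size-biasing $\mu_{s_1}'$ by the count of $s_0$-children does not return $\mu_{s_1}'$ in general, so your $W_t$ has different depth-$\ge 2$ offspring than $W_t'$. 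Your fallback of \emph{refining} types moves in the opposite direction from the paper's remedy; what is needed is to pass to root-supported types and rebuild the offspring law via the paper's re-rooting description, not to recover $\mu_{s_1}'$ from the Palm formula.
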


\begin{proof}
    Suppose $W'_t$ is parametrized by $S'$, $\mu'_0$. $\{\mu'_{s}\}_{s\in S'}$ and type-to-feature mapping $s\to x_s'$, and is the local limit of $G_t$.
    Define
    \[S_0 := \{s\in S': \mu'_0(s) >  0\},\]

    the set of vertices which appear with positive probability at the root.
    
    Due to unimodularity~\cite{aldous2007processes}, the nodes that appear as the neighbors of the root with non-zero probability, must themselves appear as the root with non-zero probability as well. Therefore this set of types is sufficient. Let us make this more formal:
    
    Let $\mu^s_k$ denote the PMF of the conditional distribution of $W'_k$ given that the root has state $s$. Fix an total order $\preceq$ on $S_0$. For every $s\in S_0$, there is a distribution $\nu^s$ over MultiSet$(S_0)$ such that, for every $k\in\mathbb N$, sampling $T\sim \mu^s_k$ is equivalent to the following process:
    \begin{itemize}
        \item Consider a singleton graph comprising the root $r$ with feature $x_r = x_s$.
        \item Sample a multiset $\{\!\!\{s_0\preceq\dots \preceq s_n\}\!\!\}\sim\nu^s$.
        \item Sample $B_0,\dots,B_n\sim \mu^{s_0}_{k-1}\times\dots\times\mu^{s_n}_{k-1}$, conditioned on the existence of a neighbor $v_i$ of $B_i$ such that $x_{v_i} = x_r$ and all the balls agreeing on the neighborhood of $v_i$, that is, $B_{k-2}(v_i) = B_{k-2}(v_j)$ for $i,j\in[n+1]$.
        \item Connect the graphs $B_i$ to the root $r$ via the neighbor whose existence we conditioned on, and return the resulting tree.
    \end{itemize}
    Setting $\mu_0(\bot, s) = \mu_0(s)$ and \[\mu_{\bot, s}(A) = 
    \begin{cases}
        \nu^{s}(\{\!\!\{q : (p, q)\in A\}\!\!\})&\forall (p, q)\in A: p = s\\
        0&\text{otherwise}
    \end{cases}
    \]
     for $s\in S_0$ we obtain the desired simplified unimodular GWT.
\end{proof}

\Cref{app:thm:RCM-GWT} now follows directly from \Cref{app:lem:RCM-su-GWT}, \Cref{app:lem:su-GWT-RCM} and \Cref{app:lem:simplified-GWT}.

\begin{theorem}\label{app:thm:RCM-GWT}
    The following are equivalent:
    \begin{itemize}
        \item $W_t$ is the local limit of $\RCM_t(\mu)$ for some involution invariant PMF $\mu$ with finite degree.
        \item $W_t$ is a GWT that arises as the local limit of some random graph.
    \end{itemize}
\end{theorem}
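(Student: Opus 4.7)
The plan is to prove the equivalence by splitting into the two directions and routing both through an intermediate canonical form: a \emph{simplified unimodular} GWT in which each node's type records both its own base type and its parent's base type, and whose offspring distributions satisfy a root-to-child consistency and an edge-type symmetry condition. The forward direction is essentially immediate: by \Cref{thm:rcm-local-limit}, $\RCM_t(\mu)$ with involution invariant $\mu$ of finite degree converges locally to a specific GWT defined by equations (\ref{eq:root})--(\ref{eq:non-root}), so $W_t$ trivially arises as the local limit of a random graph.

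For the reverse direction, suppose $W_t$ is a GWT that is the local limit of some random graph $G_t$, parametrized by a type set $S'$, initial PMF $\mu_0'$, and offspring distributions $\{\mu_s'\}_{s\in S'}$. First I would restrict attention to the subset $S_0\subseteq S'$ of types appearing with positive probability at the root. The critical observation is that, by the unimodularity/involution invariance of any random-graph local limit (see the reference to \citet{aldous2007processes} in \Cref{subsec:random-graphs}), every type appearing as a neighbor of a root with positive probability must itself appear as a root with positive probability, so $S_0$ is closed under the branching process. Next, for each $s\in S_0$, I would identify the marginal offspring distribution $\nu^s$ on $\mathrm{MultiSet}(S_0)$ obtained by sampling a root of type $s$ and recording the multiset of neighbor types; setting $\mu_0(\bot, s):=\mu_0'(s)$ and $\mu_{\bot, s}(A)$ to be $\nu^s$ restricted to multisets whose parent-type coordinates all equal $s$ yields a simplified unimodular GWT generating the same random process as $W_t$.

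Having obtained this canonical form, I would convert it to an RCM by defining
\[
\mu(s, A) := \mu_0(\bot, s)\cdot \mu_{\bot, s}(\{\!\!\{(s,a) : a\in A\}\!\!\}),
\]
a PMF on $S_0\times \mathrm{MultiSet}(S_0)$. Finite mean degree is inherited from $W_t$ being a local limit (and hence from the sparsity established in the corollary to \Cref{thm:sofic-unimodular}), while symmetry of $\bar\mu$ is precisely the edge-type symmetry of the simplified unimodular GWT. Then \Cref{thm:rcm-local-limit} applied to this $\mu$ reproduces $W_t$ as the local limit of $\RCM_t(\mu)$.

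The main obstacle is verifying that the GWT can actually be \emph{simplified} in this way, i.e., that the joint distribution over neighbor subtrees of a node of type $s$ in $W_t$ factorizes into an independent product of root-conditional distributions $\mu^{s_i}$ after recording the parent's type. A neighbor reached from a parent of type $s$ carries extra information: its subtree is sampled conditional on having a back-neighbor matching its parent, so one must show that this conditioning is exactly captured by the root-to-child consistency formula (\ref{eq:non-root}). This is where unimodularity does the real work — it guarantees the equality between ``look down from the root'' and ``look down from a child conditioned on its parent,'' which is precisely what allows a memoryless multi-type branching representation.
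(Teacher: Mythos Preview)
Your proposal is correct and follows essentially the same route as the paper: both directions are routed through the intermediate notion of a simplified unimodular GWT, with the forward direction coming directly from \Cref{thm:rcm-local-limit} and the reverse direction first simplifying the given sofic GWT via unimodularity (restricting to types $S_0$ of positive root-probability and extracting $\nu^s$) and then defining $\mu(s,A)=\mu_0(\bot,s)\cdot\mu_{\bot,s}(\{\!\!\{(s,a):a\in A\}\!\!\})$ to recover the RCM. Your identification of the ``main obstacle''---that unimodularity is precisely what licenses the memoryless parent-conditioned factorization in (\ref{eq:non-root})---is exactly the content of the paper's \Cref{app:lem:simplified-GWT}.
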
 
\end{document}